\documentclass[aps,pra,notitlepage, superscriptaddress, twocolumn]{revtex4-2}

\usepackage[T1]{fontenc}
\usepackage{amsmath,amsfonts,amssymb,amsthm,bbm,bm, braket}
\usepackage{wasysym}
\usepackage{comment}
\usepackage{scalerel}
\usepackage{enumerate}
\usepackage{graphicx}
\usepackage{mathtools}
\usepackage{ifthen}
\usepackage{tensor}
\usepackage{tikz}
\usepackage{tikz-network}
\usetikzlibrary{patterns,decorations.pathreplacing}

\usepackage{longtable}
\usepackage{cancel}

\theoremstyle{break}

\usepackage{color}
\definecolor{myred}{RGB}{232,102,102}
\definecolor{myblue}{RGB}{187,187,255}
\definecolor{myorange}{RGB}{255,165,0}
\definecolor{mygrey}{RGB}{105,105,105}
\definecolor{OliveGreen}{RGB}{85,107,47}
\definecolor{NavyBlue}{RGB}{0,0,128}
\definecolor{mygreen}{RGB}{34,139,34}
\definecolor{myY}{RGB}{220,255,203}
\definecolor{myYO}{RGB}{255, 220, 151}
\usepackage{tensor}
\newcommand{\be}{\begin{equation}}
\newcommand{\ee}{\end{equation}}
\newcommand{\ba}{\begin{aligned}}
\newcommand{\ea}{\end{aligned}}
\newcommand{\bw}{\begin{widetext}}
\newcommand{\ew}{\end{widetext}}
\newcommand{\1}{\mathbbm{1}}

\theoremstyle{plain}
\newtheorem{property}{Property}
\theoremstyle{plain}
\newtheorem{lemma}{Lemma}
\theoremstyle{plain}

\usepackage[colorlinks,bookmarks=false,citecolor=NavyBlue,linkcolor=OliveGreen,urlcolor=blue]{hyperref}

\newcommand{\Wgategreen}[2]{
\draw[very thick] (#1-0.5, #2 +0.5) -- (#1+0.5,#2-0.5);
\draw[very thick] (#1-0.5,#2-0.5) -- (#1+0.5,#2+0.5);
\draw[ thick, fill=mygreen, rounded corners=2pt] (#1-0.25,#2+0.25) rectangle (#1+0.25,#2-0.25);
\draw[thick] (#1,#2+0.15) -- (#1+0.15,#2+0.15) -- (#1+0.15,#2);
}

\newcommand{\Wreduced}[2]{
\draw[thin] (#1-0.5, #2 +0.5) -- (#1+0.5,#2-0.5);
\draw[thin] (#1-0.5,#2-0.5) -- (#1+0.5,#2+0.5);
\draw[ thick, fill=myY, rounded corners=2pt] (#1-0.25,#2+0.25) rectangle (#1+0.25,#2-0.25);
\draw[thick] (#1,#2+0.15) -- (#1+0.15,#2+0.15) -- (#1+0.15,#2);
}

\definecolor{myblue1}{RGB}{176,223,229}
\definecolor{myblue2}{RGB}{0,0,128}
\definecolor{myblue3}{RGB}{0,108,255}
\definecolor{myblue4}{RGB}{101,147,245}
\definecolor{myblue5}{RGB}{115,194,251}
\definecolor{myblue6}{RGB}{87,160,211}
\definecolor{myblue7}{RGB}{137,207,240}
\definecolor{myblue8}{RGB}{29,41,81}
\definecolor{myblue9}{RGB}{14,77,146}
\definecolor{myblue10}{RGB}{15,82,186}
\definecolor{myred1}{RGB}{255,36,0}
\definecolor{myred2}{RGB}{205,92,92}
\definecolor{myred3}{RGB}{178,34,34}
\definecolor{myred4}{RGB}{164,90,82}
\definecolor{myred5}{RGB}{255,8,0}
\definecolor{myred6}{RGB}{202,52,51}
\definecolor{myred7}{RGB}{66,13,9}
\definecolor{myred8}{RGB}{141,2,31}
\definecolor{myred9}{RGB}{250,128,114}
\definecolor{myred10}{RGB}{237,41,57}

\definecolor{myyellow1}{RGB}{254,220,86}
\definecolor{myyellow2}{RGB}{255,229,180}
\definecolor{myyellow5}{RGB}{255,195,11}
\definecolor{myyellow6}{RGB}{218,165,32}

\newcommand{\mcirc}{\mathbin{\scalerel*{\fullmoon}{G}}}
\newcommand{\mcircf}{\mathbin{\scalerel*{\newmoon}{G}}}

\newcommand{\lineW}{.8mm}
\newcommand{\sroot}{1.41421356237}

\newcommand{\oGate}[2]{
\draw[thin, fill=myY, rounded corners=0pt] (#1-0.5,#2+0.5) rectangle (#1+0.5,#2-0.5);}

\newcommand{\aGate}[2]{
\draw[thin, fill=myY, rounded corners=0pt] (#1-0.5,#2+0.5) rectangle (#1+0.5,#2-0.5);
\draw[line width=\lineW] (#1-0.5,#2) -- (#1+0.5,#2);}

\newcommand{\cGate}[2]{
\draw[thin, fill=myY, rounded corners=0pt] (#1-0.5,#2+0.5) rectangle (#1+0.5,#2-0.5);
\draw[line width=\lineW] (#1,#2-0.5) -- (#1,#2+0.5);}

\newcommand{\gGate}[2]{
\draw[thin, fill=myY, rounded corners=0pt] (#1-0.5,#2+0.5) rectangle (#1+0.5,#2-0.5);
\draw[line width=\lineW] (#1,#2-0.5) -- (#1,#2+0.5);
\draw[line width=\lineW] (#1-0.5,#2) -- (#1+0.5,#2);}

\newcommand{\doGate}[2]{
\draw[thin, fill=myY, rounded corners=0pt] (#1-0.5,#2+0.5) rectangle (#1+0.5,#2-0.5);
\draw[line width=\lineW] (#1,#2-0.5) -- (#1,#2) -- (#1 +0.5,#2);}

\newcommand{\dtGate}[2]{
\draw[thin, fill=myY, rounded corners=0pt] (#1-0.5,#2+0.5) rectangle (#1+0.5,#2-0.5);
\draw[line width=\lineW] (#1-0.5,#2) -- (#1,#2) -- (#1 ,#2+0.5);}

\newcommand{\eeGate}[2]{
\draw[thin, fill=myY, rounded corners=0pt] (#1-0.5,#2+0.5) rectangle (#1+0.5,#2-0.5);
\draw[line width=\lineW] (#1,#2-0.5) to[out=90,in= 180] (#1+0.5,#2);
\draw[line width=\lineW] (#1-0.5,#2) to[out=0,in=-90] (#1,#2+0.5);}

\newcommand{\acGate}[2]{
\draw[thin, fill=myY, rounded corners=0pt] (#1-0.5,#2+0.5) rectangle (#1+0.5,#2-0.5);
\draw[line width=\lineW] (#1,#2-0.5) -- (#1,#2-0.25) to[out=0,in= 0] (#1,#2+0.25)-- (#1,#2+0.5);
\draw[line width=\lineW] (#1-0.5,#2) -- (#1+0.5,#2);}

\hypersetup{
pdftitle={Chaos and Ergodicity in Extended Quantum Systems with Noisy Driving},
pdfsubject={Statistical mechanics},
pdfauthor={Pavel Kos, Bruno Bertini, and Tomaz Prosen},
pdfkeywords={chaos, ergodicity, quantum many-body systems}
}
\begin{document}

\title{Chaos and Ergodicity in Extended Quantum Systems with Noisy Driving}
\date{\today}

\author{Pavel Kos}
\affiliation{Department of Physics, Faculty of Mathematics and Physics, University of Ljubljana, Jadranska 19, SI-1000 Ljubljana, Slovenia}

\author{Bruno Bertini}
\affiliation{Department of Physics, Faculty of Mathematics and Physics, University of Ljubljana, Jadranska 19, SI-1000 Ljubljana, Slovenia}
\affiliation{Rudolf Peierls Centre for Theoretical Physics, Clarendon Laboratory, Oxford University, Parks Road, Oxford OX1 3PU, United Kingdom}

\author{Toma\v z Prosen}
\affiliation{Department of Physics, Faculty of Mathematics and Physics, University of Ljubljana, Jadranska 19, SI-1000 Ljubljana, Slovenia}

\begin{abstract}
We study the time evolution operator in a family of local quantum circuits with random fields in a fixed direction. We argue that the presence of quantum chaos implies that at large times the time evolution operator becomes effectively a \emph{random matrix} in the many-body Hilbert space. To quantify this phenomenon we compute analytically the squared magnitude of the trace of the evolution operator --- the generalised spectral form factor --- and compare it with the prediction of Random Matrix Theory (RMT). We show that for the systems under consideration the generalised spectral form factor can be expressed in terms of dynamical correlation functions of local observables in the infinite temperature state, linking chaotic and ergodic properties of the systems. This also provides a connection between the many-body Thouless time $\tau_{\rm th}$ --- the time at which the generalised spectral form factor starts following the random matrix theory prediction --- and the conservation laws of the system. Moreover, we explain different scalings of $\tau_{\rm th}$ with the system size, observed for systems with and without the conservation laws.
\end{abstract}

\maketitle

The concept of \emph{chaos} is very natural in classical systems. Its naive formulation in terms of strong sensitivity of the trajectory to the initial conditions, the ``butterfly effect'', is so simple and powerful that has long become an element of the popular culture. During the second half of the twentieth century this concept has been refined, from both the physical and mathematical points of view, leading to a complete theory of chaos in classical dynamical systems~\cite{arnold, arnold2, ott, cornfeld} that can be regarded as one of the greatest achievements of mathematical physics. 

In the quantum realm the situation is much less intuitive due to the absence of well defined trajectories and the linear structure of the unitary evolution. In this context, a key role is played by the spectral correlations of the time-evolution operator. Indeed, as established in a series of seminal works~\cite{CGV80, Berry81, BGS84}, systems with a well defined chaotic classical limit have a spectrum with correlations that coincide with those of an ensemble of random matrices with the same symmetries. The latter property remains well defined also away from the classical limit and has then been taken as a definition of quantum chaos. However, the problem of connecting the spectral statistics with more intuitive dynamical properties of the system remained open. 
 
Over the last decade the problem of characterising chaos in quantum systems received a renewed interest due to seminal results coming for the study of black holes~\cite{HaPr07, SeSu08} and connecting quantum many-body chaos with the \emph{scrambling} of quantum information. In turn, this renaissance also produced new discoveries concerning chaos in extended quantum many-body systems on the lattice~\cite{KLP, Chalker, Chalker2, Chalker3, BKP:kickedIsing, RP20, Chalker4, Chan, JHN:OperatorEntanglement, BP:tripartite, BKP:OEsolitons,BKP:OEergodicandmixing, AlDM19, ClLa20, Keyserlingk, Nahum:operatorspreadingRU, Keselman} and lead to the introduction of useful minimal models like local random unitary circuits~\cite{RandomCircuitsEnt, Chalker} and dual-unitary circuits~\cite{BKP:dualunitary}. For some of these systems it has been possible to compute measures of the spectral statistics~\cite{KLP, RP20, BKP:kickedIsing, Chalker2, Chalker3}, proving that they indeed follow the predictions of Random Matrix Theory. Importantly, however, it has been realised that in generic extended systems with local interactions this happens only for energy levels smaller than a certain scale $E_{\rm th}$ --- known as Thouless Energy --- which bares information on the spatial structure. This energy scale (or the associated Thouless time $\tau_{\rm th}=\hbar/E_{\rm th}$) is believed to display different scalings with the system size depending on the conservation laws of the system. 

In the recent comeback of quantum chaos an important role has been played by driven systems, as they furnish a simpler modelisation of many interesting dynamical phenomena~\cite{RandomCircuitsEnt, JHN:OperatorEntanglement, BP:tripartite, Keyserlingk, Nahum:operatorspreadingRU}. For these systems, in the generic instance of aperiodic driving, the spectral statistics is not well defined (their time-evolution operator is time-dependent) and the chaotic regime has been identified by looking at some features of the quantum many-body dynamics --- seeking a quantum many-body analogue of the butterfly effect. Some of the most studied features have been the spreading of support of local operators (measured, e.g., by out-of-time-ordered correlators~\cite{ShSt14_bh, RoSS15, MaSS16}), the growth of complexity in the classical simulations of the dynamics~\cite{PZ07}, and the scrambling of quantum information~\cite{HQRY:tripartiteinfo}. However, even though all these features are connected to an idea of  `dynamical complexity', they provide different information.  It is unclear what is the minimal set of these features, if any, that a system has to display to be considered chaotic.

In this Letter we follow a different route and regard as ``chaotic'' those driven systems where the time-evolution operator acquires random matrix spectral correlations after a certain initial transient~\cite{SSS:genSFF, genSFF}. This is a direct generalisation of the traditional definition of quantum chaos and the transient is naturally interpreted as the Thouless time~\cite{genSFF}. We present a family of local quantum circuits with random fields in a fixed direction. In these systems --- that have no semi-classical limit --- the time-dependent spectral correlations can be characterised exactly.  In particular, we compute the squared magnitude of the trace of the evolution operator --- which we dub Generalised Spectral Form Factor (GSFF) --- and show that at the leading order in time it is fully specified by the two-point dynamical correlation functions of local operators in the infinite temperature state. This provides an unprecedented direct link between spectral properties and local physics. We use this result to show that the regime of quantum chaos coincides with the ergodic and mixing one (where all dynamical correlations decay in time), and to elucidate the connection between conservation laws and scaling of the Thouless time with the system size.

More specifically, we consider a chain of length $L$ with $2L$ qubits placed at integer and half-integer indexed sites. Thus, the Hilbert space of the system is 
$\mathcal{H}= (\mathbb{C}^2)^{\otimes 2L}$. The time evolution is governed by a brickwork-like local quantum circuit, consisting of unitary matrices (gates) acting on two neighbouring spins (with periodic boundary conditions). We consider the case where the gates are different at each space-time point and represent the time evolution as 
\begin{align}
&\mathbb U(t)=
\begin{tikzpicture}[baseline=(current  bounding  box.center), scale=0.5]
\foreach \i in {3,...,13}{
\draw[gray, dashed] (-12.5+\i,-2.1) -- (-12.5+\i,4.3);
}
\foreach \i in {-1,...,5}{
\draw[gray, dashed] (-9.75,3-\i) -- (.75,3-\i);
}
\foreach \i in{1.5,2.5,3.5}{
\draw[thick] (0.5,2*\i-0.5-3.5) arc (45:-90:0.15);
\draw[thick] (-10+0.5+0,2*\i-0.5-3.5) arc (-135:-270:0.15);
}
\foreach \i in{1.5,2.5}
{
\draw[ thick] (0.5,1+2*\i-0.5-3.5) arc (-45:90:0.15);
\draw[ thick] (-10+0.5,1+2*\i-0.5-3.5) arc (135:270:0.15);
}
\foreach \i in {0,1,2}{
\Text[x=1.25,y=-2+2*\i]{\scriptsize$\i$}
}
\foreach \i in {1,3}{
\Text[x=1.25,y=-2+\i]{\small$\frac{\i}{2}$}
}
\foreach \i in {1,3}{
\Text[x=-7.5+\i+1,y=-2.6]{\small$\frac{\i}{2}$}
}
\foreach \i in {0,1}{
\Text[x=-7.5+2*\i+1,y=-2.6]{\scriptsize${\i}$}
}
\foreach \jj in {0}{
\foreach \i in {0}{
\draw[thick] (-.5-2*\i,-1+\jj) -- (0.525-2*\i,0.025+\jj);
\draw[thick] (-0.525-2*\i,0.025+\jj) -- (0.5-2*\i,-1+\jj);
\draw[thick, fill=myblue1, rounded corners=2pt] (-0.25-2*\i,-0.25+\jj) rectangle (.25-2*\i,-0.75+\jj);
\draw[thick] (-2*\i,-0.35+\jj) -- (-2*\i+0.15,-0.35+\jj) -- (-2*\i+0.15,-0.5+\jj);%
}
\foreach \i in {1}{
\draw[thick] (-.5-2*\i,-1+\jj) -- (0.525-2*\i,0.025+\jj);
\draw[thick] (-0.525-2*\i,0.025+\jj) -- (0.5-2*\i,-1+\jj);
\draw[thick, fill=myblue4, rounded corners=2pt] (-0.25-2*\i,-0.25+\jj) rectangle (.25-2*\i,-0.75+\jj);
\draw[thick] (-2*\i,-0.35+\jj) -- (-2*\i+0.15,-0.35+\jj) -- (-2*\i+0.15,-0.5+\jj);%
}
\foreach \i in {2}{
\draw[thick] (-.5-2*\i,-1+\jj) -- (0.525-2*\i,0.025+\jj);
\draw[thick] (-0.525-2*\i,0.025+\jj) -- (0.5-2*\i,-1+\jj);
\draw[thick, fill=myblue3, rounded corners=2pt] (-0.25-2*\i,-0.25+\jj) rectangle (.25-2*\i,-0.75+\jj);
\draw[thick] (-2*\i,-0.35+\jj) -- (-2*\i+0.15,-0.35+\jj) -- (-2*\i+0.15,-0.5+\jj);%
}
\foreach \i in {3}{
\draw[thick] (-.5-2*\i,-1+\jj) -- (0.525-2*\i,0.025+\jj);
\draw[thick] (-0.525-2*\i,0.025+\jj) -- (0.5-2*\i,-1+\jj);
\draw[thick, fill=myblue7, rounded corners=2pt] (-0.25-2*\i,-0.25+\jj) rectangle (.25-2*\i,-0.75+\jj);
\draw[thick] (-2*\i,-0.35+\jj) -- (-2*\i+0.15,-0.35+\jj) -- (-2*\i+0.15,-0.5+\jj);%
}
\foreach \i in {4}{
\draw[thick] (-.5-2*\i,-1+\jj) -- (0.525-2*\i,0.025+\jj);
\draw[thick] (-0.525-2*\i,0.025+\jj) -- (0.5-2*\i,-1+\jj);
\draw[thick, fill=myblue6, rounded corners=2pt] (-0.25-2*\i,-0.25+\jj) rectangle (.25-2*\i,-0.75+\jj);
\draw[thick] (-2*\i,-0.35+\jj) -- (-2*\i+0.15,-0.35+\jj) -- (-2*\i+0.15,-0.5+\jj);%
}
}
\foreach \jj in {2}{
\foreach \i in {0}{
\draw[thick] (-.5-2*\i,-1+\jj) -- (0.525-2*\i,0.025+\jj);
\draw[thick] (-0.525-2*\i,0.025+\jj) -- (0.5-2*\i,-1+\jj);
\draw[thick, fill=myblue9, rounded corners=2pt] (-0.25-2*\i,-0.25+\jj) rectangle (.25-2*\i,-0.75+\jj);
\draw[thick] (-2*\i,-0.35+\jj) -- (-2*\i+0.15,-0.35+\jj) -- (-2*\i+0.15,-0.5+\jj);%
}
\foreach \i in {1}{
\draw[thick] (-.5-2*\i,-1+\jj) -- (0.525-2*\i,0.025+\jj);
\draw[thick] (-0.525-2*\i,0.025+\jj) -- (0.5-2*\i,-1+\jj);
\draw[thick, fill=myblue10, rounded corners=2pt] (-0.25-2*\i,-0.25+\jj) rectangle (.25-2*\i,-0.75+\jj);
\draw[thick] (-2*\i,-0.35+\jj) -- (-2*\i+0.15,-0.35+\jj) -- (-2*\i+0.15,-0.5+\jj);%
}
\foreach \i in {2}{
\draw[thick] (-.5-2*\i,-1+\jj) -- (0.525-2*\i,0.025+\jj);
\draw[thick] (-0.525-2*\i,0.025+\jj) -- (0.5-2*\i,-1+\jj);
\draw[thick, fill=myblue2, rounded corners=2pt] (-0.25-2*\i,-0.25+\jj) rectangle (.25-2*\i,-0.75+\jj);
\draw[thick] (-2*\i,-0.35+\jj) -- (-2*\i+0.15,-0.35+\jj) -- (-2*\i+0.15,-0.5+\jj);%
}
\foreach \i in {3}{
\draw[thick] (-.5-2*\i,-1+\jj) -- (0.525-2*\i,0.025+\jj);
\draw[thick] (-0.525-2*\i,0.025+\jj) -- (0.5-2*\i,-1+\jj);
\draw[thick, fill=myblue6, rounded corners=2pt] (-0.25-2*\i,-0.25+\jj) rectangle (.25-2*\i,-0.75+\jj);
\draw[thick] (-2*\i,-0.35+\jj) -- (-2*\i+0.15,-0.35+\jj) -- (-2*\i+0.15,-0.5+\jj);%
}
\foreach \i in {4}{
\draw[thick] (-.5-2*\i,-1+\jj) -- (0.525-2*\i,0.025+\jj);
\draw[thick] (-0.525-2*\i,0.025+\jj) -- (0.5-2*\i,-1+\jj);
\draw[thick, fill=myblue7, rounded corners=2pt] (-0.25-2*\i,-0.25+\jj) rectangle (.25-2*\i,-0.75+\jj);
\draw[thick] (-2*\i,-0.35+\jj) -- (-2*\i+0.15,-0.35+\jj) -- (-2*\i+0.15,-0.5+\jj);%
}
}
\foreach \jj in {4}{
\foreach \i in {0}{
\draw[thick] (-.5-2*\i,-1+\jj) -- (0.525-2*\i,0.025+\jj);
\draw[thick] (-0.525-2*\i,0.025+\jj) -- (0.5-2*\i,-1+\jj);
\draw[thick, fill=myblue4, rounded corners=2pt] (-0.25-2*\i,-0.25+\jj) rectangle (.25-2*\i,-0.75+\jj);
\draw[thick] (-2*\i,-0.35+\jj) -- (-2*\i+0.15,-0.35+\jj) -- (-2*\i+0.15,-0.5+\jj);%
}
\foreach \i in {1}{
\draw[thick] (-.5-2*\i,-1+\jj) -- (0.525-2*\i,0.025+\jj);
\draw[thick] (-0.525-2*\i,0.025+\jj) -- (0.5-2*\i,-1+\jj);
\draw[thick, fill=myblue3, rounded corners=2pt] (-0.25-2*\i,-0.25+\jj) rectangle (.25-2*\i,-0.75+\jj);
\draw[thick] (-2*\i,-0.35+\jj) -- (-2*\i+0.15,-0.35+\jj) -- (-2*\i+0.15,-0.5+\jj);%
}
\foreach \i in {2}{
\draw[thick] (-.5-2*\i,-1+\jj) -- (0.525-2*\i,0.025+\jj);
\draw[thick] (-0.525-2*\i,0.025+\jj) -- (0.5-2*\i,-1+\jj);
\draw[thick, fill=myblue2, rounded corners=2pt] (-0.25-2*\i,-0.25+\jj) rectangle (.25-2*\i,-0.75+\jj);
\draw[thick] (-2*\i,-0.35+\jj) -- (-2*\i+0.15,-0.35+\jj) -- (-2*\i+0.15,-0.5+\jj);%
}
\foreach \i in {3}{
\draw[thick] (-.5-2*\i,-1+\jj) -- (0.525-2*\i,0.025+\jj);
\draw[thick] (-0.525-2*\i,0.025+\jj) -- (0.5-2*\i,-1+\jj);
\draw[thick, fill=myblue5, rounded corners=2pt] (-0.25-2*\i,-0.25+\jj) rectangle (.25-2*\i,-0.75+\jj);
\draw[thick] (-2*\i,-0.35+\jj) -- (-2*\i+0.15,-0.35+\jj) -- (-2*\i+0.15,-0.5+\jj);%
}
\foreach \i in {4}{
\draw[thick] (-.5-2*\i,-1+\jj) -- (0.525-2*\i,0.025+\jj);
\draw[thick] (-0.525-2*\i,0.025+\jj) -- (0.5-2*\i,-1+\jj);
\draw[thick, fill=myblue10, rounded corners=2pt] (-0.25-2*\i,-0.25+\jj) rectangle (.25-2*\i,-0.75+\jj);
\draw[thick] (-2*\i,-0.35+\jj) -- (-2*\i+0.15,-0.35+\jj) -- (-2*\i+0.15,-0.5+\jj);%
}
}
\foreach \jj[evaluate=\jj as \j using -2*(ceil(\jj/2)-\jj/2)] in {0}{
\foreach \i in {1}
{
\draw[thick] (.5-2*\i-1*\j,-2-1*\jj) -- (1-2*\i-1*\j,-1.5-\jj);
\draw[thick] (1-2*\i-1*\j,-1.5-1*\jj) -- (1.5-2*\i-1*\j,-2-\jj);
\draw[thick] (.5-2*\i-1*\j,-1-1*\jj) -- (1-2*\i-1*\j,-1.5-\jj);
\draw[thick] (1-2*\i-1*\j,-1.5-1*\jj) -- (1.5-2*\i-1*\j,-1-\jj);
\draw[thick, fill=myblue5, rounded corners=2pt] (0.75-2*\i-1*\j,-1.75-\jj) rectangle (1.25-2*\i-1*\j,-1.25-\jj);
\draw[thick] (-2*\i+1,-1.35-\jj) -- (-2*\i+1.15,-1.35-\jj) -- (-2*\i+1.15,-1.5-\jj);%
}
\foreach \i in {2}
{
\draw[thick] (.5-2*\i-1*\j,-2-1*\jj) -- (1-2*\i-1*\j,-1.5-\jj);
\draw[thick] (1-2*\i-1*\j,-1.5-1*\jj) -- (1.5-2*\i-1*\j,-2-\jj);
\draw[thick] (.5-2*\i-1*\j,-1-1*\jj) -- (1-2*\i-1*\j,-1.5-\jj);
\draw[thick] (1-2*\i-1*\j,-1.5-1*\jj) -- (1.5-2*\i-1*\j,-1-\jj);
\draw[thick, fill=myblue4, rounded corners=2pt] (0.75-2*\i-1*\j,-1.75-\jj) rectangle (1.25-2*\i-1*\j,-1.25-\jj);
\draw[thick] (-2*\i+1,-1.35-\jj) -- (-2*\i+1.15,-1.35-\jj) -- (-2*\i+1.15,-1.5-\jj);%
}
\foreach \i in {3}
{
\draw[thick] (.5-2*\i-1*\j,-2-1*\jj) -- (1-2*\i-1*\j,-1.5-\jj);
\draw[thick] (1-2*\i-1*\j,-1.5-1*\jj) -- (1.5-2*\i-1*\j,-2-\jj);
\draw[thick] (.5-2*\i-1*\j,-1-1*\jj) -- (1-2*\i-1*\j,-1.5-\jj);
\draw[thick] (1-2*\i-1*\j,-1.5-1*\jj) -- (1.5-2*\i-1*\j,-1-\jj);
\draw[thick, fill=myblue3, rounded corners=2pt] (0.75-2*\i-1*\j,-1.75-\jj) rectangle (1.25-2*\i-1*\j,-1.25-\jj);
\draw[thick] (-2*\i+1,-1.35-\jj) -- (-2*\i+1.15,-1.35-\jj) -- (-2*\i+1.15,-1.5-\jj);%
}
\foreach \i in {4}
{
\draw[thick] (.5-2*\i-1*\j,-2-1*\jj) -- (1-2*\i-1*\j,-1.5-\jj);
\draw[thick] (1-2*\i-1*\j,-1.5-1*\jj) -- (1.5-2*\i-1*\j,-2-\jj);
\draw[thick] (.5-2*\i-1*\j,-1-1*\jj) -- (1-2*\i-1*\j,-1.5-\jj);
\draw[thick] (1-2*\i-1*\j,-1.5-1*\jj) -- (1.5-2*\i-1*\j,-1-\jj);
\draw[thick, fill=myblue2, rounded corners=2pt] (0.75-2*\i-1*\j,-1.75-\jj) rectangle (1.25-2*\i-1*\j,-1.25-\jj);
\draw[thick] (-2*\i+1,-1.35-\jj) -- (-2*\i+1.15,-1.35-\jj) -- (-2*\i+1.15,-1.5-\jj);%
}
\foreach \i in {5}
{
\draw[thick] (.5-2*\i-1*\j,-2-1*\jj) -- (1-2*\i-1*\j,-1.5-\jj);
\draw[thick] (1-2*\i-1*\j,-1.5-1*\jj) -- (1.5-2*\i-1*\j,-2-\jj);
\draw[thick] (.5-2*\i-1*\j,-1-1*\jj) -- (1-2*\i-1*\j,-1.5-\jj);
\draw[thick] (1-2*\i-1*\j,-1.5-1*\jj) -- (1.5-2*\i-1*\j,-1-\jj);
\draw[thick, fill=myblue1, rounded corners=2pt] (0.75-2*\i-1*\j,-1.75-\jj) rectangle (1.25-2*\i-1*\j,-1.25-\jj);
\draw[thick] (-2*\i+1,-1.35-\jj) -- (-2*\i+1.15,-1.35-\jj) -- (-2*\i+1.15,-1.5-\jj);%
}
}
\foreach \jj[evaluate=\jj as \j using -2*(ceil(\jj/2)-\jj/2)] in {-2}{
\foreach \i in {1}
{
\draw[thick] (.5-2*\i-1*\j,-2-1*\jj) -- (1-2*\i-1*\j,-1.5-\jj);
\draw[thick] (1-2*\i-1*\j,-1.5-1*\jj) -- (1.5-2*\i-1*\j,-2-\jj);
\draw[thick] (.5-2*\i-1*\j,-1-1*\jj) -- (1-2*\i-1*\j,-1.5-\jj);
\draw[thick] (1-2*\i-1*\j,-1.5-1*\jj) -- (1.5-2*\i-1*\j,-1-\jj);
\draw[thick, fill=myblue8, rounded corners=2pt] (0.75-2*\i-1*\j,-1.75-\jj) rectangle (1.25-2*\i-1*\j,-1.25-\jj);
\draw[thick] (-2*\i+1,-1.35-\jj) -- (-2*\i+1.15,-1.35-\jj) -- (-2*\i+1.15,-1.5-\jj);%
}
\foreach \i in {2}
{
\draw[thick] (.5-2*\i-1*\j,-2-1*\jj) -- (1-2*\i-1*\j,-1.5-\jj);
\draw[thick] (1-2*\i-1*\j,-1.5-1*\jj) -- (1.5-2*\i-1*\j,-2-\jj);
\draw[thick] (.5-2*\i-1*\j,-1-1*\jj) -- (1-2*\i-1*\j,-1.5-\jj);
\draw[thick] (1-2*\i-1*\j,-1.5-1*\jj) -- (1.5-2*\i-1*\j,-1-\jj);
\draw[thick, fill=myblue10, rounded corners=2pt] (0.75-2*\i-1*\j,-1.75-\jj) rectangle (1.25-2*\i-1*\j,-1.25-\jj);
\draw[thick] (-2*\i+1,-1.35-\jj) -- (-2*\i+1.15,-1.35-\jj) -- (-2*\i+1.15,-1.5-\jj);%
}
\foreach \i in {3}
{
\draw[thick] (.5-2*\i-1*\j,-2-1*\jj) -- (1-2*\i-1*\j,-1.5-\jj);
\draw[thick] (1-2*\i-1*\j,-1.5-1*\jj) -- (1.5-2*\i-1*\j,-2-\jj);
\draw[thick] (.5-2*\i-1*\j,-1-1*\jj) -- (1-2*\i-1*\j,-1.5-\jj);
\draw[thick] (1-2*\i-1*\j,-1.5-1*\jj) -- (1.5-2*\i-1*\j,-1-\jj);
\draw[thick, fill=myblue6, rounded corners=2pt] (0.75-2*\i-1*\j,-1.75-\jj) rectangle (1.25-2*\i-1*\j,-1.25-\jj);
\draw[thick] (-2*\i+1,-1.35-\jj) -- (-2*\i+1.15,-1.35-\jj) -- (-2*\i+1.15,-1.5-\jj);%
}
\foreach \i in {4}
{
\draw[thick] (.5-2*\i-1*\j,-2-1*\jj) -- (1-2*\i-1*\j,-1.5-\jj);
\draw[thick] (1-2*\i-1*\j,-1.5-1*\jj) -- (1.5-2*\i-1*\j,-2-\jj);
\draw[thick] (.5-2*\i-1*\j,-1-1*\jj) -- (1-2*\i-1*\j,-1.5-\jj);
\draw[thick] (1-2*\i-1*\j,-1.5-1*\jj) -- (1.5-2*\i-1*\j,-1-\jj);
\draw[thick, fill=myblue9, rounded corners=2pt] (0.75-2*\i-1*\j,-1.75-\jj) rectangle (1.25-2*\i-1*\j,-1.25-\jj);
\draw[thick] (-2*\i+1,-1.35-\jj) -- (-2*\i+1.15,-1.35-\jj) -- (-2*\i+1.15,-1.5-\jj);%
}
\foreach \i in {5}
{
\draw[thick] (.5-2*\i-1*\j,-2-1*\jj) -- (1-2*\i-1*\j,-1.5-\jj);
\draw[thick] (1-2*\i-1*\j,-1.5-1*\jj) -- (1.5-2*\i-1*\j,-2-\jj);
\draw[thick] (.5-2*\i-1*\j,-1-1*\jj) -- (1-2*\i-1*\j,-1.5-\jj);
\draw[thick] (1-2*\i-1*\j,-1.5-1*\jj) -- (1.5-2*\i-1*\j,-1-\jj);
\draw[thick, fill=myblue1, rounded corners=2pt] (0.75-2*\i-1*\j,-1.75-\jj) rectangle (1.25-2*\i-1*\j,-1.25-\jj);
\draw[thick] (-2*\i+1,-1.35-\jj) -- (-2*\i+1.15,-1.35-\jj) -- (-2*\i+1.15,-1.5-\jj);%
}
}
\foreach \jj[evaluate=\jj as \j using -2*(ceil(\jj/2)-\jj/2)] in {-4}{
\foreach \i in {1}
{
\draw[thick] (.5-2*\i-1*\j,-2-1*\jj) -- (1-2*\i-1*\j,-1.5-\jj);
\draw[thick] (1-2*\i-1*\j,-1.5-1*\jj) -- (1.5-2*\i-1*\j,-2-\jj);
\draw[thick] (.5-2*\i-1*\j,-1-1*\jj) -- (1-2*\i-1*\j,-1.5-\jj);
\draw[thick] (1-2*\i-1*\j,-1.5-1*\jj) -- (1.5-2*\i-1*\j,-1-\jj);
\draw[thick, fill=myblue6, rounded corners=2pt] (0.75-2*\i-1*\j,-1.75-\jj) rectangle (1.25-2*\i-1*\j,-1.25-\jj);
\draw[thick] (-2*\i+1,-1.35-\jj) -- (-2*\i+1.15,-1.35-\jj) -- (-2*\i+1.15,-1.5-\jj);%
}
\foreach \i in {2}
{
\draw[thick] (.5-2*\i-1*\j,-2-1*\jj) -- (1-2*\i-1*\j,-1.5-\jj);
\draw[thick] (1-2*\i-1*\j,-1.5-1*\jj) -- (1.5-2*\i-1*\j,-2-\jj);
\draw[thick] (.5-2*\i-1*\j,-1-1*\jj) -- (1-2*\i-1*\j,-1.5-\jj);
\draw[thick] (1-2*\i-1*\j,-1.5-1*\jj) -- (1.5-2*\i-1*\j,-1-\jj);
\draw[thick, fill=myblue3, rounded corners=2pt] (0.75-2*\i-1*\j,-1.75-\jj) rectangle (1.25-2*\i-1*\j,-1.25-\jj);
\draw[thick] (-2*\i+1,-1.35-\jj) -- (-2*\i+1.15,-1.35-\jj) -- (-2*\i+1.15,-1.5-\jj);%
}
\foreach \i in {3}
{
\draw[thick] (.5-2*\i-1*\j,-2-1*\jj) -- (1-2*\i-1*\j,-1.5-\jj);
\draw[thick] (1-2*\i-1*\j,-1.5-1*\jj) -- (1.5-2*\i-1*\j,-2-\jj);
\draw[thick] (.5-2*\i-1*\j,-1-1*\jj) -- (1-2*\i-1*\j,-1.5-\jj);
\draw[thick] (1-2*\i-1*\j,-1.5-1*\jj) -- (1.5-2*\i-1*\j,-1-\jj);
\draw[thick, fill=myblue7, rounded corners=2pt] (0.75-2*\i-1*\j,-1.75-\jj) rectangle (1.25-2*\i-1*\j,-1.25-\jj);
\draw[thick] (-2*\i+1,-1.35-\jj) -- (-2*\i+1.15,-1.35-\jj) -- (-2*\i+1.15,-1.5-\jj);%
}
\foreach \i in {4}
{
\draw[thick] (.5-2*\i-1*\j,-2-1*\jj) -- (1-2*\i-1*\j,-1.5-\jj);
\draw[thick] (1-2*\i-1*\j,-1.5-1*\jj) -- (1.5-2*\i-1*\j,-2-\jj);
\draw[thick] (.5-2*\i-1*\j,-1-1*\jj) -- (1-2*\i-1*\j,-1.5-\jj);
\draw[thick] (1-2*\i-1*\j,-1.5-1*\jj) -- (1.5-2*\i-1*\j,-1-\jj);
\draw[thick, fill=myblue2, rounded corners=2pt] (0.75-2*\i-1*\j,-1.75-\jj) rectangle (1.25-2*\i-1*\j,-1.25-\jj);
\draw[thick] (-2*\i+1,-1.35-\jj) -- (-2*\i+1.15,-1.35-\jj) -- (-2*\i+1.15,-1.5-\jj);%
}
\foreach \i in {5}
{
\draw[thick] (.5-2*\i-1*\j,-2-1*\jj) -- (1-2*\i-1*\j,-1.5-\jj);
\draw[thick] (1-2*\i-1*\j,-1.5-1*\jj) -- (1.5-2*\i-1*\j,-2-\jj);
\draw[thick] (.5-2*\i-1*\j,-1-1*\jj) -- (1-2*\i-1*\j,-1.5-\jj);
\draw[thick] (1-2*\i-1*\j,-1.5-1*\jj) -- (1.5-2*\i-1*\j,-1-\jj);
\draw[thick, fill=myblue7, rounded corners=2pt] (0.75-2*\i-1*\j,-1.75-\jj) rectangle (1.25-2*\i-1*\j,-1.25-\jj);
\draw[thick] (-2*\i+1,-1.35-\jj) -- (-2*\i+1.15,-1.35-\jj) -- (-2*\i+1.15,-1.5-\jj);%
}
}
\Text[x=-7.8,y=-2.6]{$\cdots$}
\Text[x=-2,y=-2.6]{$\cdots$}
\Text[x=0,y=-2.6]{\small $\tfrac{L}{2}-1$}
\Text[x=-9.8,y=-2.6]{\small $-\tfrac{L}{2}$}
\Text[x=-4,y=-3.5]{$x$}
\Text[x=1.25,y=4]{\small $t$}
\Text[x=1.25,y=3.2]{$\vdots$}
\end{tikzpicture},
\label{eq:quantumcircuit}
\end{align}
where we depicted two-site gates as
\be
\label{eq:Ugate}
U_{x,\tau}=\begin{tikzpicture}[baseline=(current  bounding  box.center), scale=.7]
\draw[ thick] (-4.25,0.5) -- (-3.25,-0.5);
\draw[ thick] (-4.25,-0.5) -- (-3.25,0.5);
\draw[ thick, fill=myblue, rounded corners=2pt] (-4,0.25) rectangle (-3.5,-0.25);
\draw[thick] (-3.75,0.15) -- (-3.75+0.15,0.15) -- (-3.75+0.15,0);
\Text[x=-4.25,y=-0.75]{}
\end{tikzpicture},\quad
x\in\frac{1}{2}\mathbb Z_{2L},\; \tau\in\frac{1}{2}\mathbb Z_{2t},\quad x+ \tau \in\mathbb Z,
\ee
and different colours denote different matrices. Note that we adopt the convention of time running upwards.
We remark that this setting is in fact quite general. It can be thought of as generated by a disordered local Hamiltonian, which changes at each half-integer time due to some external driving. This formulation of quantum evolution is widely used, for instance, in the context of quantum simulators~\cite{google}.

The main quantity of interest for this paper is the GSFF defined as
\be
K_{g}(t) \equiv \braket{|{\rm tr_{sector}}\, \mathbb U(t)|^2 }, \qquad\qquad t>0\,.
\label{eq:GSFF}
\ee
where the trace is reduced to a common eigenspace of $\mathbb U(t)$ and all its commuting symmetries, and $\braket{\cdot}$ denotes an average of some sort (either a moving time-average or an average over an ensemble of similar systems). Such an average is necessary because the distribution of $|{\rm tr}\, \mathbb U(t)|^2$ in an ensemble of systems does \emph{not} generically become infinitely peaked even in the limit of infinitely many degrees of freedom~\cite{nonSA}. 
From the definition \eqref{eq:GSFF}, we see that $K_g(t)$ with unrestricted trace can be interpreted as the survival probability (or Loschmidt echo) for a random initial state, which is another 
chaos indicator~\cite{CH,LEReview}. 

As mentioned before, here we regard $\mathbb U(t)$ as ``chaotic'' if there exist a scale $\tau_{\rm th}$ such that
\be
K_g(t) \simeq 1=\mathbb E_{\rm CUE}[|{\rm tr}\, U|^2]\,,\qquad \text{for}\quad t \gg \tau_{\rm th},
\label{eq:QCC}
\ee
where $\simeq$ denotes asymptotic equality in the leading and possibly subleading order in $\tau_{\rm th}/t$ and, since our system is not time-reversal invariant, we considered the average over the circular unitary ensemble (CUE)~\cite{Mehtabook, Haake}. Eq.~\eqref{eq:QCC} allows us to illustrate that \eqref{eq:GSFF} is very different from the (conventional) spectral form factor considered in periodically driven systems~\cite{Haake, COE:SFF, Mehtabook}. Indeed, here the full time-evolution operator $\mathbb U(t)$ is expected to behave like a random matrix at large times, while in the periodically driven case $\mathbb U(t)$ behaves as the $t$-th power of a random matrix. This means that $K_g(t)$ should be compared with the conventional spectral form factor at time $t=1$ and that the asymptotic value in \eqref{eq:QCC} has nothing to do with the asymptotic value (exponentially large in the volume) at which the conventional spectral form factor relaxes for times larger than the inverse level spacing (which are, again, exponentially large in the volume).

In our setting \eqref{eq:quantumcircuit}, a natural way to produce ensembles of different systems is to introduce noise in the local gate $U_{x,t}$. Since we are interested in generic drivings, we look at time-dependent noise and, to avoid any bias, we choose it to be independently distributed in space and time. Specifically, following~\cite{KBP:CorrelationsPerturbed}, we take random gates of the form 
\be
U_{x,\tau} = (e^{i \phi_{x,\tau} \sigma^z} \otimes e^{i \phi_{\scriptscriptstyle x+{1}/{2},\tau} \sigma^{z}})\cdot U,
\label{eq:Ubar}
\ee
where $U$ is a fixed $U(4)$ matrix, $\phi_{x,t}$ are independent random variables uniformly distributed over $[-\pi,\pi]$ and $\otimes$ denotes the tensor product between two neighbouring sites. From the physical point of view the choice \eqref{eq:Ubar} describes a homogeneous spin-1/2 chain where the time evolution is periodic but each spin is subject to white noise produced by a random magnetic field in the z-direction~\cite{notecit}.

For gates of the form \eqref{eq:Ubar}, the average $\braket{\cdot}$ in \eqref{eq:GSFF} can be implemented locally by placing $\mathbb U(t)^\dag$ on top of $\mathbb U(t)$ in such a way that each gate lies on top of its conjugate (i.e. `folding' the circuit, see, e.g., Ref.~\cite{KBP:CorrelationsPerturbed}). 
Specifically, the average projects to a subspace spanned by diagonal operators ($\ket{\mcirc}\equiv \ket{\1}$ and $\ket{\mcircf}= \ket{\sigma^z}$) and allows us to write \eqref{eq:GSFF} as 
\be
K_g(t) = 
\begin{tikzpicture}[baseline=(current  bounding  box.center), scale=0.5]
\def\eps{0};
\def\shift{11}
\def\shifty{-2.5}
\Text[x=\shift-3, y=-4]{}
\foreach \i in {1,...,4}{
\draw[ thick, opacity =0.2,dashed] (2*\i+2-1.5+0.25,-2.5-0.1) -- (2*\i+2-1.5+0.25,3.5-0.1);
\draw[ thick, opacity =0.2,dashed] (2*\i+2-0.5-0.25,-2.5-0.1) -- (2*\i+2-0.5-0.25,3.5-0.1);
}
\foreach \i in{3,...,4}
{
\draw[ thick] (\shift-1.5,2*\i-0.5-3.5+\shifty) arc (45:-90:0.15);
\draw[ thick] (\shift-10+0.5+0,2*\i-0.5-3.5+\shifty) arc (-135:-270:0.15);}
\foreach \i in{2,...,4}
{
\draw[ thick] (\shift-1.5,1+2*\i-0.5-3.5+\shifty) arc (-45:90:0.15);
\draw[ thick] (\shift-10+0.5,1+2*\i-0.5-3.5+\shifty) arc (135:270:0.15);
}
\foreach \i in {1,...,4}
{
\draw[ thick] (\shift-.5-2*\i,1+\shifty) -- (\shift+0.5-2*\i,0+\shifty);
\draw[ thick] (\shift-0.5-2*\i,0+\shifty) -- (\shift+0.5-2*\i,1+\shifty);
\draw[ thick, fill=myY, rounded corners=2pt] (\shift-0.25-2*\i,0.25+\shifty) rectangle (\shift+.25-2*\i,0.75+\shifty);
\draw[thick] (\shift-2*\i,0.65+\shifty) -- (\shift+.15-2*\i,.65+\shifty) -- (\shift+.15-2*\i,0.5+\shifty);

\draw[ thick] (2*\i+2-1.5,3.5) arc (135:-0:0.15);
\draw[ thick] (2*\i+2-2.5,3.5) arc (-325:-180:0.15);
\draw[ thick] (2*\i+2-1.5,-2.5) arc (-45:180:-0.15);
\draw[ thick] (2*\i+2-0.5,-2.5) arc (45:-180:0.15);
}
\foreach \i in {2,...,5}
{
\draw[ thick] (\shift+.5-2*\i,6+\shifty) -- (\shift+1-2*\i,5.5+\shifty);
\draw[ thick] (\shift+1.5-2*\i,6+\shifty) -- (\shift+1-2*\i,5.5+\shifty);
}
\foreach \jj[evaluate=\jj as \j using -2*(ceil(\jj/2)-\jj/2)] in {0,...,3}
\foreach \i in {2,...,5}
{
\draw[ thick] (\shift+.5-2*\i-1*\j,2+1*\jj+\shifty) -- (\shift+1-2*\i-1*\j,1.5+\jj+\shifty);
\draw[ thick] (\shift+1-2*\i-1*\j,1.5+1*\jj+\shifty) -- (\shift+1.5-2*\i-1*\j,2+\jj+\shifty);
}
\foreach \i in {1,...,4}
{
\draw[ thick] (\shift-.5-2*\i,1+\shifty) -- (\shift+0.5-2*\i,0+\shifty);
\draw[ thick] (\shift-0.5-2*\i,0+\shifty) -- (\shift+0.5-2*\i,1+\shifty);
\draw[ thick, fill=myY, rounded corners=2pt] (\shift-0.25-2*\i,0.25+\shifty) rectangle (\shift+.25-2*\i,0.75+\shifty);
\draw[thick] (\shift-2*\i,0.65+\shifty) -- (\shift+.15-2*\i,.65+\shifty) -- (\shift+.15-2*\i,0.5+\shifty);
}
\foreach \jj[evaluate=\jj as \j using -2*(ceil(\jj/2)-\jj/2)] in {0,...,4}
\foreach \i in {2,...,5}
{
\draw[ thick] (\shift+.5-2*\i-1*\j,1+1*\jj+\shifty) -- (\shift+1-2*\i-1*\j,1.5+\jj+\shifty);
\draw[ thick] (\shift+1-2*\i-1*\j,1.5+1*\jj+\shifty) -- (\shift+1.5-2*\i-1*\j,1+\jj+\shifty);
\draw[ thick, fill=myY, rounded corners=2pt] (\shift+0.75-2*\i-1*\j,1.75+\jj+\shifty) rectangle (\shift+1.25-2*\i-1*\j,1.25+\jj+\shifty);
\draw[thick] (\shift+1-2*\i-1*\j,1.65+1*\jj+\shifty) -- (\shift+1.15-2*\i-1*\j,1.65+1*\jj+\shifty) -- (\shift+1.15-2*\i-1*\j,1.5+1*\jj+\shifty);
} 
\end{tikzpicture}\, ,
\label{eq:SFF0}
\vspace{-5mm}
\ee
where top and bottom wires at the same position are connected because of the trace.
Above we introduced the non-unitary `averaged gate', written in the local basis 
$\{\ket{\mcirc},\ket{\mcircf}\}$:
\begin{equation}
w = \begin{tikzpicture}[baseline=(current  bounding  box.center), scale=.8]
\def\eps{0.5};
\Wreduced{-3.75}{0};
\Text[x=-3.75,y=-0.6]{}
\end{tikzpicture}
= \begin{pmatrix}
    1 & 0 & 0 & 0\\
0 & \varepsilon_1 & a & b\\
0 & c & \varepsilon_2 & d\\
0 & e & f &  g
    \end{pmatrix}\!,
\label{eq:reducedgates}    
\end{equation}
with $9$ real parameters in $[-1,1]$ depending on the choice of $U$ in \eqref{eq:Ubar}~\cite{SM}. Remarkably, for any $U$, $w$ becomes bistochastic after a Hadamard transformation on the single wires~\cite{KBP:CorrelationsPerturbed}. 
This means that for the choice \eqref{eq:Ubar} of noise, $K_g(t,L)$ can be interpreted as the state-averaged return probability in a classical stochastic Markov process built as a brickwork circuit with the gate $w$.

Let us now evaluate the first two orders in the asymptotic expansion of \eqref{eq:SFF0} for large times. Conceptually, this will parallel similar derivations carried out in the periodically driven case, in both single-particle~\cite{Altshuler, Smilansky} and many-body~\cite{Chalker3, RP20, genSFF, ChaosvsMBL, Chan, KLP, Chalker2, genSFF} contexts. Indeed, even though $K_g(t)$ will generically relax to $1$ and not to $t$ (cf. Eq.~\eqref{eq:QCC}), in both cases the leading correction is exponential and the relaxation timescale can be interpreted as a Thouless time.

To proceed we now expand the trace \eqref{eq:SFF0} in the computational basis $\{\ket{e_i^m}\}$ where $m=0,\ldots,2L$ denotes the particle number (number of quasiparticles $\mcircf$) and $i=1,\ldots, \binom{2L}{m}$ labels states in a fixed $m$ sector. Assuming that there are no conserved charges we have
\be
\!\! K_{g}(t) =  \sum_{m=0}^{2L} K^{(m)}_g(t) = 1+ K^{(1)}_g(t) + \dots + K^{(2L)}_{g}(t),
\label{eq:GSFFave}
\ee
where we defined, in `1st quantization notation':
\be
K^{(m)}_g(t) \equiv \sum_{\{x_j\}_{j=1}^m}^{x_i>x_j : i>j}
\braket{\mcircf_{x_1}\!\!\!\cdots\!\mcircf_{x_m}\!\! | \!\mcircf_{x_1}\!\!\!\cdots\mcircf_{x_m}\!\!(t)}_L\, ,
\label{eq:GSFFmave}
\ee
and used that $K_g^{(0)}(t)=1$. We see that $K_g^{(m)}(t)$ is expressed as the sum of the averaged autocorrelation functions of the extended operators ${\sigma^z_{x_1}\cdots\sigma^z_{x_m}}$ (with ${x_1 < x_2  \cdots < x_m}$, and ${x_j\in\mathbb Z_{2L}/2}$) in finite volume $L$ (cf. Ref.~\cite{KBP:CorrelationsPerturbed}).

Let us now focus on a special family of reduced gates \eqref{eq:reducedgates}: those with either no \emph{splittings} (${f=e=0}$) or no \emph{mergers} (${b=d=0}$) and with non-negative weights. For this family of gates we can invoke the following property (proven in Sec.~\ref{app:bound} of the Supplemental Material (SM)~\cite{SM}:
\begin{property}
\label{prop:P1}
The averaged dynamical correlations $\braket{\mcircf_{x_1}\cdots\mcircf_{x_m} | \!\mcircf_{y_1}\cdots\mcircf_{y_m}\!(t) }_L$ are bounded from above by  
\be
\!\!\!{\rm max}\left(1, \frac{g}{\varepsilon_1 \varepsilon_2+ a c}\right)^{(m-1)t}\!\!\sum_{\sigma\in S_{m}}\prod_{i=1}^{m} {\braket{\mcircf_{x_i} | \!\mcircf_{y_{\sigma(i)}}\!(t) }_L}\,,
\ee
where $S_{m}$ is the permutation group of $m$ elements. 
\end{property}
\noindent Moreover, we also have:
\begin{property}
\label{prop:P2}
The two-point functions have the following asymptotic expansion in $t$
\be
\!\!\braket{\mcircf_{x} | \!\mcircf_{y}\!(t) }_L \simeq \frac{C_{\eta_x,\eta_y}}{C_{0,0}+C_{1,1}} \frac{\lambda^t}{L} + a_{\eta_x}^{2t}\delta_{t-(x-y)\,\text{\rm mod}\,L} \,,
\label{eq:C1asy}
\ee 
where ${\eta_j={2j\, {\rm mod}\, 2}}$, $\delta_0=1, \delta_{x\neq 0}=0$, ${a_0=a}$, ${a_{1}=c}$,
\be
\!\!\!\!\!\lambda=\frac{1}{4}\left( (a+c)+\sqrt{4 \varepsilon_1 \varepsilon_2+(a-c)^2}\right)^2
\label{eq:lambda}
\ee
while $C_{\eta_x,\eta_y}$ are constant amplitudes (${C_{0,0}}$ and ${C_{1,1}}$ are reported in Sec.~\ref{sec:AsyCorr} of the SM). 
\end{property}
\noindent An instructive way to obtain the expansion \eqref{eq:C1asy} is to note that the correlations in finite volume can be written as  
\be
\braket{\mcircf_x | \!\mcircf_y\!\!(t) }_L = \sum_{w=-\left \lfloor{t/L}\right \rfloor}^{\left \lfloor{t/L}\right \rfloor}  \braket{\mcircf_{x+w L} | \mcircf_y(t) }_{\infty}\; , 
\label{eq:corrFinite}
\ee
where $\braket{\mcircf_x | \!\mcircf_y\!\!(t) }_{\infty}$ are the infinite volume correlations known exactly from Ref.~\cite{KBP:CorrelationsPerturbed}. This form follows from the observation that for no splittings (mergers) the only contributions to the correlation come from continuous paths (the \emph{skeleton diagrams}~\cite{KBP:CorrelationsPerturbed}) connecting the endpoints, and wrapping around the cylindrical worldsheet along the space direction an arbitrary number of times. The maximal number of wrappings is restricted by the maximal speed of propagation. Then, Eq.~\eqref{eq:C1asy} follows directly plugging in the asymptotic form
\be
\!\!\!\!\braket{\mcircf_{x} | \!\mcircf_y\!\!(t) }_{\infty} \!\simeq \delta_{t-(x-y)} a_{\eta_x}^{2t} + \frac{ \lambda^t C_{\eta_x,\eta_y}}{\sqrt{t}} e^{- \frac{\scriptstyle (x-y- \bar{\zeta} t)^2}{\scriptstyle 4 D t}}\!,
\label{eq:CarrAsyGauss}
\ee
[where the diffusion constant is given by ${D=[4\pi (C_{0,0}+C_{1,1})^2]^{-1}}$ and the fluid velocity $\bar\zeta$ is defined in Sec.~\ref{sec:AsyCorr} of the SM], and turning the sum over $wL/t$ into an integral for $t\gg L$. Alternatively, Eq.~\eqref{eq:C1asy} can also be derived by diagonalising an effective Markov operator, see Sec.~\ref{app:Markov} of the SM.  

Using the asymptotic form~\eqref{eq:C1asy} for two-point correlations and Property~\ref{prop:P1} we find (see Sec.~\ref{app:boundK} of the SM)
\be
\sum_{m=2}^{2L} K^{(m)}_g(t) < C L^2 \lambda^{2t} {\rm max}\left(1, \frac{g^{t}}{(\varepsilon_1 \varepsilon_2+ a c)^{t}}\right)\,.
\label{eq:K2bound}
\ee
This leads us to our \textbf{first main result}: for large times and $\lambda\, {\rm max}[1, g/(\varepsilon_1 \varepsilon_2+ a c)]<1$, the GSFF is fully determined by correlation functions of \emph{local observables}
\be
K_{g}(t) \simeq 1+ K_g^{(1)}(t)\simeq 1+ \lambda^t+  (a^{2t}+c^{2t})L  \delta_{t\,\text{mod}\,L}\,.
\label{eq:Kasy}
\ee
In particular, since $\lambda > {\rm max}(a^2,c^2)$, we find 
 \be
K_g(t) \simeq 1+ e^{-t/\tau_{\rm th}},\qquad \tau_{\rm th}^{-1} = -{ \log \lambda} \; .
 \label{eq:ThTime}
 \ee
Note that in this case $\tau_{\rm th}$ is the exponent governing the decay of two-point correlations in infinite volume. Note also that there is no $L$ dependence in $\tau_{\rm th}$, in contrast to $\log L$ dependence found in several examples of extended systems, see e.g. \cite{KLP, Chalker2, genSFF}.  

Eq.~\eqref{eq:Kasy} shows excellent agreement with the exact numerical evaluation of $K_{g}(t)$, see Fig.~\ref{fig:1ptApprox} for a representative example. Moreover, our numerical observations suggest that the bound \eqref{eq:K2bound} is too conservative and Eq.~\eqref{eq:Kasy} holds whenever $\lambda <1$, namely whenever the averaged two-point correlations decay exponentially. 

When some of the gate's parameters (\ref{eq:reducedgates}) are negative, the Gaussian asymptotic form \eqref{eq:CarrAsyGauss} is not valid. We calculate $K_g^{(1)}(t) \simeq \lambda^t$ by diagonalising an effective Markov operator, see Sec.~\ref{app:Markov} of the SM ($\lambda$ can be different from the one in \eqref{eq:lambda}).
Moreover, we again bound the other contributions as in \eqref{eq:K2bound} (with a minor modification, see Sec.~\ref{app:boundK} of the SM). 

\begin{figure}
    \centering
    \includegraphics[scale=0.75]{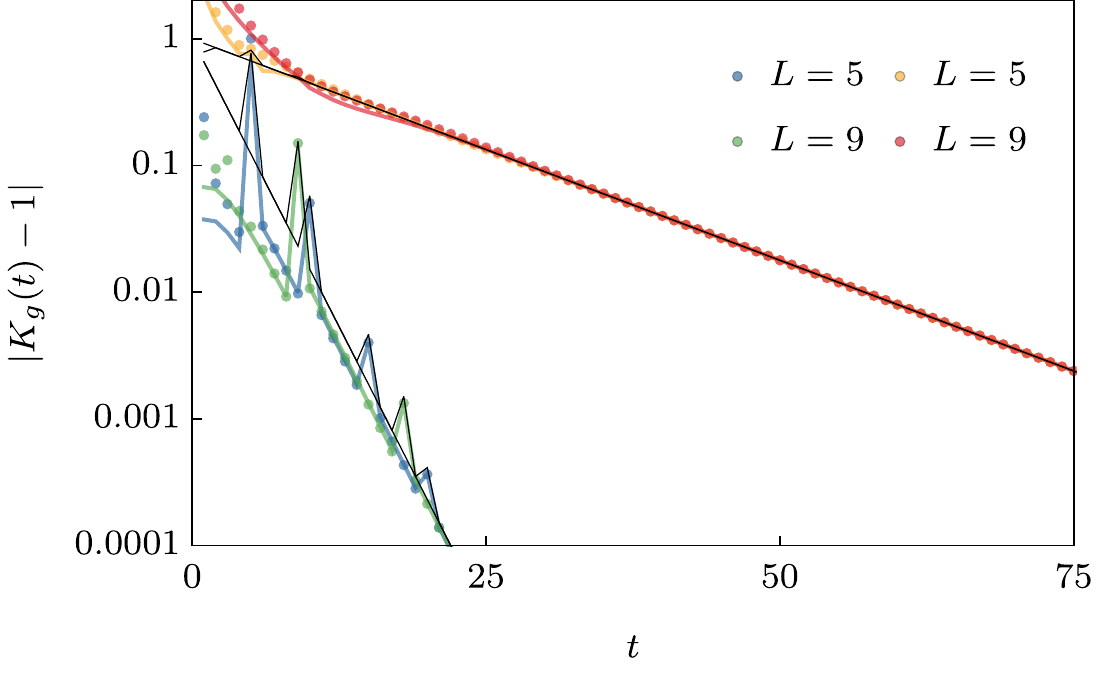}
    \caption{
Deviations of GSFF $K_g(t)$ from the RMT prediction for two different gates with no splittings, $b=d=0$. Symbols denote the exact numerical results for up to $18$ sites ($L=9$). Solid color line depicts $K_g^{(1)}(t)$ computed according to Eqs.~\eqref{eq:GSFFmave} and~\eqref{eq:corrFinite}. Notice that the $\tau_{\rm th }$ does not scale with $L$.  The solid black lines show the asymptotic from Eq.~\eqref{eq:Kasy}. 
We wrote the gate's parameters in Table~\ref{tab:U1gates} of Sec.~\ref{app:numerics} of the SM.
    }
    \label{fig:1ptApprox}
\end{figure}

Let us now consider a special case for which \eqref{eq:K2bound} does not provide a useful bound (because $\lambda=1$). Namely, the case of averaged gates with a conservation law. This situation has been extensively studied in the recent literature~\cite{Khemani, Rakovszky, Chalker3, genSFF} and can be realised in our setting by considering a gate $U$ (and hence $U_{x,\tau}$ in Eq.~\eqref{eq:Ubar}) that conserves the magnetisation in the $z$ direction. This leads to the following averaged gate~\cite{SM}
\be
w_{\rm U(1)} = \begin{pmatrix}
    1 & 0 & 0 & 0\\
0 & \cos^2 2J & \sin^2 2J & 0\\
0 & \sin^2 2J &\cos^2 2J & 0\\
0 & 0 & 0 &  1
    \end{pmatrix}\!,\,\quad J\in[0,{\pi}/{4}]\,.
    \label{eq:redCon}
\ee
Note that the time-evolution operator generated by this gate is integrable: it is an example of Floquet XXX model at a non-unitary point~\cite{Lenart}. Interestingly, a similar Floquet XXX model was obtained in Ref.~\cite{Chalker3} after averaging a ${\rm U}(1)$-symmetric Floquet Haar random circuit. Finally, we remark that a similar reduced gate for driven systems has been studied in Ref.~\cite{genSFF}. 

Since the magnetisation is conserved, the trace in \eqref{eq:GSFF} is reduced to a single magnetisation sector. This means that instead of $K_g(t)$ in Eq.~\eqref{eq:GSFFave} we should consider a single term $K_g^{(m)}(t)$ with fixed $m=0,1,\ldots,2L$. Moreover, we observe that, apart from the two trivial sectors $m=0$ and $m=2L$ where the GSFF is one, all $K_g^{(m)}(t)$ decay to one with the same exponent, see Fig.~\ref{fig:SFFcon}.
\begin{figure}
    \includegraphics[scale=0.75]{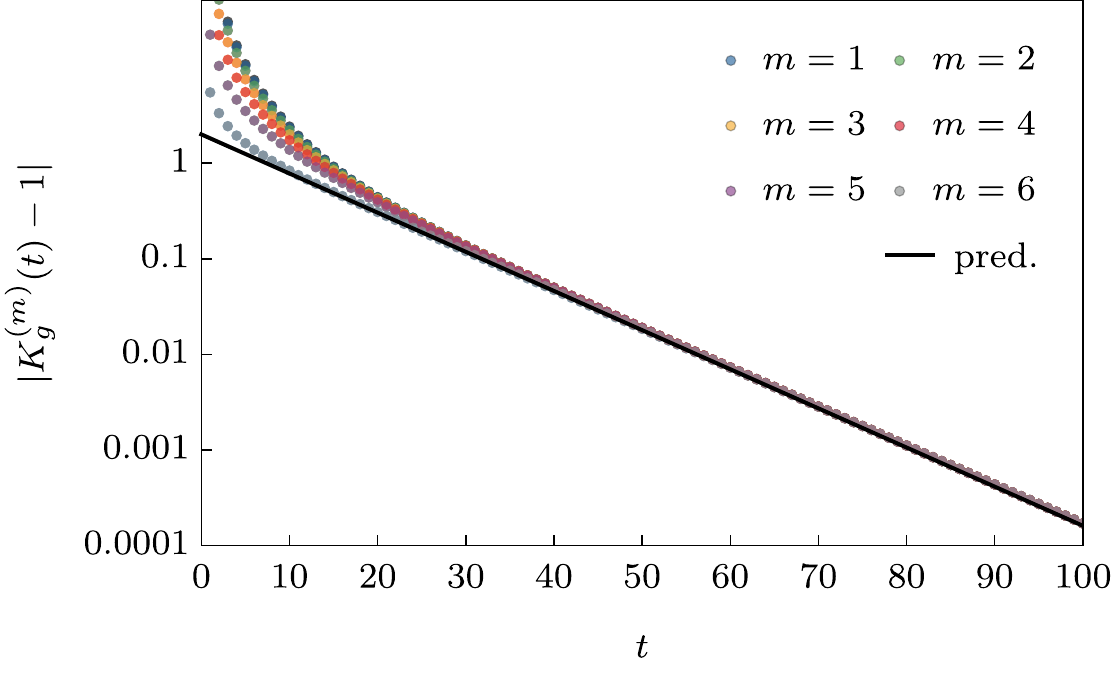}
    \caption{Deviations of the GSFF $|K^{(m)}_g(t)-1|$ with conservation laws. We show the results for different magnetisation sectors at $L=7$ (14 sites) and $J=0.3$. The black line is the prediction (\ref{eq:SFFCon}). 
    }
    \label{fig:SFFcon}
\end{figure}
This can be understood directly from the Bethe-Ansatz solution (see, e.g., the supplemental material of Ref.~\cite{Chalker3}). Indeed, by looking at the finite volume eigenstates one finds that the lowest excitations (those with eigenvalue of the Markov operator which is the closest to one), are one-magnon excitations (as opposed to bound states or scattering states of many magnons). Since the one-magnon states are highest weight states of the representation of SU(2) with $S_z=L-1$, their descendants (obtained by multiple applications of the lowering operators $S^-$) appear in all sectors $m=1,...,2L-1$. Therefore all sectors have the same Thouless time, which can be deduced from the $m=1$ sector.

For large times, the averaged two-point function for $m=1$ takes a simple diffusive form 
\be
\braket{\mcircf_{x} | \mcircf_0(t) }_{\infty} \simeq \frac{1}{2 \sqrt{4 \pi t D}} e^{-\frac{\scriptstyle x^2}{\scriptstyle 4 D t}}\,,
\label{eq:asyFinCon}
\ee
where ${D = (\tan^2 2J)}/4$ is the diffusion constant and we neglected exponentially small corrections with $L$-independent exponents because we expect a $L$-dependent Thouless time. Using again Eq.~\eqref{eq:corrFinite} we have 
\begin{align}
K^{(1)}_g (t) \simeq \frac{1}{2 \sqrt{4 \pi t D}} \!\!\! \sum_{w=-\left \lfloor{t/L}\right \rfloor}^{\left \lfloor{t/L}\right \rfloor}  e^{-\frac{\scriptstyle w^2 L^2}{\scriptstyle 4 D t}} \, .
\end{align}
Extending the summation to $\pm\infty$~\cite{note} and utilising the Poisson summation formula we get
\begin{align}
\!\!\!\!K^{(1)}_g (t)\simeq\!\!\!\sum_{n=-\infty}^{\infty} \!\!\!\! e^{-\frac{\scriptstyle 4 \pi^2 D t n^2}{\scriptstyle L^2}}\!\!\simeq \!1\!+\!2 e^{-{t}/{\tau_{\rm th}}},\,\,\tau_{\rm th}\! =\!  \frac{L^2}{4 \pi^2 D}\,.
\label{eq:SFFCon}
\end{align}
Note that the Thouless time depends on $L^2/D$, in agreement with previous observations in chaotic systems with diffusive conservation laws in both single-particle~\cite{Altshuler, Smilansky} and many-body~\cite{Chalker3, RP20, genSFF, ChaosvsMBL, Chan} contexts. Our derivation gives a straightforward illustration of the origin of this scaling.   

Another interesting limiting case is when, in addition to no splittings (or merges), at least one of $\varepsilon_1$ and $\varepsilon_2$ vanishes (note that ${\varepsilon_1=\varepsilon_2=0}$ if and only if the gate $U$ is dual-unitary~\cite{BKP:dualunitary, KBP:CorrelationsPerturbed}). In this case 
${K_g(t)= 1+ (a^{2t}+c^{2t})L  \delta_{t\,\text{mod}\,L} + ...}$, and the GSFF admits a closed-form expression (see Sec.~\ref{app:SWAPlike} of the SM). 
The model is chaotic when all $a,c,g$ differ from  $\pm 1$. In contrast, if the above conditions does not hold, $K_{g}(t)$ with unrestricted trace does not decay to the RMT result. This signals new commuting symmetries and possibly non-chaotic behaviour. For instance, for ${a=c=g=1}$ (corresponding to the SWAP gate) and unrestricted trace we find $K_{g}(t)|_{\rm SWAP} = 4^{{\rm gcd}(t,L)}$. Here ${\rm gcd}(t,L)$ is the greatest common divisor of $L$ and $t$. This result is manifestly larger than the RMT result. 

\begin{figure}[t!]
    \includegraphics[scale=0.55]{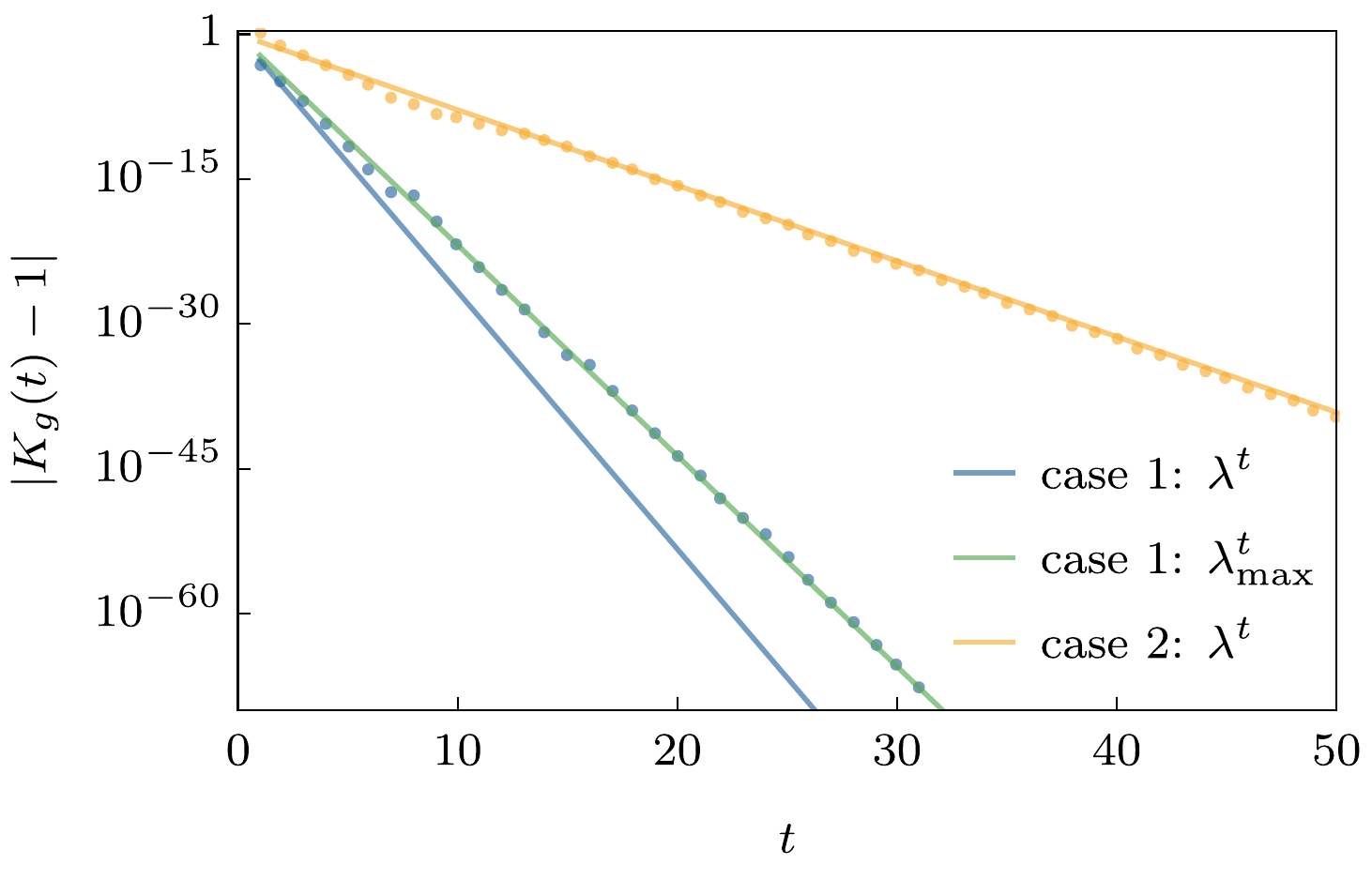}
    \caption{Symbols show the  numerical results for deviations of the spectral form factor $|K(t)-1|$ with allowed splittings and merges. 
     Solid line show the decay of the skeleton correlation functions $\lambda^t$ from \eqref{eq:lambda} and the true decay of the two-point correlation function $\lambda_{\rm max}^t$.
    In the case 2, the two-point correlation function is well described by the skeleton contributions (see \cite{KBP:CorrelationsPerturbed} for when this holds). In contrast, the case 1 exhibits a slower decay of the deviations than given by \ref{eq:lambda}, agreeing with the slower decay of the correlation functions.
    We obtained $\lambda_{\rm max}$ from direct numerical evaluation of the two-point correlation functions in infinite volume. 
The data shown is for the last two gates in Table \ref{tab:U1gates} of Sec.~\ref{app:numerics} of the SM, $L=8$.}
    \label{fig:Splittings}
\end{figure}

In the general case, when both mergers and splittings are allowed, there is a phase transition in the decay exponent of infinite volume correlations~\cite{KBP:CorrelationsPerturbed}. In particular, there is a region in parameter space (see Eq.~(41) in Ref.~\cite{KBP:CorrelationsPerturbed}) where the decay of quasiparticles is still governed by $\lambda$ in Eq.~\eqref{eq:lambda}, while for parameters out of this region the exponent changes. Moreover, all $K_g^{(m)}(t)$ will decay with the same exponent (since the number of particles can change during the time evolution, all $K_g^{(m)}(t)$ contain the slow-decaying configurations). However, this means that the decay exponent can again be determined from two-point functions of local operators and that $\tau_{\rm th}=-1/\log\lambda_{\rm max}$, where $\lambda_{\rm max}=\lim_{t\to\infty} ({\rm max}_x \braket{\mcircf_x | \mcircf_0(t) }_{\infty})^{1/t}$. This is in agreement with our numerical experiments, as shown Fig. \ref{fig:Splittings} for a representative example.

In conclusion, we studied the GSFF in a class of local quantum circuits with random fields, expressing it in terms of (averaged) dynamical correlations of local observables. By means of this correspondence we showed that in the regime where the correlations decay exponentially in time (known as ergodic and mixing in ergodicity theory) the GSFF approaches the prediction of random matrix theory over the same time-scale. Moreover, we proved that the GSFF approaches the prediction of random matrix theory also in the presence of a conservation law, if the correlations take a diffusive form. In this case the timescale is proportional to the system size squared divided by the diffusion constant. Finally, we showed that when the correlations do not decay, the GSFF does not approach the random matrix theory prediction. The correspondence between quantum chaotic and quantum ergodic and mixing regimes is expected on general grounds~\cite{KI_JPA,Mirko,Casati,LEReview} and provides an intuitive understanding of quantum chaos. Our results in a specific setting provide a rigorous proof of such a correspondence, and pave the way for its quantitative understanding in more general settings. Moreover, interpreting the U(1)-noise averaged GSFF as a state-averaged return probability for a general bistochastic brickwork Markov circuit provides an analogous correspondence in classical stochastic systems.

{\em Acknowledgement:} The work is supported by the EU Horizon 2020 program through the ERC Advanced Grant OMNES No. 694544, and by the Slovenian Research Agency (ARRS) under the Programme P1-0402. BB was also supported by the Royal Society through the University Research Fellowship No. 201102.

\onecolumngrid

\pagebreak

\newcounter{equationSM}
\newcounter{figureSM}
\newcounter{tableSM}
\stepcounter{equationSM}
\setcounter{equation}{0}
\setcounter{figure}{0}
\setcounter{table}{0}
\makeatletter
\renewcommand{\theequation}{\textsc{sm}-\arabic{equation}}
\renewcommand{\thefigure}{\textsc{sm}-\arabic{figure}}
\renewcommand{\thetable}{\textsc{sm}-\arabic{table}}

\begin{center}
{\large{\bf Supplemental Material for\\
 ``Chaos and Ergodicity in Extended Quantum Systems with Noisy Driving''}}
\end{center}
Here we report some useful information complementing the main text. In particular
\begin{itemize}

\item[-] In Section~\ref{app:ReducedGate} we show how to derive the representation~\eqref{eq:SFF0} for the Generalised Spectral Form Factor;
\item[-] In Section~\ref{app:bound} we prove Property~\ref{prop:P1};
\item[-] In Section~\ref{sec:AsyCorr} we compute the asymptotics of the averaged two-point functions of local operators in infinite volume;
\item[-] In Section~\ref{app:Markov} we compute $K_g^{(1)}(t)$ using an effective Markov operator;
\item[-] In Section~\ref{app:boundK} we establish the bound in Eq.~\eqref{eq:K2bound};
\item[-] In Section~\ref{app:SWAPlike} we compute $K_{g}(t)$ with unrestricted trace for $f=e=\varepsilon_{1}=0$;
\item[-] In Section~\ref{app:numerics} we report the parameters of the gates used in our numerical experiments;
\end{itemize}

{
\section{Derivation of Eq.~6}
\label{app:ReducedGate}
In this appendix we show how to write the GSFF in terms of reduced gates, both in the presence~\eqref{eq:redCon} and in the absence~\eqref{eq:SFF0} of conservation laws. 

We begin by writing the GSFF $K_g(t)= \braket{{\rm tr}\mathbb U(t)\,\, {\rm tr}\mathbb U^\dagger(t) }$ in the circuit representation 
\be
\label{eq:fullGSFF}
K_g(t) = \Biggl \langle \,\, \begin{tikzpicture}[baseline=(current  bounding  box.center), scale=0.49]
\foreach \i in {1,...,5}{
\draw[thick, dotted] (2*\i+2-12.5+0.255,-1.75-0.1) -- (2*\i+2-12.5+0.255,4.25-0.1);
\draw[thick, dotted] (2*\i+2-11.5-0.255,-1.75-0.1) -- (2*\i+2-11.5-0.255,4.25-0.1);}

\foreach \i in {1,...,5}{
\draw[thick] (2*\i+2-11.5,4) arc (-45:175:0.15);
\draw[thick] (2*\i+2-11.5,-2) arc (315:180:0.15);
\draw[thick] (2*\i+2-0.5-12,-2) arc (-135:0:0.15);
}
\foreach \i in {2,...,6}{
\draw[thick] (2*\i+2-2.5-12,4) arc (225:0:0.15);
}
\foreach \i in {0,1,2}{
\draw[thick, dotted] (-9.5,2*\i-1.745) -- (0.4,2*\i-1.745);
\draw[thick, dotted] (-9.5,2*\i-1.255) -- (0.4,2*\i-1.255);
}
\foreach \i in{1.5,2.5,3.5}{
\draw[thick] (0.5,2*\i-0.5-3.5) arc (45:-90:0.15);
\draw[thick] (-10+0.5+0,2*\i-0.5-3.5) arc (45:270:0.15);
}
\foreach \i in{0.5,1.5,2.5}
{
\draw[ thick] (0.5,1+2*\i-0.5-3.5) arc (-45:90:0.15);
\draw[ thick] (-10+0.5,1+2*\i-0.5-3.5) arc (315:90:0.15);
}
\foreach \jj in {0}{
\foreach \i in {0}{
\draw[thick] (-.5-2*\i,-1+\jj) -- (0.525-2*\i,0.025+\jj);
\draw[thick] (-0.525-2*\i,0.025+\jj) -- (0.5-2*\i,-1+\jj);
\draw[thick, fill=myblue1, rounded corners=2pt] (-0.25-2*\i,-0.25+\jj) rectangle (.25-2*\i,-0.75+\jj);
\draw[thick] (-2*\i,-0.35+\jj) -- (-2*\i+0.15,-0.35+\jj) -- (-2*\i+0.15,-0.5+\jj);%
}
\foreach \i in {1}{
\draw[thick] (-.5-2*\i,-1+\jj) -- (0.525-2*\i,0.025+\jj);
\draw[thick] (-0.525-2*\i,0.025+\jj) -- (0.5-2*\i,-1+\jj);
\draw[thick, fill=myblue4, rounded corners=2pt] (-0.25-2*\i,-0.25+\jj) rectangle (.25-2*\i,-0.75+\jj);
\draw[thick] (-2*\i,-0.35+\jj) -- (-2*\i+0.15,-0.35+\jj) -- (-2*\i+0.15,-0.5+\jj);%
}
\foreach \i in {2}{
\draw[thick] (-.5-2*\i,-1+\jj) -- (0.525-2*\i,0.025+\jj);
\draw[thick] (-0.525-2*\i,0.025+\jj) -- (0.5-2*\i,-1+\jj);
\draw[thick, fill=myblue3, rounded corners=2pt] (-0.25-2*\i,-0.25+\jj) rectangle (.25-2*\i,-0.75+\jj);
\draw[thick] (-2*\i,-0.35+\jj) -- (-2*\i+0.15,-0.35+\jj) -- (-2*\i+0.15,-0.5+\jj);%
}
\foreach \i in {3}{
\draw[thick] (-.5-2*\i,-1+\jj) -- (0.525-2*\i,0.025+\jj);
\draw[thick] (-0.525-2*\i,0.025+\jj) -- (0.5-2*\i,-1+\jj);
\draw[thick, fill=myblue7, rounded corners=2pt] (-0.25-2*\i,-0.25+\jj) rectangle (.25-2*\i,-0.75+\jj);
\draw[thick] (-2*\i,-0.35+\jj) -- (-2*\i+0.15,-0.35+\jj) -- (-2*\i+0.15,-0.5+\jj);%
}
\foreach \i in {4}{
\draw[thick] (-.5-2*\i,-1+\jj) -- (0.525-2*\i,0.025+\jj);
\draw[thick] (-0.525-2*\i,0.025+\jj) -- (0.5-2*\i,-1+\jj);
\draw[thick, fill=myblue6, rounded corners=2pt] (-0.25-2*\i,-0.25+\jj) rectangle (.25-2*\i,-0.75+\jj);
\draw[thick] (-2*\i,-0.35+\jj) -- (-2*\i+0.15,-0.35+\jj) -- (-2*\i+0.15,-0.5+\jj);%
}
}
\foreach \jj in {2}{
\foreach \i in {0}{
\draw[thick] (-.5-2*\i,-1+\jj) -- (0.525-2*\i,0.025+\jj);
\draw[thick] (-0.525-2*\i,0.025+\jj) -- (0.5-2*\i,-1+\jj);
\draw[thick, fill=myblue9, rounded corners=2pt] (-0.25-2*\i,-0.25+\jj) rectangle (.25-2*\i,-0.75+\jj);
\draw[thick] (-2*\i,-0.35+\jj) -- (-2*\i+0.15,-0.35+\jj) -- (-2*\i+0.15,-0.5+\jj);%
}
\foreach \i in {1}{
\draw[thick] (-.5-2*\i,-1+\jj) -- (0.525-2*\i,0.025+\jj);
\draw[thick] (-0.525-2*\i,0.025+\jj) -- (0.5-2*\i,-1+\jj);
\draw[thick, fill=myblue10, rounded corners=2pt] (-0.25-2*\i,-0.25+\jj) rectangle (.25-2*\i,-0.75+\jj);
\draw[thick] (-2*\i,-0.35+\jj) -- (-2*\i+0.15,-0.35+\jj) -- (-2*\i+0.15,-0.5+\jj);%
}
\foreach \i in {2}{
\draw[thick] (-.5-2*\i,-1+\jj) -- (0.525-2*\i,0.025+\jj);
\draw[thick] (-0.525-2*\i,0.025+\jj) -- (0.5-2*\i,-1+\jj);
\draw[thick, fill=myblue2, rounded corners=2pt] (-0.25-2*\i,-0.25+\jj) rectangle (.25-2*\i,-0.75+\jj);
\draw[thick] (-2*\i,-0.35+\jj) -- (-2*\i+0.15,-0.35+\jj) -- (-2*\i+0.15,-0.5+\jj);%
}
\foreach \i in {3}{
\draw[thick] (-.5-2*\i,-1+\jj) -- (0.525-2*\i,0.025+\jj);
\draw[thick] (-0.525-2*\i,0.025+\jj) -- (0.5-2*\i,-1+\jj);
\draw[thick, fill=myblue6, rounded corners=2pt] (-0.25-2*\i,-0.25+\jj) rectangle (.25-2*\i,-0.75+\jj);
\draw[thick] (-2*\i,-0.35+\jj) -- (-2*\i+0.15,-0.35+\jj) -- (-2*\i+0.15,-0.5+\jj);%
}
\foreach \i in {4}{
\draw[thick] (-.5-2*\i,-1+\jj) -- (0.525-2*\i,0.025+\jj);
\draw[thick] (-0.525-2*\i,0.025+\jj) -- (0.5-2*\i,-1+\jj);
\draw[thick, fill=myblue7, rounded corners=2pt] (-0.25-2*\i,-0.25+\jj) rectangle (.25-2*\i,-0.75+\jj);
\draw[thick] (-2*\i,-0.35+\jj) -- (-2*\i+0.15,-0.35+\jj) -- (-2*\i+0.15,-0.5+\jj);%
}
}
\foreach \jj in {4}{
\foreach \i in {0}{
\draw[thick] (-.5-2*\i,-1+\jj) -- (0.525-2*\i,0.025+\jj);
\draw[thick] (-0.525-2*\i,0.025+\jj) -- (0.5-2*\i,-1+\jj);
\draw[thick, fill=myblue4, rounded corners=2pt] (-0.25-2*\i,-0.25+\jj) rectangle (.25-2*\i,-0.75+\jj);
\draw[thick] (-2*\i,-0.35+\jj) -- (-2*\i+0.15,-0.35+\jj) -- (-2*\i+0.15,-0.5+\jj);%
}
\foreach \i in {1}{
\draw[thick] (-.5-2*\i,-1+\jj) -- (0.525-2*\i,0.025+\jj);
\draw[thick] (-0.525-2*\i,0.025+\jj) -- (0.5-2*\i,-1+\jj);
\draw[thick, fill=myblue3, rounded corners=2pt] (-0.25-2*\i,-0.25+\jj) rectangle (.25-2*\i,-0.75+\jj);
\draw[thick] (-2*\i,-0.35+\jj) -- (-2*\i+0.15,-0.35+\jj) -- (-2*\i+0.15,-0.5+\jj);%
}
\foreach \i in {2}{
\draw[thick] (-.5-2*\i,-1+\jj) -- (0.525-2*\i,0.025+\jj);
\draw[thick] (-0.525-2*\i,0.025+\jj) -- (0.5-2*\i,-1+\jj);
\draw[thick, fill=myblue2, rounded corners=2pt] (-0.25-2*\i,-0.25+\jj) rectangle (.25-2*\i,-0.75+\jj);
\draw[thick] (-2*\i,-0.35+\jj) -- (-2*\i+0.15,-0.35+\jj) -- (-2*\i+0.15,-0.5+\jj);%
}
\foreach \i in {3}{
\draw[thick] (-.5-2*\i,-1+\jj) -- (0.525-2*\i,0.025+\jj);
\draw[thick] (-0.525-2*\i,0.025+\jj) -- (0.5-2*\i,-1+\jj);
\draw[thick, fill=myblue5, rounded corners=2pt] (-0.25-2*\i,-0.25+\jj) rectangle (.25-2*\i,-0.75+\jj);
\draw[thick] (-2*\i,-0.35+\jj) -- (-2*\i+0.15,-0.35+\jj) -- (-2*\i+0.15,-0.5+\jj);%
}
\foreach \i in {4}{
\draw[thick] (-.5-2*\i,-1+\jj) -- (0.525-2*\i,0.025+\jj);
\draw[thick] (-0.525-2*\i,0.025+\jj) -- (0.5-2*\i,-1+\jj);
\draw[thick, fill=myblue10, rounded corners=2pt] (-0.25-2*\i,-0.25+\jj) rectangle (.25-2*\i,-0.75+\jj);
\draw[thick] (-2*\i,-0.35+\jj) -- (-2*\i+0.15,-0.35+\jj) -- (-2*\i+0.15,-0.5+\jj);%
}
}
\foreach \jj[evaluate=\jj as \j using -2*(ceil(\jj/2)-\jj/2)] in {0}{
\foreach \i in {1}
{
\draw[thick] (.5-2*\i-1*\j,-2-1*\jj) -- (1-2*\i-1*\j,-1.5-\jj);
\draw[thick] (1-2*\i-1*\j,-1.5-1*\jj) -- (1.5-2*\i-1*\j,-2-\jj);
\draw[thick] (.5-2*\i-1*\j,-1-1*\jj) -- (1-2*\i-1*\j,-1.5-\jj);
\draw[thick] (1-2*\i-1*\j,-1.5-1*\jj) -- (1.5-2*\i-1*\j,-1-\jj);
\draw[thick, fill=myblue5, rounded corners=2pt] (0.75-2*\i-1*\j,-1.75-\jj) rectangle (1.25-2*\i-1*\j,-1.25-\jj);
\draw[thick] (-2*\i+1,-1.35-\jj) -- (-2*\i+1.15,-1.35-\jj) -- (-2*\i+1.15,-1.5-\jj);%
}
\foreach \i in {2}
{
\draw[thick] (.5-2*\i-1*\j,-2-1*\jj) -- (1-2*\i-1*\j,-1.5-\jj);
\draw[thick] (1-2*\i-1*\j,-1.5-1*\jj) -- (1.5-2*\i-1*\j,-2-\jj);
\draw[thick] (.5-2*\i-1*\j,-1-1*\jj) -- (1-2*\i-1*\j,-1.5-\jj);
\draw[thick] (1-2*\i-1*\j,-1.5-1*\jj) -- (1.5-2*\i-1*\j,-1-\jj);
\draw[thick, fill=myblue4, rounded corners=2pt] (0.75-2*\i-1*\j,-1.75-\jj) rectangle (1.25-2*\i-1*\j,-1.25-\jj);
\draw[thick] (-2*\i+1,-1.35-\jj) -- (-2*\i+1.15,-1.35-\jj) -- (-2*\i+1.15,-1.5-\jj);%
}
\foreach \i in {3}
{
\draw[thick] (.5-2*\i-1*\j,-2-1*\jj) -- (1-2*\i-1*\j,-1.5-\jj);
\draw[thick] (1-2*\i-1*\j,-1.5-1*\jj) -- (1.5-2*\i-1*\j,-2-\jj);
\draw[thick] (.5-2*\i-1*\j,-1-1*\jj) -- (1-2*\i-1*\j,-1.5-\jj);
\draw[thick] (1-2*\i-1*\j,-1.5-1*\jj) -- (1.5-2*\i-1*\j,-1-\jj);
\draw[thick, fill=myblue3, rounded corners=2pt] (0.75-2*\i-1*\j,-1.75-\jj) rectangle (1.25-2*\i-1*\j,-1.25-\jj);
\draw[thick] (-2*\i+1,-1.35-\jj) -- (-2*\i+1.15,-1.35-\jj) -- (-2*\i+1.15,-1.5-\jj);%
}
\foreach \i in {4}
{
\draw[thick] (.5-2*\i-1*\j,-2-1*\jj) -- (1-2*\i-1*\j,-1.5-\jj);
\draw[thick] (1-2*\i-1*\j,-1.5-1*\jj) -- (1.5-2*\i-1*\j,-2-\jj);
\draw[thick] (.5-2*\i-1*\j,-1-1*\jj) -- (1-2*\i-1*\j,-1.5-\jj);
\draw[thick] (1-2*\i-1*\j,-1.5-1*\jj) -- (1.5-2*\i-1*\j,-1-\jj);
\draw[thick, fill=myblue2, rounded corners=2pt] (0.75-2*\i-1*\j,-1.75-\jj) rectangle (1.25-2*\i-1*\j,-1.25-\jj);
\draw[thick] (-2*\i+1,-1.35-\jj) -- (-2*\i+1.15,-1.35-\jj) -- (-2*\i+1.15,-1.5-\jj);%
}
\foreach \i in {5}
{
\draw[thick] (.5-2*\i-1*\j,-2-1*\jj) -- (1-2*\i-1*\j,-1.5-\jj);
\draw[thick] (1-2*\i-1*\j,-1.5-1*\jj) -- (1.5-2*\i-1*\j,-2-\jj);
\draw[thick] (.5-2*\i-1*\j,-1-1*\jj) -- (1-2*\i-1*\j,-1.5-\jj);
\draw[thick] (1-2*\i-1*\j,-1.5-1*\jj) -- (1.5-2*\i-1*\j,-1-\jj);
\draw[thick, fill=myblue1, rounded corners=2pt] (0.75-2*\i-1*\j,-1.75-\jj) rectangle (1.25-2*\i-1*\j,-1.25-\jj);
\draw[thick] (-2*\i+1,-1.35-\jj) -- (-2*\i+1.15,-1.35-\jj) -- (-2*\i+1.15,-1.5-\jj);%
}
}
\foreach \jj[evaluate=\jj as \j using -2*(ceil(\jj/2)-\jj/2)] in {-2}{
\foreach \i in {1}
{
\draw[thick] (.5-2*\i-1*\j,-2-1*\jj) -- (1-2*\i-1*\j,-1.5-\jj);
\draw[thick] (1-2*\i-1*\j,-1.5-1*\jj) -- (1.5-2*\i-1*\j,-2-\jj);
\draw[thick] (.5-2*\i-1*\j,-1-1*\jj) -- (1-2*\i-1*\j,-1.5-\jj);
\draw[thick] (1-2*\i-1*\j,-1.5-1*\jj) -- (1.5-2*\i-1*\j,-1-\jj);
\draw[thick, fill=myblue8, rounded corners=2pt] (0.75-2*\i-1*\j,-1.75-\jj) rectangle (1.25-2*\i-1*\j,-1.25-\jj);
\draw[thick] (-2*\i+1,-1.35-\jj) -- (-2*\i+1.15,-1.35-\jj) -- (-2*\i+1.15,-1.5-\jj);%
}
\foreach \i in {2}
{
\draw[thick] (.5-2*\i-1*\j,-2-1*\jj) -- (1-2*\i-1*\j,-1.5-\jj);
\draw[thick] (1-2*\i-1*\j,-1.5-1*\jj) -- (1.5-2*\i-1*\j,-2-\jj);
\draw[thick] (.5-2*\i-1*\j,-1-1*\jj) -- (1-2*\i-1*\j,-1.5-\jj);
\draw[thick] (1-2*\i-1*\j,-1.5-1*\jj) -- (1.5-2*\i-1*\j,-1-\jj);
\draw[thick, fill=myblue10, rounded corners=2pt] (0.75-2*\i-1*\j,-1.75-\jj) rectangle (1.25-2*\i-1*\j,-1.25-\jj);
\draw[thick] (-2*\i+1,-1.35-\jj) -- (-2*\i+1.15,-1.35-\jj) -- (-2*\i+1.15,-1.5-\jj);%
}
\foreach \i in {3}
{
\draw[thick] (.5-2*\i-1*\j,-2-1*\jj) -- (1-2*\i-1*\j,-1.5-\jj);
\draw[thick] (1-2*\i-1*\j,-1.5-1*\jj) -- (1.5-2*\i-1*\j,-2-\jj);
\draw[thick] (.5-2*\i-1*\j,-1-1*\jj) -- (1-2*\i-1*\j,-1.5-\jj);
\draw[thick] (1-2*\i-1*\j,-1.5-1*\jj) -- (1.5-2*\i-1*\j,-1-\jj);
\draw[thick, fill=myblue6, rounded corners=2pt] (0.75-2*\i-1*\j,-1.75-\jj) rectangle (1.25-2*\i-1*\j,-1.25-\jj);
\draw[thick] (-2*\i+1,-1.35-\jj) -- (-2*\i+1.15,-1.35-\jj) -- (-2*\i+1.15,-1.5-\jj);%
}
\foreach \i in {4}
{
\draw[thick] (.5-2*\i-1*\j,-2-1*\jj) -- (1-2*\i-1*\j,-1.5-\jj);
\draw[thick] (1-2*\i-1*\j,-1.5-1*\jj) -- (1.5-2*\i-1*\j,-2-\jj);
\draw[thick] (.5-2*\i-1*\j,-1-1*\jj) -- (1-2*\i-1*\j,-1.5-\jj);
\draw[thick] (1-2*\i-1*\j,-1.5-1*\jj) -- (1.5-2*\i-1*\j,-1-\jj);
\draw[thick, fill=myblue9, rounded corners=2pt] (0.75-2*\i-1*\j,-1.75-\jj) rectangle (1.25-2*\i-1*\j,-1.25-\jj);
\draw[thick] (-2*\i+1,-1.35-\jj) -- (-2*\i+1.15,-1.35-\jj) -- (-2*\i+1.15,-1.5-\jj);%
}
\foreach \i in {5}
{
\draw[thick] (.5-2*\i-1*\j,-2-1*\jj) -- (1-2*\i-1*\j,-1.5-\jj);
\draw[thick] (1-2*\i-1*\j,-1.5-1*\jj) -- (1.5-2*\i-1*\j,-2-\jj);
\draw[thick] (.5-2*\i-1*\j,-1-1*\jj) -- (1-2*\i-1*\j,-1.5-\jj);
\draw[thick] (1-2*\i-1*\j,-1.5-1*\jj) -- (1.5-2*\i-1*\j,-1-\jj);
\draw[thick, fill=myblue1, rounded corners=2pt] (0.75-2*\i-1*\j,-1.75-\jj) rectangle (1.25-2*\i-1*\j,-1.25-\jj);
\draw[thick] (-2*\i+1,-1.35-\jj) -- (-2*\i+1.15,-1.35-\jj) -- (-2*\i+1.15,-1.5-\jj);%
}
}
\foreach \jj[evaluate=\jj as \j using -2*(ceil(\jj/2)-\jj/2)] in {-4}{
\foreach \i in {1}
{
\draw[thick] (.5-2*\i-1*\j,-2-1*\jj) -- (1-2*\i-1*\j,-1.5-\jj);
\draw[thick] (1-2*\i-1*\j,-1.5-1*\jj) -- (1.5-2*\i-1*\j,-2-\jj);
\draw[thick] (.5-2*\i-1*\j,-1-1*\jj) -- (1-2*\i-1*\j,-1.5-\jj);
\draw[thick] (1-2*\i-1*\j,-1.5-1*\jj) -- (1.5-2*\i-1*\j,-1-\jj);
\draw[thick, fill=myblue6, rounded corners=2pt] (0.75-2*\i-1*\j,-1.75-\jj) rectangle (1.25-2*\i-1*\j,-1.25-\jj);
\draw[thick] (-2*\i+1,-1.35-\jj) -- (-2*\i+1.15,-1.35-\jj) -- (-2*\i+1.15,-1.5-\jj);%
}
\foreach \i in {2}
{
\draw[thick] (.5-2*\i-1*\j,-2-1*\jj) -- (1-2*\i-1*\j,-1.5-\jj);
\draw[thick] (1-2*\i-1*\j,-1.5-1*\jj) -- (1.5-2*\i-1*\j,-2-\jj);
\draw[thick] (.5-2*\i-1*\j,-1-1*\jj) -- (1-2*\i-1*\j,-1.5-\jj);
\draw[thick] (1-2*\i-1*\j,-1.5-1*\jj) -- (1.5-2*\i-1*\j,-1-\jj);
\draw[thick, fill=myblue3, rounded corners=2pt] (0.75-2*\i-1*\j,-1.75-\jj) rectangle (1.25-2*\i-1*\j,-1.25-\jj);
\draw[thick] (-2*\i+1,-1.35-\jj) -- (-2*\i+1.15,-1.35-\jj) -- (-2*\i+1.15,-1.5-\jj);%
}
\foreach \i in {3}
{
\draw[thick] (.5-2*\i-1*\j,-2-1*\jj) -- (1-2*\i-1*\j,-1.5-\jj);
\draw[thick] (1-2*\i-1*\j,-1.5-1*\jj) -- (1.5-2*\i-1*\j,-2-\jj);
\draw[thick] (.5-2*\i-1*\j,-1-1*\jj) -- (1-2*\i-1*\j,-1.5-\jj);
\draw[thick] (1-2*\i-1*\j,-1.5-1*\jj) -- (1.5-2*\i-1*\j,-1-\jj);
\draw[thick, fill=myblue7, rounded corners=2pt] (0.75-2*\i-1*\j,-1.75-\jj) rectangle (1.25-2*\i-1*\j,-1.25-\jj);
\draw[thick] (-2*\i+1,-1.35-\jj) -- (-2*\i+1.15,-1.35-\jj) -- (-2*\i+1.15,-1.5-\jj);%
}
\foreach \i in {4}
{
\draw[thick] (.5-2*\i-1*\j,-2-1*\jj) -- (1-2*\i-1*\j,-1.5-\jj);
\draw[thick] (1-2*\i-1*\j,-1.5-1*\jj) -- (1.5-2*\i-1*\j,-2-\jj);
\draw[thick] (.5-2*\i-1*\j,-1-1*\jj) -- (1-2*\i-1*\j,-1.5-\jj);
\draw[thick] (1-2*\i-1*\j,-1.5-1*\jj) -- (1.5-2*\i-1*\j,-1-\jj);
\draw[thick, fill=myblue2, rounded corners=2pt] (0.75-2*\i-1*\j,-1.75-\jj) rectangle (1.25-2*\i-1*\j,-1.25-\jj);
\draw[thick] (-2*\i+1,-1.35-\jj) -- (-2*\i+1.15,-1.35-\jj) -- (-2*\i+1.15,-1.5-\jj);%
}
\foreach \i in {5}
{
\draw[thick] (.5-2*\i-1*\j,-2-1*\jj) -- (1-2*\i-1*\j,-1.5-\jj);
\draw[thick] (1-2*\i-1*\j,-1.5-1*\jj) -- (1.5-2*\i-1*\j,-2-\jj);
\draw[thick] (.5-2*\i-1*\j,-1-1*\jj) -- (1-2*\i-1*\j,-1.5-\jj);
\draw[thick] (1-2*\i-1*\j,-1.5-1*\jj) -- (1.5-2*\i-1*\j,-1-\jj);
\draw[thick, fill=myblue7, rounded corners=2pt] (0.75-2*\i-1*\j,-1.75-\jj) rectangle (1.25-2*\i-1*\j,-1.25-\jj);
\draw[thick] (-2*\i+1,-1.35-\jj) -- (-2*\i+1.15,-1.35-\jj) -- (-2*\i+1.15,-1.5-\jj);%
}
}
\def\shiftx{0}
\def\shifty{-11}
\begin{scope}[yshift=+2.15cm]
\foreach \i in {1,...,5}{
\draw[ thick, dotted] (2*\i+2-1.485+0.25+\shiftx-11,-2.5-0.1+\shifty+2.5) -- (2*\i+2-1.485+0.25+\shiftx-11,3.5-0.1+\shifty+2.5);
\draw[ thick, dotted] (2*\i+2-0.525-0.25+\shiftx-11,-2.5-0.1+\shifty+2.5) -- (2*\i+2-0.525-0.25+\shiftx-11,3.5-0.1+\shifty+2.5);
}
\foreach \i in {0,1,2}{
\draw[ thick, dotted] (1.75+\shiftx-11,2*\i-1.25+\shifty+2.5) -- (11.5+\shiftx-11,2*\i-1.25+\shifty+2.5);
\draw[ thick, dotted] (1.5+\shiftx-11,2*\i-.76+\shifty+2.5) -- (11.5+\shiftx-11,2*\i-.76+\shifty+2.5);
}
\foreach \i in {1,...,5}
{
\draw[ thick] (2*\i+2-1.5-11+\shiftx,3.5+\shifty+2.5) arc (135:-0:0.15);
\draw[ thick] (2*\i+2-.5-11+\shiftx,3.5+\shifty+2.5) arc (-325:-180:0.15);
\draw[ thick] (2*\i+2-1.5-11+\shiftx,-2.5+\shifty+2.5) arc (-45:180:-0.15);
\draw[ thick] (2*\i+2-0.5-11+\shiftx,-2.5+\shifty+2.5) arc (45:-180:0.15);
}
\foreach \i in {3,...,5}
{
\draw[ thick] (\shiftx+.5,2*\i-0.5-3.5+\shifty) arc (45:-90:0.15);
\draw[ thick] (\shiftx-10+0.5+0,2*\i-0.5-3.5+\shifty) arc (45:280:0.15);
}
\foreach \i in{2,...,4}
{
\draw[ thick] (\shiftx+.5,1+2*\i-0.5-3.5+\shifty) arc (-45:90:0.15);
\draw[ thick] (\shiftx-10+0.5,1+2*\i-0.5-3.5+\shifty) arc (-45:-280:0.15);
}
\foreach \i in {1,...,5}
{
\draw[ thick] (\shiftx+.5-2*\i,6+\shifty) -- (\shiftx+1-2*\i,5.5+\shifty);
\draw[ thick] (\shiftx+1.5-2*\i,6+\shifty) -- (\shiftx+1-2*\i,5.5+\shifty);
}
\foreach \jj[evaluate=\jj as \j using -2*(ceil(\jj/2)-\jj/2)] in {0,...,3}
\foreach \i in {1,...,5}
{
\draw[ thick] (\shiftx+.5-2*\i-1*\j,2+1*\jj+\shifty) -- (\shiftx+1-2*\i-1*\j,1.5+\jj+\shifty);
\draw[ thick] (\shiftx+1-2*\i-1*\j,1.5+1*\jj+\shifty) -- (\shiftx+1.5-2*\i-1*\j,2+\jj+\shifty);
}
\def\shiftx{-1}
\def\shifty{-9}
\foreach \jj in {0+\shifty}{
\foreach \i in {0+\shiftx}{
\draw[thick] (-.5-2*\i,-1+\jj) -- (0.525-2*\i,0.025+\jj);
\draw[thick] (-0.525-2*\i,0.025+\jj) -- (0.5-2*\i,-1+\jj);
\draw[thick, fill=myred1, rounded corners=2pt] (-0.25-2*\i,-0.25+\jj) rectangle (.25-2*\i,-0.75+\jj);
\draw[thick] (-2*\i,-0.35+\jj) -- (-2*\i+0.15,-0.35+\jj) -- (-2*\i+0.15,-0.5+\jj);%
}
\foreach \i in {1+\shiftx}{
\draw[thick] (-.5-2*\i,-1+\jj) -- (0.525-2*\i,0.025+\jj);
\draw[thick] (-0.525-2*\i,0.025+\jj) -- (0.5-2*\i,-1+\jj);
\draw[thick, fill=myred4, rounded corners=2pt] (-0.25-2*\i,-0.25+\jj) rectangle (.25-2*\i,-0.75+\jj);
\draw[thick] (-2*\i,-0.35+\jj) -- (-2*\i+0.15,-0.35+\jj) -- (-2*\i+0.15,-0.5+\jj);%
}
\foreach \i in {2+\shiftx}{
\draw[thick] (-.5-2*\i,-1+\jj) -- (0.525-2*\i,0.025+\jj);
\draw[thick] (-0.525-2*\i,0.025+\jj) -- (0.5-2*\i,-1+\jj);
\draw[thick, fill=myred3, rounded corners=2pt] (-0.25-2*\i,-0.25+\jj) rectangle (.25-2*\i,-0.75+\jj);
\draw[thick] (-2*\i,-0.35+\jj) -- (-2*\i+0.15,-0.35+\jj) -- (-2*\i+0.15,-0.5+\jj);%
}
\foreach \i in {3+\shiftx}{
\draw[thick] (-.5-2*\i,-1+\jj) -- (0.525-2*\i,0.025+\jj);
\draw[thick] (-0.525-2*\i,0.025+\jj) -- (0.5-2*\i,-1+\jj);
\draw[thick, fill=myred7, rounded corners=2pt] (-0.25-2*\i,-0.25+\jj) rectangle (.25-2*\i,-0.75+\jj);
\draw[thick] (-2*\i,-0.35+\jj) -- (-2*\i+0.15,-0.35+\jj) -- (-2*\i+0.15,-0.5+\jj);%
}
\foreach \i in {4+\shiftx}{
\draw[thick] (-.5-2*\i,-1+\jj) -- (0.525-2*\i,0.025+\jj);
\draw[thick] (-0.525-2*\i,0.025+\jj) -- (0.5-2*\i,-1+\jj);
\draw[thick, fill=myred6, rounded corners=2pt] (-0.25-2*\i,-0.25+\jj) rectangle (.25-2*\i,-0.75+\jj);
\draw[thick] (-2*\i,-0.35+\jj) -- (-2*\i+0.15,-0.35+\jj) -- (-2*\i+0.15,-0.5+\jj);%
}
}
\foreach \jj in {2+\shifty}{
\foreach \i in {0+\shiftx}{
\draw[thick] (-.5-2*\i,-1+\jj) -- (0.525-2*\i,0.025+\jj);
\draw[thick] (-0.525-2*\i,0.025+\jj) -- (0.5-2*\i,-1+\jj);
\draw[thick, fill=myred9, rounded corners=2pt] (-0.25-2*\i,-0.25+\jj) rectangle (.25-2*\i,-0.75+\jj);
\draw[thick] (-2*\i,-0.35+\jj) -- (-2*\i+0.15,-0.35+\jj) -- (-2*\i+0.15,-0.5+\jj);%
}
\foreach \i in {1+\shiftx}{
\draw[thick] (-.5-2*\i,-1+\jj) -- (0.525-2*\i,0.025+\jj);
\draw[thick] (-0.525-2*\i,0.025+\jj) -- (0.5-2*\i,-1+\jj);
\draw[thick, fill=myred10, rounded corners=2pt] (-0.25-2*\i,-0.25+\jj) rectangle (.25-2*\i,-0.75+\jj);
\draw[thick] (-2*\i,-0.35+\jj) -- (-2*\i+0.15,-0.35+\jj) -- (-2*\i+0.15,-0.5+\jj);%
}
\foreach \i in {2+\shiftx}{
\draw[thick] (-.5-2*\i,-1+\jj) -- (0.525-2*\i,0.025+\jj);
\draw[thick] (-0.525-2*\i,0.025+\jj) -- (0.5-2*\i,-1+\jj);
\draw[thick, fill=myred2, rounded corners=2pt] (-0.25-2*\i,-0.25+\jj) rectangle (.25-2*\i,-0.75+\jj);
\draw[thick] (-2*\i,-0.35+\jj) -- (-2*\i+0.15,-0.35+\jj) -- (-2*\i+0.15,-0.5+\jj);%
}
\foreach \i in {3+\shiftx}{
\draw[thick] (-.5-2*\i,-1+\jj) -- (0.525-2*\i,0.025+\jj);
\draw[thick] (-0.525-2*\i,0.025+\jj) -- (0.5-2*\i,-1+\jj);
\draw[thick, fill=myred6, rounded corners=2pt] (-0.25-2*\i,-0.25+\jj) rectangle (.25-2*\i,-0.75+\jj);
\draw[thick] (-2*\i,-0.35+\jj) -- (-2*\i+0.15,-0.35+\jj) -- (-2*\i+0.15,-0.5+\jj);%
}
\foreach \i in {4+\shiftx}{
\draw[thick] (-.5-2*\i,-1+\jj) -- (0.525-2*\i,0.025+\jj);
\draw[thick] (-0.525-2*\i,0.025+\jj) -- (0.5-2*\i,-1+\jj);
\draw[thick, fill=myred7, rounded corners=2pt] (-0.25-2*\i,-0.25+\jj) rectangle (.25-2*\i,-0.75+\jj);
\draw[thick] (-2*\i,-0.35+\jj) -- (-2*\i+0.15,-0.35+\jj) -- (-2*\i+0.15,-0.5+\jj);%
}
}
\foreach \jj in {4+\shifty}{
\foreach \i in {0+\shiftx}{
\draw[thick] (-.5-2*\i,-1+\jj) -- (0.525-2*\i,0.025+\jj);
\draw[thick] (-0.525-2*\i,0.025+\jj) -- (0.5-2*\i,-1+\jj);
\draw[thick, fill=myred4, rounded corners=2pt] (-0.25-2*\i,-0.25+\jj) rectangle (.25-2*\i,-0.75+\jj);
\draw[thick] (-2*\i,-0.35+\jj) -- (-2*\i+0.15,-0.35+\jj) -- (-2*\i+0.15,-0.5+\jj);%
}
\foreach \i in {1+\shiftx}{
\draw[thick] (-.5-2*\i,-1+\jj) -- (0.525-2*\i,0.025+\jj);
\draw[thick] (-0.525-2*\i,0.025+\jj) -- (0.5-2*\i,-1+\jj);
\draw[thick, fill=myred3, rounded corners=2pt] (-0.25-2*\i,-0.25+\jj) rectangle (.25-2*\i,-0.75+\jj);
\draw[thick] (-2*\i,-0.35+\jj) -- (-2*\i+0.15,-0.35+\jj) -- (-2*\i+0.15,-0.5+\jj);%
}
\foreach \i in {2+\shiftx}{
\draw[thick] (-.5-2*\i,-1+\jj) -- (0.525-2*\i,0.025+\jj);
\draw[thick] (-0.525-2*\i,0.025+\jj) -- (0.5-2*\i,-1+\jj);
\draw[thick, fill=myred2, rounded corners=2pt] (-0.25-2*\i,-0.25+\jj) rectangle (.25-2*\i,-0.75+\jj);
\draw[thick] (-2*\i,-0.35+\jj) -- (-2*\i+0.15,-0.35+\jj) -- (-2*\i+0.15,-0.5+\jj);%
}
\foreach \i in {3+\shiftx}{
\draw[thick] (-.5-2*\i,-1+\jj) -- (0.525-2*\i,0.025+\jj);
\draw[thick] (-0.525-2*\i,0.025+\jj) -- (0.5-2*\i,-1+\jj);
\draw[thick, fill=myred5, rounded corners=2pt] (-0.25-2*\i,-0.25+\jj) rectangle (.25-2*\i,-0.75+\jj);
\draw[thick] (-2*\i,-0.35+\jj) -- (-2*\i+0.15,-0.35+\jj) -- (-2*\i+0.15,-0.5+\jj);%
}
\foreach \i in {4+\shiftx}{
\draw[thick] (-.5-2*\i,-1+\jj) -- (0.525-2*\i,0.025+\jj);
\draw[thick] (-0.525-2*\i,0.025+\jj) -- (0.5-2*\i,-1+\jj);
\draw[thick, fill=myred10, rounded corners=2pt] (-0.25-2*\i,-0.25+\jj) rectangle (.25-2*\i,-0.75+\jj);
\draw[thick] (-2*\i,-0.35+\jj) -- (-2*\i+0.15,-0.35+\jj) -- (-2*\i+0.15,-0.5+\jj);%
}
}
\def\shiftx{1}
\def\shifty{-9}
\foreach \jj[evaluate=\jj as \j using -2*(ceil(\jj/2)-\jj/2)] in {0+\shifty}{
\foreach \i in {1+\shiftx}
{
\draw[thick] (.5-2*\i,-2-1*\jj) -- (1-2*\i,-1.5-\jj);
\draw[thick] (1-2*\i,-1.5-1*\jj) -- (1.5-2*\i,-2-\jj);
\draw[thick] (.5-2*\i,-1-1*\jj) -- (1-2*\i,-1.5-\jj);
\draw[thick] (1-2*\i,-1.5-1*\jj) -- (1.5-2*\i,-1-\jj);
\draw[thick, fill=myred5, rounded corners=2pt] (0.75-2*\i,-1.75-\jj) rectangle (1.25-2*\i,-1.25-\jj);
\draw[thick] (-2*\i+1,-1.35-\jj) -- (-2*\i+1.15,-1.35-\jj) -- (-2*\i+1.15,-1.5-\jj);%
}
\foreach \i in {2+\shiftx}
{
\draw[thick] (.5-2*\i,-2-1*\jj) -- (1-2*\i,-1.5-\jj);
\draw[thick] (1-2*\i,-1.5-1*\jj) -- (1.5-2*\i,-2-\jj);
\draw[thick] (.5-2*\i,-1-1*\jj) -- (1-2*\i,-1.5-\jj);
\draw[thick] (1-2*\i,-1.5-1*\jj) -- (1.5-2*\i,-1-\jj);
\draw[thick, fill=myred4, rounded corners=2pt] (0.75-2*\i,-1.75-\jj) rectangle (1.25-2*\i,-1.25-\jj);
\draw[thick] (-2*\i+1,-1.35-\jj) -- (-2*\i+1.15,-1.35-\jj) -- (-2*\i+1.15,-1.5-\jj);%
}
\foreach \i in {3+\shiftx}
{
\draw[thick] (.5-2*\i,-2-1*\jj) -- (1-2*\i,-1.5-\jj);
\draw[thick] (1-2*\i,-1.5-1*\jj) -- (1.5-2*\i,-2-\jj);
\draw[thick] (.5-2*\i,-1-1*\jj) -- (1-2*\i,-1.5-\jj);
\draw[thick] (1-2*\i,-1.5-1*\jj) -- (1.5-2*\i,-1-\jj);
\draw[thick, fill=myred3, rounded corners=2pt] (0.75-2*\i,-1.75-\jj) rectangle (1.25-2*\i,-1.25-\jj);
\draw[thick] (-2*\i+1,-1.35-\jj) -- (-2*\i+1.15,-1.35-\jj) -- (-2*\i+1.15,-1.5-\jj);%
}
\foreach \i in {4+\shiftx}
{
\draw[thick] (.5-2*\i,-2-1*\jj) -- (1-2*\i,-1.5-\jj);
\draw[thick] (1-2*\i,-1.5-1*\jj) -- (1.5-2*\i,-2-\jj);
\draw[thick] (.5-2*\i,-1-1*\jj) -- (1-2*\i,-1.5-\jj);
\draw[thick] (1-2*\i,-1.5-1*\jj) -- (1.5-2*\i,-1-\jj);
\draw[thick, fill=myred2, rounded corners=2pt] (0.75-2*\i,-1.75-\jj) rectangle (1.25-2*\i,-1.25-\jj);
\draw[thick] (-2*\i+1,-1.35-\jj) -- (-2*\i+1.15,-1.35-\jj) -- (-2*\i+1.15,-1.5-\jj);%
}
\foreach \i in {5+\shiftx}
{
\draw[thick] (.5-2*\i,-2-1*\jj) -- (1-2*\i,-1.5-\jj);
\draw[thick] (1-2*\i,-1.5-1*\jj) -- (1.5-2*\i,-2-\jj);
\draw[thick] (.5-2*\i,-1-1*\jj) -- (1-2*\i,-1.5-\jj);
\draw[thick] (1-2*\i,-1.5-1*\jj) -- (1.5-2*\i,-1-\jj);
\draw[thick, fill=myred1, rounded corners=2pt] (0.75-2*\i,-1.75-\jj) rectangle (1.25-2*\i,-1.25-\jj);
\draw[thick] (-2*\i+1,-1.35-\jj) -- (-2*\i+1.15,-1.35-\jj) -- (-2*\i+1.15,-1.5-\jj);%
}
}
\foreach \jj[evaluate=\jj as \j using -2*(ceil(\jj/2)-\jj/2)] in {-2+\shifty}{
\foreach \i in {1+\shiftx}
{
\draw[thick] (.5-2*\i,-2-1*\jj) -- (1-2*\i,-1.5-\jj);
\draw[thick] (1-2*\i,-1.5-1*\jj) -- (1.5-2*\i,-2-\jj);
\draw[thick] (.5-2*\i,-1-1*\jj) -- (1-2*\i,-1.5-\jj);
\draw[thick] (1-2*\i,-1.5-1*\jj) -- (1.5-2*\i,-1-\jj);
\draw[thick, fill=myred8, rounded corners=2pt] (0.75-2*\i,-1.75-\jj) rectangle (1.25-2*\i,-1.25-\jj);
\draw[thick] (-2*\i+1,-1.35-\jj) -- (-2*\i+1.15,-1.35-\jj) -- (-2*\i+1.15,-1.5-\jj);%
}
\foreach \i in {2+\shiftx}
{
\draw[thick] (.5-2*\i,-2-1*\jj) -- (1-2*\i,-1.5-\jj);
\draw[thick] (1-2*\i,-1.5-1*\jj) -- (1.5-2*\i,-2-\jj);
\draw[thick] (.5-2*\i,-1-1*\jj) -- (1-2*\i,-1.5-\jj);
\draw[thick] (1-2*\i,-1.5-1*\jj) -- (1.5-2*\i,-1-\jj);
\draw[thick, fill=myred10, rounded corners=2pt] (0.75-2*\i,-1.75-\jj) rectangle (1.25-2*\i,-1.25-\jj);
\draw[thick] (-2*\i+1,-1.35-\jj) -- (-2*\i+1.15,-1.35-\jj) -- (-2*\i+1.15,-1.5-\jj);%
}
\foreach \i in {3+\shiftx}
{
\draw[thick] (.5-2*\i,-2-1*\jj) -- (1-2*\i,-1.5-\jj);
\draw[thick] (1-2*\i,-1.5-1*\jj) -- (1.5-2*\i,-2-\jj);
\draw[thick] (.5-2*\i,-1-1*\jj) -- (1-2*\i,-1.5-\jj);
\draw[thick] (1-2*\i,-1.5-1*\jj) -- (1.5-2*\i,-1-\jj);
\draw[thick, fill=myred6, rounded corners=2pt] (0.75-2*\i,-1.75-\jj) rectangle (1.25-2*\i,-1.25-\jj);
\draw[thick] (-2*\i+1,-1.35-\jj) -- (-2*\i+1.15,-1.35-\jj) -- (-2*\i+1.15,-1.5-\jj);%
}
\foreach \i in {4+\shiftx}
{
\draw[thick] (.5-2*\i,-2-1*\jj) -- (1-2*\i,-1.5-\jj);
\draw[thick] (1-2*\i,-1.5-1*\jj) -- (1.5-2*\i,-2-\jj);
\draw[thick] (.5-2*\i,-1-1*\jj) -- (1-2*\i,-1.5-\jj);
\draw[thick] (1-2*\i,-1.5-1*\jj) -- (1.5-2*\i,-1-\jj);
\draw[thick, fill=myred9, rounded corners=2pt] (0.75-2*\i,-1.75-\jj) rectangle (1.25-2*\i,-1.25-\jj);
\draw[thick] (-2*\i+1,-1.35-\jj) -- (-2*\i+1.15,-1.35-\jj) -- (-2*\i+1.15,-1.5-\jj);%
}
\foreach \i in {5+\shiftx}
{
\draw[thick] (.5-2*\i,-2-1*\jj) -- (1-2*\i,-1.5-\jj);
\draw[thick] (1-2*\i,-1.5-1*\jj) -- (1.5-2*\i,-2-\jj);
\draw[thick] (.5-2*\i,-1-1*\jj) -- (1-2*\i,-1.5-\jj);
\draw[thick] (1-2*\i,-1.5-1*\jj) -- (1.5-2*\i,-1-\jj);
\draw[thick, fill=myred1, rounded corners=2pt] (0.75-2*\i,-1.75-\jj) rectangle (1.25-2*\i,-1.25-\jj);
\draw[thick] (-2*\i+1,-1.35-\jj) -- (-2*\i+1.15,-1.35-\jj) -- (-2*\i+1.15,-1.5-\jj);%
}
}
\foreach \jj[evaluate=\jj as \j using -2*(ceil(\jj/2)-\jj/2)] in {-4+\shifty}{
\foreach \i in {1+\shiftx}
{
\draw[thick] (.5-2*\i,-2-1*\jj) -- (1-2*\i,-1.5-\jj);
\draw[thick] (1-2*\i,-1.5-1*\jj) -- (1.5-2*\i,-2-\jj);
\draw[thick] (.5-2*\i,-1-1*\jj) -- (1-2*\i,-1.5-\jj);
\draw[thick] (1-2*\i,-1.5-1*\jj) -- (1.5-2*\i,-1-\jj);
\draw[thick, fill=myred6, rounded corners=2pt] (0.75-2*\i,-1.75-\jj) rectangle (1.25-2*\i,-1.25-\jj);
\draw[thick] (-2*\i+1,-1.35-\jj) -- (-2*\i+1.15,-1.35-\jj) -- (-2*\i+1.15,-1.5-\jj);%
}
\foreach \i in {2+\shiftx}
{
\draw[thick] (.5-2*\i,-2-1*\jj) -- (1-2*\i,-1.5-\jj);
\draw[thick] (1-2*\i,-1.5-1*\jj) -- (1.5-2*\i,-2-\jj);
\draw[thick] (.5-2*\i,-1-1*\jj) -- (1-2*\i,-1.5-\jj);
\draw[thick] (1-2*\i,-1.5-1*\jj) -- (1.5-2*\i,-1-\jj);
\draw[thick, fill=myred3, rounded corners=2pt] (0.75-2*\i,-1.75-\jj) rectangle (1.25-2*\i,-1.25-\jj);
\draw[thick] (-2*\i+1,-1.35-\jj) -- (-2*\i+1.15,-1.35-\jj) -- (-2*\i+1.15,-1.5-\jj);%
}
\foreach \i in {3+\shiftx}
{
\draw[thick] (.5-2*\i,-2-1*\jj) -- (1-2*\i,-1.5-\jj);
\draw[thick] (1-2*\i,-1.5-1*\jj) -- (1.5-2*\i,-2-\jj);
\draw[thick] (.5-2*\i,-1-1*\jj) -- (1-2*\i,-1.5-\jj);
\draw[thick] (1-2*\i,-1.5-1*\jj) -- (1.5-2*\i,-1-\jj);
\draw[thick, fill=myred7, rounded corners=2pt] (0.75-2*\i,-1.75-\jj) rectangle (1.25-2*\i,-1.25-\jj);
\draw[thick] (-2*\i+1,-1.35-\jj) -- (-2*\i+1.15,-1.35-\jj) -- (-2*\i+1.15,-1.5-\jj);%
}
\foreach \i in {4+\shiftx}
{
\draw[thick] (.5-2*\i,-2-1*\jj) -- (1-2*\i,-1.5-\jj);
\draw[thick] (1-2*\i,-1.5-1*\jj) -- (1.5-2*\i,-2-\jj);
\draw[thick] (.5-2*\i,-1-1*\jj) -- (1-2*\i,-1.5-\jj);
\draw[thick] (1-2*\i,-1.5-1*\jj) -- (1.5-2*\i,-1-\jj);
\draw[thick, fill=myred2, rounded corners=2pt] (0.75-2*\i,-1.75-\jj) rectangle (1.25-2*\i,-1.25-\jj);
\draw[thick] (-2*\i+1,-1.35-\jj) -- (-2*\i+1.15,-1.35-\jj) -- (-2*\i+1.15,-1.5-\jj);%
}
\foreach \i in {5+\shiftx}
{
\draw[thick] (.5-2*\i,-2-1*\jj) -- (1-2*\i,-1.5-\jj);
\draw[thick] (1-2*\i,-1.5-1*\jj) -- (1.5-2*\i,-2-\jj);
\draw[thick] (.5-2*\i,-1-1*\jj) -- (1-2*\i,-1.5-\jj);
\draw[thick] (1-2*\i,-1.5-1*\jj) -- (1.5-2*\i,-1-\jj);
\draw[thick, fill=myred7, rounded corners=2pt] (0.75-2*\i,-1.75-\jj) rectangle (1.25-2*\i,-1.25-\jj);
\draw[thick] (-2*\i+1,-1.35-\jj) -- (-2*\i+1.15,-1.35-\jj) -- (-2*\i+1.15,-1.5-\jj);%
}
}
\end{scope}
\end{tikzpicture}\,\, \Biggr\rangle,
\ee
where we introduced 
\begin{align}
 U_{x,\tau'}^\dag & = 
\begin{tikzpicture}[baseline=(current  bounding  box.center), scale=.7]
\draw[ thick] (-4.25,0.5) -- (-3.25,-0.5);
\draw[ thick] (-4.25,-0.5) -- (-3.25,0.5);
\draw[ thick, fill=myred, rounded corners=2pt] (-4,0.25) rectangle (-3.5,-0.25);
\draw[thick] (-3.75,0.15) -- (-3.75+0.15,0.15) -- (-3.75+0.15,0);
\Text[x=-4.25,y=-0.75]{}
\end{tikzpicture}\, = (U_{x,\tau})^\dagger,
\end{align}
with $\tau'=t-\tau-\frac{1}{2}$.
Bending the upper part (blue gates) underneath the lower one (red gates) and using that the average factorises at each space-time point (positions of the gates in the two copies match) we find
\be
K_g(t) = 
\begin{tikzpicture}[baseline=(current  bounding  box.center), scale=0.55]
\def\eps{0};
\def\shift{11}
\def\shifty{-2.5}
\foreach \i in {1,...,5}{
\draw[very thick, opacity =0.2,dashed] (2*\i+2-1.5+0.25,-2.5-0.1) -- (2*\i+2-1.5+0.25,3.5-0.1);
\draw[very thick, opacity =0.2,dashed] (2*\i+2-0.5-0.25,-2.5-0.1) -- (2*\i+2-0.5-0.25,3.5-0.1);
}
\foreach \i in{3,...,4}{
\draw[very thick] (\shift-1.5+2,2*\i-0.5-3.5+\shifty) arc (45:-90:0.15);
\draw[very thick] (\shift-10+0.5+0,2*\i-0.5-3.5+\shifty) arc (-135:-270:0.15);}

\foreach \i in{2,...,4}{
\draw[very thick] (\shift-1.5+2,1+2*\i-0.5-3.5+\shifty) arc (-45:90:0.15);
\draw[very thick] (\shift-10+0.5,1+2*\i-0.5-3.5+\shifty) arc (135:270:0.15);}
\foreach \i in {0,...,4}
{
\draw[very thick] (\shift-.5-2*\i,1+\shifty) -- (\shift+0.5-2*\i,0+\shifty);
\draw[very thick] (\shift-0.5-2*\i,0+\shifty) -- (\shift+0.5-2*\i,1+\shifty);
\draw[ thick, fill=mygreen, rounded corners=2pt] (\shift-0.25-2*\i,0.25+\shifty) rectangle (\shift+.25-2*\i,0.75+\shifty);
\draw[thick] (\shift-2*\i,0.65+\shifty) -- (\shift+.15-2*\i,.65+\shifty) -- (\shift+.15-2*\i,0.5+\shifty);
}
\foreach \i in {1,...,5}
{
\draw[very thick] (2*\i+2-1.5,3.5) arc (135:-0:0.15);
\draw[very thick] (2*\i+2-2.5,3.5) arc (-325:-180:0.15);
\draw[very thick] (2*\i+2-1.5,-2.5) arc (-45:180:-0.15);
\draw[very thick] (2*\i+2-0.5,-2.5) arc (45:-180:0.15);
}
\foreach \i in {1,...,4}{
\draw[very thick] (\shift+.5-2*\i,6+\shifty) -- (\shift+1-2*\i,5.5+\shifty);
\draw[very thick] (\shift+1.5-2*\i,6+\shifty) -- (\shift+1-2*\i,5.5+\shifty);}

\foreach \jj[evaluate=\jj as \j using -2*(ceil(\jj/2)-\jj/2)] in {0,...,4}
\foreach \i in {1,...,5}
{
\draw[very thick] (\shift+.5-2*\i-1*\j,2+1*\jj+\shifty) -- (\shift+1-2*\i-1*\j,1.5+\jj+\shifty);
\draw[very thick] (\shift+1-2*\i-1*\j,1.5+1*\jj+\shifty) -- (\shift+1.5-2*\i-1*\j,2+\jj+\shifty);
}
%
\foreach \jj[evaluate=\jj as \j using -2*(ceil(\jj/2)-\jj/2)] in {0,...,4}
\foreach \i in {1,...,5}
{
\draw[very thick] (\shift+.5-2*\i-1*\j,1+1*\jj+\shifty) -- (\shift+1-2*\i-1*\j,1.5+\jj+\shifty);
\draw[very thick] (\shift+1-2*\i-1*\j,1.5+1*\jj+\shifty) -- (\shift+1.5-2*\i-1*\j,1+\jj+\shifty);
\draw[ thick, fill=mygreen, rounded corners=2pt] (\shift+0.75-2*\i-1*\j,1.75+\jj+\shifty) rectangle (\shift+1.25-2*\i-1*\j,1.25+\jj+\shifty);
\draw[thick] (\shift+1-2*\i-1*\j,1.65+1*\jj+\shifty) -- (\shift+1.15-2*\i-1*\j,1.65+1*\jj+\shifty) -- (\shift+1.15-2*\i-1*\j,1.5+1*\jj+\shifty);
} 
\end{tikzpicture}\, ,
\label{eq:SFF0SM}
\vspace{-5mm}
\ee
where we introduced the averaged ``double gates'' 
\begin{equation}
\label{eq:doublegate}
W_{x,\tau} =
\begin{tikzpicture}[baseline=(current  bounding  box.center), scale=.7]
\def\eps{0.5};
\Wgategreen{-3.75}{0};
\Text[x=-3.75,y=-0.6]{}
\end{tikzpicture}
=\braket{
\begin{tikzpicture}[baseline=(current  bounding  box.center), scale=.7]
\draw[thick] (-1.65,0.65) -- (-0.65,-0.35);
\draw[thick] (-1.65,-0.35) -- (-0.65,0.65);
\draw[ thick, fill=myblue, rounded corners=2pt] (-1.4,0.4) rectangle (-.9,-0.1);
\draw[thick] (-1.15,0) -- (-1,0) -- (-1,0.15);
\draw[thick] (-2.25,0.5) -- (-1.25,-0.5);
\draw[thick] (-2.25,-0.5) -- (-1.25,0.5);
\draw[ thick, fill=myred, rounded corners=2pt] (-2,0.25) rectangle (-1.5,-0.25);
\draw[thick] (-1.75,0.15) -- (-1.6,0.15) -- (-1.6,0);
\Text[x=-2.25,y=-0.6]{}
\end{tikzpicture}}
= \braket{U^{\dag}_{x,\tau}\otimes_c U_{x,\tau}^{T}}.
\end{equation}
Here the tensor product $\otimes_c$ denotes the tensor product between the two copies of the systems corresponding to the backward and forward time evolution (respectively the one given by red and blue gates in \eqref{eq:fullGSFF}).

Let us now consider the local gate given by Eq.~\eqref{eq:Ubar}. Noting that each single site gate with random field can be split into two random gates $e^{i \bar{\phi}_{x,\tau}\sigma^{z}} = e^{i \phi_{x,\tau}^\prime \sigma^{z}} \cdot e^{i \phi_{x,\tau}\sigma^{z}}$ ($\bar{\phi}_{x,\tau}$ denotes the random strength of the original field in Eq.~\eqref{eq:Ubar} and $\phi_{x,\tau},\phi_{x,\tau}^\prime$ are new random fields) we can write 
\be
U_{x,\tau} = (e^{i \phi_{x,\tau} \sigma^z} \otimes_s e^{i \phi_{\scriptscriptstyle x+{1}/{2},\tau} \sigma^{z}})\cdot U \cdot (e^{i \phi_{x,\tau}^\prime \sigma^z} \otimes_s e^{i \phi_{\scriptscriptstyle x+{1}/{2},\tau}^\prime \sigma^{z}}),
\label{eq:UbarSM}
\ee
where $\otimes_s$ denotes the tensor product between neighboring sites. Plugging in \eqref{eq:doublegate} we find 
\begin{align}
W=& \langle\left[
(e^{-i \phi_{x,\tau}^\prime \sigma^z} \!\!\otimes_s\! e^{-i \phi_{\scriptscriptstyle x+{1}/{2},\tau}^\prime \sigma^{z}}) U^\dagger  (e^{-i \phi_{x,\tau}\sigma^z} \!\!\otimes_s\! e^{-i \phi_{\scriptscriptstyle x+{1}/{2},\tau} \sigma^{z}}) \right]\notag\\
&\qquad\qquad\otimes_c \!\!\left[
(e^{i \phi_{x,\tau}^\prime \sigma^z} \!\!\otimes_s\! e^{i \phi_{\scriptscriptstyle x+{1}/{2},\tau}^\prime \sigma^{z}}) U^T (e^{i \phi_{x,\tau} \sigma^z} \!\!\otimes_s\! e^{i \phi_{\scriptscriptstyle x+{1}/{2},\tau} \sigma^{z}})\right]\rangle.
\end{align}
We omitted the subscript $_{x,\tau}$ in $W$, as the gate does not depend on the position for our choice of $U_{x,\tau}$. This follows from the fact that we average over $\phi_{x,\tau}, \phi_{x,\tau}^\prime,\phi_{\scriptscriptstyle x+{1}/{2},\tau}, \phi_{\scriptscriptstyle x+{1}/{2},\tau}^\prime$.
Defining $z(\phi_{x,\tau}):=e^{-i \phi_{ x,\tau}\sigma^z} \otimes_c e^{i \phi_{ x,\tau} \sigma^{z}}$ and simplifying the tensor products we get
%
\begin{align}
W=\! \braket{
\left[z(\phi_{x,\tau}^\prime) \otimes_s z(\phi_{\scriptscriptstyle x+{1}/{2},\tau}^\prime) \right]\cdot
\left[ U^\dagger \otimes_c U^T \right] \cdot
\left[z(\phi_{x,\tau}) \otimes_s z(\phi_{\scriptscriptstyle x+{1}/{2},\tau})
 \right] }.
\end{align}
%
Next we use that average factorizes and that
\begin{align}
\braket{z(\phi_{x,\tau})}=\braket{ e^{i \phi_{x,\tau} \sigma^z} \otimes_c e^{-i \phi_{x,\tau} \sigma^z}} = \frac{\1\otimes_c\1+\sigma^z\otimes_c \sigma^z}{2}\equiv P_z\,
\end{align}
is a projector. In particular $P_z$ projects the 4-dimensional space of states of a wire onto a 2-dimensional subspace, which dub `reduced local space'. Concretely, it projects the single site basis 
\be
\{\ket{\1},\ket{ \sigma^x},\ket{\sigma^y},\ket{ \sigma^z}\},
\ee 
to the basis of diagonal operators 
\be
\{\ket{\1}, \ket{\sigma^z}\}=\{\ket{\mcirc}, \ket{\mcircf}\},
\ee 
where we defined vectorised operators as follows 
\be
\ket{O} = \sum_{s_1,s_2\in\{0,1\}} [O]_{s_1,s_2}\ket{s_1}\otimes_c\ket{s_2},
\ee
where $\{\ket{0},\ket{1}\}$ is a basis of the local Hilbert space. 
Therefore, we define a \emph{reduced} folded gate $w$, which is $W$ written in this reduced Hilbert space   
\be w := \left((P_z\otimes_s P_z)\cdot (U^\dagger \otimes_c U^T) \cdot (P_z \otimes_s P_z)\right)_{\rm red} =
\begin{tikzpicture}[baseline={([yshift=-.5ex]current bounding box.center)}, scale=.7]
\Wreduced{0}{0};
\end{tikzpicture}\;,
\label{eq:reducedExplicit}
\ee
where  $_{\rm red}$ denoteds an operator that acts on the reduced Hilbert space of diagonal operators. $w$ thus acts on two sites with two degrees of freedom and is a four by four matrix. 

Next, we use an explicit parametrization of general four by four unitary matrix for our gate $U$, resulting in explicitly parametrized reduced gate $w$ from Eq.~\eqref{eq:reducedgates} of the main text. This is done explicitly in Appendix B of~\cite{KBP:CorrelationsPerturbed}, which we do not repeat here. 

Let us comment what happens in the presence of a conservation law. Specifically, by demanding that the local gate $U$ conserve magnetization (in the z-direction), we can explicitly parametrize it as
\be
U_{\rm U(1)} = \begin{pmatrix}
    1 & 0 & 0 & 0\\
0 & \cos J & i \sin J & 0\\
0 & i \sin J &\cos J & 0\\
0 & 0 & 0 &  1
    \end{pmatrix}\!,\,\qquad\qquad J\in[0,{\pi}/{4}]\,.
\ee
Inserting this parametrization into equation~\eqref{eq:reducedExplicit} results in Eq.~\eqref{eq:redCon} of the main text.

\section{Proof of Property~\ref{prop:P1}}
\label{app:bound}
In this appendix we prove Property~\ref{prop:P1}, namely we show that for no splittings $e=f=0$ (or no mergers $b=d=0$) and non-negative weights the 2$m$-point correlation functions $\braket{\mcircf_{x_1}\cdots\mcircf_{x_m} | \!\mcircf_{y_1}\cdots\mcircf_{y_m}\!(t) }_L$ are bounded by 
\be
\braket{\mcircf_{x_1}\cdots\mcircf_{x_m} | \!\mcircf_{y_1}\cdots\mcircf_{y_m}\!(t) }_L < \left({\rm max} (1, \frac{g}{\varepsilon_1 \varepsilon_2+ a c}) \right)^{(m-1)t} \sum_{\sigma\in S_{m}}\prod_{i=1}^{m} \braket{\mcircf_{x_i} | \!\mcircf_{y_{\sigma(i)}}\!(t) }_L\, ,
\label{eq:Cmbound}
\ee
where $S_{m}$ is the permutation group of $m$ elements.

We start by expressing the correlation functions as the sum of contributions from allowed configurations, as discussed in Section V A of \cite{KBP:CorrelationsPerturbed}. Inserting a resolution of the identity $\1 = \ket{\mcirc}\!\!\bra{\mcirc}+  \ket{\mcircf}\!\!\bra{\mcircf}$ at each reduced-operator wire, we can explicitly decompose each 
\be
\braket{\mcircf_{x_1}\cdots\mcircf_{x_m} | \!\mcircf_{y_1}\cdots\mcircf_{y_m}\!(t) }_L = \begin{tikzpicture}[baseline=(current  bounding  box.center), scale=0.65]
\def\eps{0};
\def\shift{11}
\def\shifty{-2.5}
\foreach \i in{3,...,4}
{
\draw[thin] (\shift-1.5,2*\i-0.5-3.5+\shifty) arc (45:-90:0.15);
\draw[thin] (\shift-10+0.5+0,2*\i-0.5-3.5+\shifty) arc (-135:-270:0.15);}
\foreach \i in{2,...,4}
{
\draw[thin] (\shift-1.5,1+2*\i-0.5-3.5+\shifty) arc (-45:90:0.15);
\draw[thin] (\shift-10+0.5,1+2*\i-0.5-3.5+\shifty) arc (135:270:0.15);
}
\foreach \i in {1,...,4}{
\Wreduced{\shift-2*\i}{\shifty+0.5}
}
\foreach \jj[evaluate=\jj as \j using -2*(ceil(\jj/2)-\jj/2)] in {0,...,4}
\foreach \i in {2,...,5}
{
\Wreduced{\shift+1.-2*\i-1*\j}{1.5+\jj+\shifty}
}
\draw[ thick, fill=black] (\shift-8.5,\shifty) circle (0.1cm); 
\draw[ thick, fill=white] (\shift-7.5,\shifty) circle (0.1cm);
\draw[ thick, fill=white] (\shift-6.5,\shifty) circle (0.1cm); 
\draw[ thick, fill=black] (\shift-5.5,\shifty) circle (0.1cm); 
\draw[ thick, fill=white] (\shift-4.5,\shifty) circle (0.1cm); 
\draw[ thick, fill=white] (\shift-3.5,\shifty) circle (0.1cm); 
\draw[ thick, fill=black] (\shift-2.5,\shifty) circle (0.1cm); 
\draw[ thick, fill=black] (\shift-1.5,\shifty) circle (0.1cm);

\draw[ thick, fill=white] (\shift-8.5,\shifty+6) circle (0.1cm); 
\draw[ thick, fill=black] (\shift-7.5,\shifty+6) circle (0.1cm);
\draw[ thick, fill=white] (\shift-6.5,\shifty+6) circle (0.1cm); 
\draw[ thick, fill=black] (\shift-5.5,\shifty+6) circle (0.1cm); 
\draw[ thick, fill=white] (\shift-4.5,\shifty+6) circle (0.1cm); 
\draw[ thick, fill=black] (\shift-3.5,\shifty+6) circle (0.1cm); 
\draw[ thick, fill=white] (\shift-2.5,\shifty+6) circle (0.1cm); 
\draw[ thick, fill=black] (\shift-9.5,\shifty+6) circle (0.1cm); 
\Text[x=6,y=-3]{$\cdots$}
\Text[x=2.5,y=-3]{$x_1$}
\Text[x=10,y=-3]{$x_m$}
\Text[x=8.5,y=-3]{$x_{m-1}$}
\Text[x=6,y=3.85]{$\cdots$}
\Text[x=3.5,y=3.85]{$y_1$}
\Text[x=1.5,y=3.85]{$y_m$}
\Text[x=7.5,y=3.85]{$y_{m-1}$}
\end{tikzpicture},
\label{eq:aveCorr}
\ee
into the sum of $2^{4L t}$ terms. Configurations are expressed in terms of ``tiles'' where we connect the particles $\mcircf$ with solid lines and ignore the vacancies $\mcirc$ \cite{KBP:CorrelationsPerturbed}. For example,
\be
\begin{tikzpicture}[baseline=(current  bounding  box.center), scale=0.8]
\Text[x=-5.5,y=0.0, anchor = center]{$g$}
\Text[x=-5,y=0.0, anchor = center]{$=$}
\Text[x=-2.5,y=0.0, anchor = center]{$=$}
\begin{scope}[rotate around={-0:(-3.75,0)}]
\draw (-4.25,0.5) -- (-3.25,-0.5);
\draw (-4.25,-0.5) -- (-3.25,0.5);
\draw[ thick, fill=myY, rounded corners=2pt] (-4,0.25) rectangle (-3.5,-0.25);
\draw[thick] (-3.75,0.15) -- (-3.6,0.15) -- (-3.6,0);
\draw[thick, fill=black] (-4.25,0.5) circle (0.1cm); 
\draw[thick, fill=black] (-3.25,0.5) circle (0.1cm); 
\draw[thick, fill=black] (-4.25,-0.5) circle (0.1cm); 
\draw[thick, fill=black] (-3.25,-0.5) circle (0.1cm); 
\end{scope}
\end{tikzpicture}
\begin{tikzpicture}[baseline=(current  bounding  box.center), scale=0.8]
\begin{scope}[shift={(1.5*\sroot,\sroot/2)},rotate=45] \gGate{0}{0} \end{scope}
\end{tikzpicture}.
\label{eq:introducingTiles}
\ee
The complete set of allowed tiles (corresponding to non-zero coefficients of the gate \eqref{eq:reducedgates}), when there are no splittings or mergers, is given by:
\begin{align}
\begin{tikzpicture}[baseline=(current  bounding  box.center), scale=0.7]
\begin{scope}[shift={(0,1.5)},rotate=45]\oGate{0}{0} \end{scope}
\begin{scope}[shift={(3,1.5)},rotate=45]\aGate{0}{0} \end{scope}
\begin{scope}[shift={(6,1.5)},rotate=45]\cGate{0}{0} \end{scope}
\begin{scope}[shift={(0,0)},rotate=45]\doGate{0}{0} \end{scope}
\begin{scope}[shift={(3,0)},rotate=45]\dtGate{0}{0} \end{scope}
\begin{scope}[shift={(6,0)},rotate=45]\gGate{0}{0} \end{scope}
\Text[x=1.25,y=1.5, anchor = center]{$=1$}
\Text[x={3+1.25},y=1.5, anchor = center]{$=a$}
\Text[x={6+1.25},y=1.5, anchor = center]{$=c$}
\Text[x={1.25+6},y=-0, anchor = center]{$=g$}
\Text[x={0+1.25},y=0.0, anchor = center]{$=\varepsilon_1$}
\Text[x={3+1.25},y=0., anchor = center]{$=\varepsilon_2$}
\end{tikzpicture} \, .
\label{eq:tiles}
\end{align}
Note that, if there are no splittings(mergers), and we need to end up with the same number of $\mcircf$, we can not have any mergers(splittings). We start with $m=2$. The four-point correlation function can be expressed as the sum of the weights of configurations $C_i^4$
\be
\braket{\mcircf_{x_1}\mcircf_{x_2} | \!\mcircf_{y_1}\mcircf_{y_2}\!(t) }_L = \sum_i w(C_i^4)\, ,
\ee
where configurations $C_i^4$ have four fixed boundary conditions at $x_1 ,x_2; y_1, y_2$. 
For example:
\begin{align}
C_e^4 =
\begin{tikzpicture}[baseline={([yshift=-.6ex]current bounding box.center)},  scale=0.65]
\begin{scope}[shift={(0,0)},rotate=45] \oGate{0}{0} \end{scope}
\begin{scope}[shift={(\sroot,0)},rotate=45] \doGate{0}{0} \end{scope}
\begin{scope}[shift={(2*\sroot,0)},rotate=45] \cGate{0}{0} \end{scope}
\begin{scope}[shift={(\sroot/2,\sroot/2)},rotate=45] \oGate{0}{0} \end{scope}
\begin{scope}[shift={(1.5*\sroot,\sroot/2)},rotate=45] \gGate{0}{0} \end{scope}
\begin{scope}[shift={(2.5*\sroot,\sroot/2)},rotate=45] \oGate{0}{0} \end{scope}
\begin{scope}[shift={(0,\sroot)},rotate=45] \oGate{0}{0} \end{scope}
\begin{scope}[shift={(\sroot,\sroot)},rotate=45] \doGate{0}{0} \end{scope}
\begin{scope}[shift={(2*\sroot,\sroot)},rotate=45] \aGate{0}{0} \end{scope}
\begin{scope}[shift={(\sroot/2,1.5* \sroot)},rotate=45] \oGate{0}{0} \end{scope}
\begin{scope}[shift={(1.5*\sroot,1.5*\sroot)},rotate=45] \dtGate{0}{0} \end{scope}
\begin{scope}[shift={(2.5*\sroot,1.5*\sroot)},rotate=45] \dtGate{0}{0} \end{scope}
\end{tikzpicture} \, ,
\end{align}
where $x_1=y_1= 3/2$ and $x_2=y_2= 5/2$. The $w(C)$ is the weight of the configuration, which is a product of the weights of all tiles. For instance, the weight of the tile $w(\text{g tile})=g$ and $w(C_e^4) = g a c \varepsilon_1^2 \varepsilon_2^2$. 

Next, we define a map $\mathcal{F}$, which maps configuration to a set of configurations
\be
\mathcal{F}(C_i^4) = \{ \tilde{C}^4_{i,k} \}_{k=1}^{2^n} \, ,
\ee
where $n$ is the number of $g$ tiles in $C_i^4$. It maps each tile $g$ to either tile $a c$ or tile $\varepsilon_1 \varepsilon_2$
resulting in $2^n$ different configurations
\be
\mathcal{F}(
\begin{tikzpicture}[baseline={([yshift=-.6ex]current bounding box.center)},  scale=0.65]
\begin{scope}[shift={(0,0)},rotate=45] \gGate{0}{0} \end{scope}
\end{tikzpicture})
=\left\{
\begin{tikzpicture}[baseline={([yshift=-.6ex]current bounding box.center)},  scale=0.65]
\begin{scope}[shift={(0,0)},rotate=45] \eeGate{0}{0} \end{scope}
\end{tikzpicture},
\begin{tikzpicture}[baseline={([yshift=-.6ex]current bounding box.center)},  scale=0.65]
\begin{scope}[shift={(0,0)},rotate=45] \acGate{0}{0} \end{scope}
\end{tikzpicture} \right\} \,.
\ee
Configurations denoted by tilde $\tilde{}$ have ``new tiles'' $ac$ and  $\varepsilon_1 \varepsilon_2$ and no $g$ tiles.
\begin{align}
\mathcal{F}(C_e^4) = \left\{
\begin{tikzpicture}[baseline={([yshift=-.6ex]current bounding box.center)},  scale=0.65]
\begin{scope}[shift={(0,0)},rotate=45] \oGate{0}{0} \end{scope}
\begin{scope}[shift={(\sroot,0)},rotate=45] \doGate{0}{0} \end{scope}
\begin{scope}[shift={(2*\sroot,0)},rotate=45] \cGate{0}{0} \end{scope}
\begin{scope}[shift={(\sroot/2,\sroot/2)},rotate=45] \oGate{0}{0} \end{scope}
\begin{scope}[shift={(1.5*\sroot,\sroot/2)},rotate=45] \acGate{0}{0} \end{scope}
\begin{scope}[shift={(2.5*\sroot,\sroot/2)},rotate=45] \oGate{0}{0} \end{scope}
\begin{scope}[shift={(0,\sroot)},rotate=45] \oGate{0}{0} \end{scope}
\begin{scope}[shift={(\sroot,\sroot)},rotate=45] \doGate{0}{0} \end{scope}
\begin{scope}[shift={(2*\sroot,\sroot)},rotate=45] \aGate{0}{0} \end{scope}
\begin{scope}[shift={(\sroot/2,1.5* \sroot)},rotate=45] \oGate{0}{0} \end{scope}
\begin{scope}[shift={(1.5*\sroot,1.5*\sroot)},rotate=45] \dtGate{0}{0} \end{scope}
\begin{scope}[shift={(2.5*\sroot,1.5*\sroot)},rotate=45] \dtGate{0}{0} \end{scope}
\end{tikzpicture}
,
\begin{tikzpicture}[baseline={([yshift=-.6ex]current bounding box.center)},  scale=0.65]
\begin{scope}[shift={(0,0)},rotate=45] \oGate{0}{0} \end{scope}
\begin{scope}[shift={(\sroot,0)},rotate=45] \doGate{0}{0} \end{scope}
\begin{scope}[shift={(2*\sroot,0)},rotate=45] \cGate{0}{0} \end{scope}
\begin{scope}[shift={(\sroot/2,\sroot/2)},rotate=45] \oGate{0}{0} \end{scope}
\begin{scope}[shift={(1.5*\sroot,\sroot/2)},rotate=45] \eeGate{0}{0} \end{scope}
\begin{scope}[shift={(2.5*\sroot,\sroot/2)},rotate=45] \oGate{0}{0} \end{scope}
\begin{scope}[shift={(0,\sroot)},rotate=45] \oGate{0}{0} \end{scope}
\begin{scope}[shift={(\sroot,\sroot)},rotate=45] \doGate{0}{0} \end{scope}
\begin{scope}[shift={(2*\sroot,\sroot)},rotate=45] \aGate{0}{0} \end{scope}
\begin{scope}[shift={(\sroot/2,1.5* \sroot)},rotate=45] \oGate{0}{0} \end{scope}
\begin{scope}[shift={(1.5*\sroot,1.5*\sroot)},rotate=45] \dtGate{0}{0} \end{scope}
\begin{scope}[shift={(2.5*\sroot,1.5*\sroot)},rotate=45] \dtGate{0}{0} \end{scope}
\end{tikzpicture}
\right\} \,.
\end{align}
For this map we can prove the following 
\begin{lemma}
$\mathcal{F}$ is surjective, namely $\mathcal{F}(x) \cap \mathcal{F}(y) = \{ \}$, for $x \neq y$.
\end{lemma}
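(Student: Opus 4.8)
I would prove the displayed disjointness property (the ``namely'' clause) by exhibiting an explicit one–sided inverse of $\mathcal F$ at the level of configurations; once $\mathcal F(C)$ is identified with a fibre of a genuine function, the claim $\mathcal F(x)\cap\mathcal F(y)=\{\}$ for $x\neq y$ is automatic, since fibres of a function are pairwise disjoint. (This disjointness is precisely what later lets one rewrite $\sum_{C}\sum_{\tilde C\in\mathcal F(C)}w(\tilde C)$ as a sum over the disjoint union $\bigcup_C\mathcal F(C)$ without over-counting, which is the step where \eqref{eq:Cmbound} comes from.)

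First I would fix terminology. An \emph{original} configuration $C$ (one of those summed in Eq.~\eqref{eq:aveCorr}) is a tiling of the brickwork space–time lattice by the six tiles of weights $1,a,c,\varepsilon_1,\varepsilon_2,g$ listed in Eq.~\eqref{eq:tiles}, compatible with the fixed boundary occupancies at the $x_j$ and $y_j$; a configuration in $\mathrm{Im}\,\mathcal F$ is a tiling by the \emph{extended} set in which the $g$-tile is deleted and the two composite tiles of weights $ac$ and $\varepsilon_1\varepsilon_2$ (the two pictures in the definition of $\mathcal F$) are added. The three structural facts I would record are: (a) $\mathcal F$ acts locally, fixing every non-$g$ tile and replacing each $g$-tile \emph{independently} by either an $ac$-tile or an $\varepsilon_1\varepsilon_2$-tile; (b) as tiles, $ac$ and $\varepsilon_1\varepsilon_2$ are distinct from each other and from all six tiles of Eq.~\eqref{eq:tiles}, so in particular neither can ever occur inside an original configuration; (c) the tiles $g$, $ac$, $\varepsilon_1\varepsilon_2$ carry the same occupancies on all four legs, so swapping one for another never violates the gluing rules or the boundary conditions of a tiling. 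Points (b)–(c) are visually manifest from the explicit tile pictures.

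Next I would define $\mathcal G$ on configurations over the extended tile set by: at every location bearing an $ac$- or an $\varepsilon_1\varepsilon_2$-tile, substitute a $g$-tile; leave all other tiles in place. By (c) this yields a legitimate tiling with unchanged boundary data, i.e.\ an original configuration, so $\mathcal G$ is a well-defined function. I then claim $\mathcal F(C)=\mathcal G^{-1}(C)$ for every original $C$: if $\tilde C$ arises from $C$ by resolving each $g$-tile, then by (b) the composite tiles of $\tilde C$ sit exactly at the former $g$-positions of $C$ and $\tilde C$ agrees with $C$ elsewhere, hence $\mathcal G(\tilde C)=C$; conversely, if $\mathcal G(\tilde C)=C$, then — using again that the original $C$ contains no composite tiles — the composite tiles of $\tilde C$ must coincide with the $g$-tiles of $C$ and $\tilde C=C$ away from them, so $\tilde C\in\mathcal F(C)$. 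With $\mathcal F(C)=\mathcal G^{-1}(C)$, distinct $x\neq y$ give $\mathcal F(x)\cap\mathcal F(y)=\mathcal G^{-1}(x)\cap\mathcal G^{-1}(y)=\{\}$, which is the Lemma; the same argument also yields $|\mathcal F(C)|=2^n$ with $n$ the number of $g$-tiles of $C$ (independent binary choices, and $ac\neq\varepsilon_1\varepsilon_2$ as tiles), and it carries over verbatim to the $C^{2m}$-configurations for general $m$. I do not expect a substantive obstacle here: the only care needed is in points (b)–(c), i.e.\ in treating the composite tiles as genuinely new symbols that cannot be confused with any tile of the original sum and in checking that the $g\leftrightarrow ac,\varepsilon_1\varepsilon_2$ substitution respects the tiling constraints, and both are immediate from the pictures.
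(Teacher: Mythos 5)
Your proof is correct, but it proceeds differently from the paper's. The paper argues directly on the original configurations: since the $g$-tile is the only tile with two incoming occupied legs, two distinct configurations $x\neq y$ (with identical boundary data) must differ at some position where \emph{neither} tile is a $g$-tile; because $\mathcal F$ leaves all non-$g$ tiles untouched, every element of $\mathcal F(x)$ retains that tile while every element of $\mathcal F(y)$ retains the other, so the images cannot intersect. You instead never look for such a distinguished position: you promote the $ac$- and $\varepsilon_1\varepsilon_2$-replacements to genuinely new symbols, observe that they share the leg occupancies of $g$ and cannot occur in an original configuration, and then build the explicit collapse map $\mathcal G$ (composite tile $\mapsto g$) for which $\mathcal F(C)=\mathcal G^{-1}(C)$, so that disjointness of the sets $\mathcal F(C)$ is just disjointness of fibres; note that for the lemma alone the forward identity $\mathcal G\circ\mathcal F=\mathrm{id}$ already suffices. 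Your route is more robust and more general: it does not use the no-splitting/no-merger structure or the uniqueness of $g$ as the two-input tile, only that the substituted tiles are distinguishable from the original alphabet, so it would survive modifications of the tile set for which the paper's ``must differ at a non-$g$ tile'' descent argument would have to be re-examined. The paper's argument, conversely, needs no auxiliary map and pinpoints a concrete location at which any two image configurations coming from different originals disagree. Both yield, as you note, $|\mathcal F(C)|=2^{n}$ and extend verbatim to general $m$, which is what the bound \eqref{eq:Cmbound} requires.
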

\begin{proof}
Since $x \neq y$, the configurations differ in at least one tile. Moreover, because a tile $g$ is the only tile with two incoming lines, they need to differ in a tile different from $g$. $\mathcal{F}$ does not change tiles different from $g$, therefore the resulting configurations from different configurations differ in the initially different tile. The configurations from the same set are different by construction.
\end{proof}

Next, notice that we can express the sum of the weights of tilde configurations with weights of configurations without the tilde
\be
\sum_{k=1}^{2^n} w(\tilde{C}^4_{i,k}) 
 = 
\left(\frac{\varepsilon_1 \varepsilon_2+ a c}{g}\right)^n w(C_i^4)\,,
\ee
where we used the factorisation of the weights. To see that, notice that we can factor out from the sum the weights of all tiles different from $g$ in $C_i$ and tiles at the same positions in tilde configurations. Then we are left only with a product of $g$ tiles, with each of them maped to $(\varepsilon_1 \varepsilon_2+ a c)$.

For instance, a configuration with of two $g$ tiles
\be
w({C_i^4} ) =  g g 
\ee
maps to
\be
\sum_{k=1}^4 w(\tilde{C}_{i,k}^4 ) = (\varepsilon_1 \varepsilon_2 + a c)(\varepsilon_1 \varepsilon_2 + a c)\, .
\ee
Since each tiling $\tilde{C}^4_{i,k}$ is different and corresponds to a different term in the diagrams of  
$\left(\braket{\mcircf_{x_1} | \!\mcircf_{y_1}\!(t) }_L\!\braket{\mcircf_{x_2} | \!\mcircf_{y_2}\!(t) }_L\!+\!\braket{\mcircf_{x_1}| \!\mcircf_{y_2}\!(t) }_L\!\braket{\mcircf_{x_2} | \!\mcircf_{y_1}\!(t) }_L\right)$, it follows that
\be
\sum_i \sum_{k=1}^{2^n} w(\tilde{C}^4_{i,k}) \leq \left(\braket{\mcircf_{x_1} | \!\mcircf_{y_1}\!(t) }_L\!\braket{\mcircf_{x_2} | \!\mcircf_{y_2}\!(t) }_L\!+\!\braket{\mcircf_{x_1}| \!\mcircf_{y_2}\!(t) }_L\!\braket{\mcircf_{x_2} | \!\mcircf_{y_1}\!(t) }_L\right)\, . 
\ee
Finally, using that the maximal possible number of $g$ tiles is $t$ (for $L>1$), we see
\be
\!\!\!\!\!\braket{\mcircf_{x_1}\mcircf_{x_2} | \!\mcircf_{y_1}\mcircf_{y_2}\!(t) }_L \leq \left[{\rm max}\left(1, \frac{g^t}{(\varepsilon_1 \varepsilon_2+ a c)^t}\right)\right]\!\! \left(\braket{\mcircf_{x_1} | \!\mcircf_{y_1}\!(t) }_L\!\braket{\mcircf_{x_2} | \!\mcircf_{y_2}\!(t) }_L\!+\!\braket{\mcircf_{x_1}| \!\mcircf_{y_2}\!(t) }_L\!\braket{\mcircf_{x_2} | \!\mcircf_{y_1}\!(t) }_L\right)\,. 
\ee
Similarly, for $2<m<2L$ we use that the maximal possible number of $g$ tiles is $(m-1)t$ to find \eqref{eq:Cmbound}.

\section{Asymptotics of infinite-volume correlation functions}
\label{sec:AsyCorr}
\subsubsection{Asymptotics between integer indexed points}
Let us start considering the correlation functions between integer indexed points. First we note that they can be expressed in terms of the  ordinary hypergeometric function and the Jacobi polynomials.
\begin{align}
\braket{\mcircf_x | \mcircf_0(t) }_{\infty} &= \displaystyle  \sum_{n=1}^{t}
\left(\frac{\varepsilon_1 \varepsilon_2}{a c}\right)^n \binom{t+x}{n} \binom{ t-x-1}{n-1} a^{t+x} c^{t  - x} + \delta_{t,x} a^{2t}\,  
\nonumber\\
&=a^{t+x} c^{t  - x} \frac{\varepsilon_1  \varepsilon_2 (t+x) \,
}{a c} \phantom{}_2F_1\left(1-t-x,1-t+x;2;\frac{\varepsilon_1  \varepsilon_2}{a c}\right) + \delta_{t,x} a^{2t} \, ,
\label{eq:corr1}
\end{align}
where $_2F_1$ is Gaussian or ordinary hypergeometric function. The expression can be expressed using Jacobi polynomials $P^{(\alpha,\beta)}_n$ as
\be
\braket{\mcircf_x | \mcircf_0(t) }_{\infty} = a^{t+x} c^{t  - x} \;z \; P^{(1,-2t)}_{t+x-1}(1-2z)+ \delta_{t,x} a^{2t} \, ,
\ee
where we introduced $z={\varepsilon_1 \varepsilon_2}/{a c}$. 

We now proceed to develop the asymptotic expansion of \eqref{eq:corr1}. We begin by writing it in terms of the ray variable $\zeta = x/t \in [ -1,1 ]$
\be 
\braket{\mcircf_{\zeta t} | \mcircf_0(t) }_{\infty} = \delta_{\zeta-1} a^{2t} + \sum_{n=1}^{{\rm min}(t(1+\zeta),t(1-\zeta))} P_n(\zeta, t),
\ee
with 
\be
P_n(\zeta,t)= z^n \binom{t(1+\zeta)}{n} \binom{ t(1-\zeta)-1}{n-1} a^{t(1+\zeta)} c^{t(1-\zeta)}\,,
\ee 
and $z={\varepsilon_1 \varepsilon_2}/{a c}$. We obtain the asymptotic form by firstly using the Stirling's approximation for $P_n$. Then we write the Taylor expansion for $\log P_n$ in $n$ around the maximum, substitute the sum over $n$ by the integral and integrate over $n$. Finally, we write the Taylor expansion for $\log \braket{\mcircf_{\zeta t} | \mcircf_0(t) }_{\infty}$ in terms of $\zeta$ to second order around the maximum. Therefore, we obtain Gaussian form of the correlations, which accurately describes the asymptotic behaviour. 

Stirling's approximation for $P_n$ yields
\be
P_n(\zeta, t) \approx z^n a^{t(1+\zeta)} c^{t(1-\zeta)} \sqrt{\frac{t^2 (1-\zeta^2)- t(1+\zeta)}{(2 \pi)^2 (n^2-n) ((t-n)^2-(t \zeta)^2)}} 
\frac{(t(1+\zeta))^{t(1+\zeta)} (t(1-\zeta)-1)^{t(1-\zeta)-1}
}{n^n (n-1)^{n-1} (t(1+\zeta)-n)^{t(1+\zeta)-n} (t(1-\zeta)-n)^{t(1-\zeta)-n} }\, .
\ee
Next we find the maximum $\bar{n}$, by demanding that the derivative of $\log P_n(\zeta, t)$ vanishes. We obtain the equation
\be
\log \frac{z(t(1+\zeta)-\bar{n}) (t(1-\zeta)-\bar{n})}{\bar{n}(\bar{n}-1)} =- \frac{1}{2} \left(\frac{1}{\zeta t-\bar{n}+t}+\frac{1}{-\zeta t-\bar{n}+t}+\frac{1-2 \bar{n}}{(\bar{n}-1) \bar{n}}\right),
\label{eq:barn}
\ee
which we solve up to $1/t$ order. We obtain
\be
\bar{n} = v t + n_0 + \mathcal{O}(1/t), \qquad
 v = \frac{z - \sqrt{z+\zeta^2(z^2-z)} }{z-1} ,
\ee
where we took the relevant root of the quadratic equation, for which $0 \geq v \geq 1$.
The expression for $n_0$ contains the first sub-leading terms from LHS of Eq. (\ref{eq:barn}) and the leading order from RHS.
\be
n_0 = \frac{1}{2}\frac{v^2-v}{v+\zeta^2-1} = \frac{1}{2} \frac{z+\zeta^2(z^2-z)-\sqrt{z+\zeta^2(z^2-z)}}{z-1+\zeta^2(z-1)^2} \, .
\ee
Next, we derive $S_{\bar{n}} =- \frac{\partial^2 \log P_n}{\partial^2} |_{\bar{n}}$
\be
 S_{\bar{n}} =  \frac{2 z}{v^3 t} (1-v-\zeta^2) \, .
\ee
The simplified value at the saddle point $ P_{\bar{n}}(\zeta, t)$ reads
\be  
P_{\bar{n}}(\zeta, t) = \frac{\sqrt{z} }{2 \pi v t} \sqrt{\frac{1+\zeta}{1-\zeta}}
 \Big[ a c z \frac{1-\zeta^2}{v^2} \; 
 \left(\frac{a}{c}\, \frac{(1+\zeta)(1-\zeta-v)}{(1-\zeta)(1+\zeta-v)} \right)^\zeta  \Big]^t \, .
\ee
Next, we approximate the sum with the integral and obtain
\be 
\braket{\mcircf_{\zeta t} | \mcircf_0(t) }_{\infty} = \delta_{\zeta-1} a^{2t} +\int_0^t \; dn \; P_{\bar{n}}(\zeta, t) e^{-S_{\bar{n}} (n-\bar{n})^2/2}
= \delta_{\zeta-1} a^{2t} + \sqrt{2 \pi / S_{\bar{n}}} P_{\bar{n}}(\zeta, t) \, .
\ee
The asymptotic form is:
\begin{align}
\braket{\mcircf_{\zeta t} | \mcircf_0(t) }_{\infty} &= \delta_{\zeta-1} a^{2t} 
+
 \frac{ 1 }{ 2}  \sqrt{\frac{v }{\pi t} \frac{ (1+\zeta) }{   (1-\zeta) (1-v-\zeta^2))} }
 \Big[ a c z \frac{1-\zeta^2}{v^2} \; 
 \left(\frac{a}{c}\, \frac{(1+\zeta)(1-\zeta-v)}{(1-\zeta)(1+\zeta-v)} \right)^\zeta  \Big]^t 
  \, .
\end{align} 

The expression is rather involved, but it is well approximated by a Gaussian function of $\zeta$. To find it we first determine the maximum versus  $\zeta$. Taking the first derivative equal to $0$ leads to
\be
\bar{\zeta} =\zeta_0 + \frac{1}{t} \zeta_1 + \mathcal{O}(1/t^2)\, , \quad \zeta_0 =  \frac{a-c}{\sqrt{(a-c)^2 + 4 a c z}}=  \frac{a-c}{\Delta}\, .
\ee
We introduced $\Delta=\sqrt{4 a c z+(a-c)^2}$. To get the correct normalization of the final result, we need to obtain the $\mathcal{O}(1/t)$ term. It reads
\be
\zeta_1 = 
\frac{\Delta \left(a^2 c (20 z-11)+3 a^3+3 a c^2 (4 z-1)+3 c^3\right)+2 a c (z-1) (a-c) (a+c)}{2 \left(4 a c z+(a-c)^2\right) \left(3 a^2+2 a c (8 z-5)+3 c^2\right)} \,.
\ee
The second derivative at $\bar{\zeta}$ up to first order
\be
S_{\bar{\zeta}}= \frac{((a-c)^2+4acz)^{3/2}}{2acz |a+c|} =  \frac{\Delta^{3}}{2acz |a+c|}\, .
\ee
We obtain the final form
\be
\braket{\mcircf_{\zeta t} | \mcircf_0(t) }_{\infty} = \delta_{\zeta-1} a^{2t} + \frac{C_{0,0}}{\sqrt{t}} \lambda^t e^{-t S_{\bar{\zeta}} (\zeta- \bar{\zeta})^2/2} \, ,
\label{eq:asyFin}
\ee
with
\be
\lambda=\frac{1}{4}
\left( (a+c)+\sqrt{4 a c z+(a-c)^2}\right)^2\, .
\ee
and
\be
C_{0,0}
=
  \frac{ 1 }{ 2}  \sqrt{\frac{v_0 }{\pi } \frac{ (1+\zeta_0) }{   (1-\zeta_0) (1-v_0-\zeta_0^2)} } \times 
   \left(\frac{a \left(\zeta _0 \left(\zeta _0+2 \sqrt{\zeta _0^2 (z-1) z+z}-2 \zeta _0 z\right)-1\right)}{c \left(\zeta _0^2-1\right)}\right)^{\zeta _1}\, ,
   \label{eq:asyNorm}
 \ee
 with $v_0 = v(\zeta_0)$. A comparison with the exact evaluation is shown in Fig. \ref{fig:Asympt}.

\begin{figure*}
 \centering
    \includegraphics[scale=0.55]{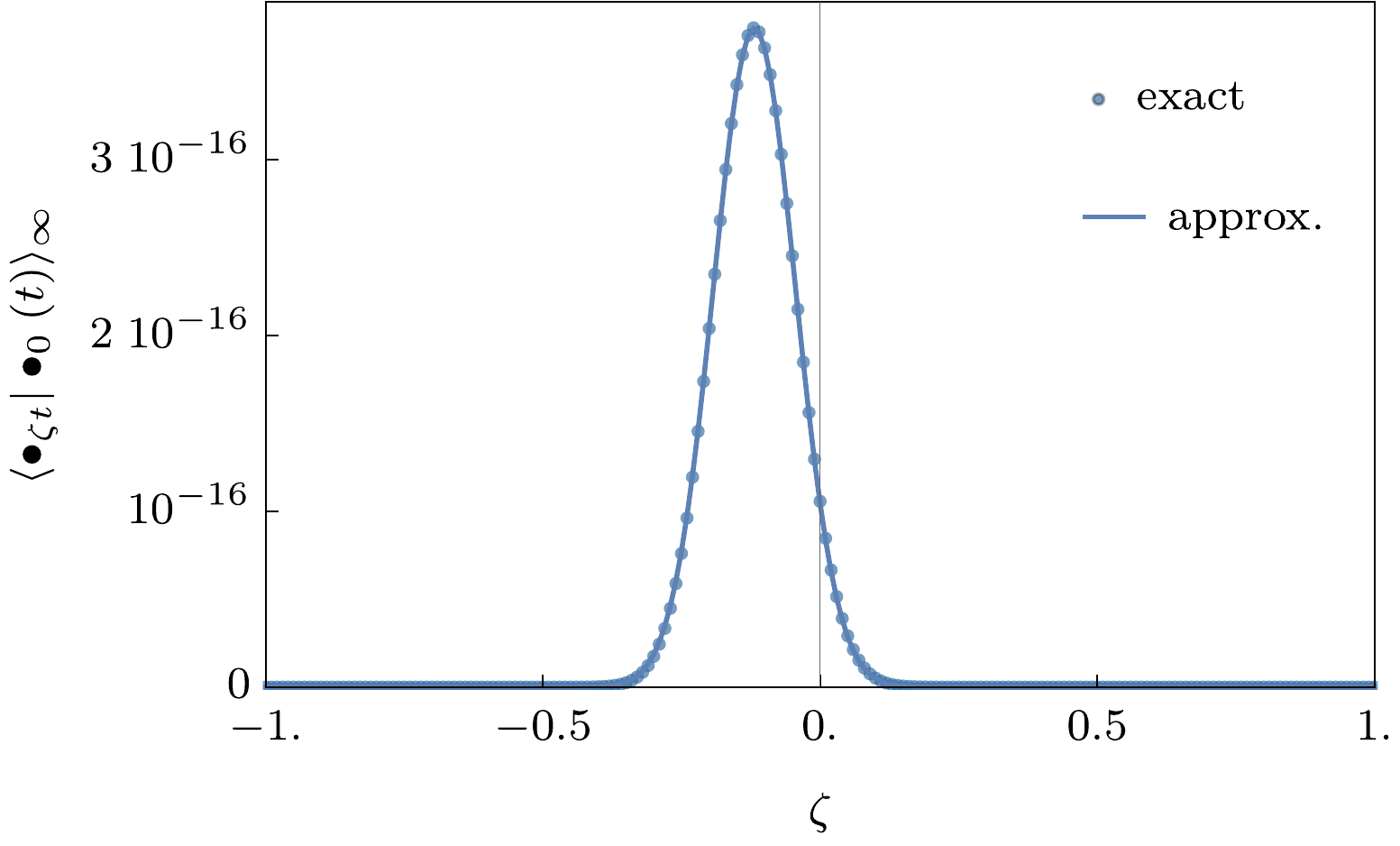}
    \includegraphics[scale=0.55]{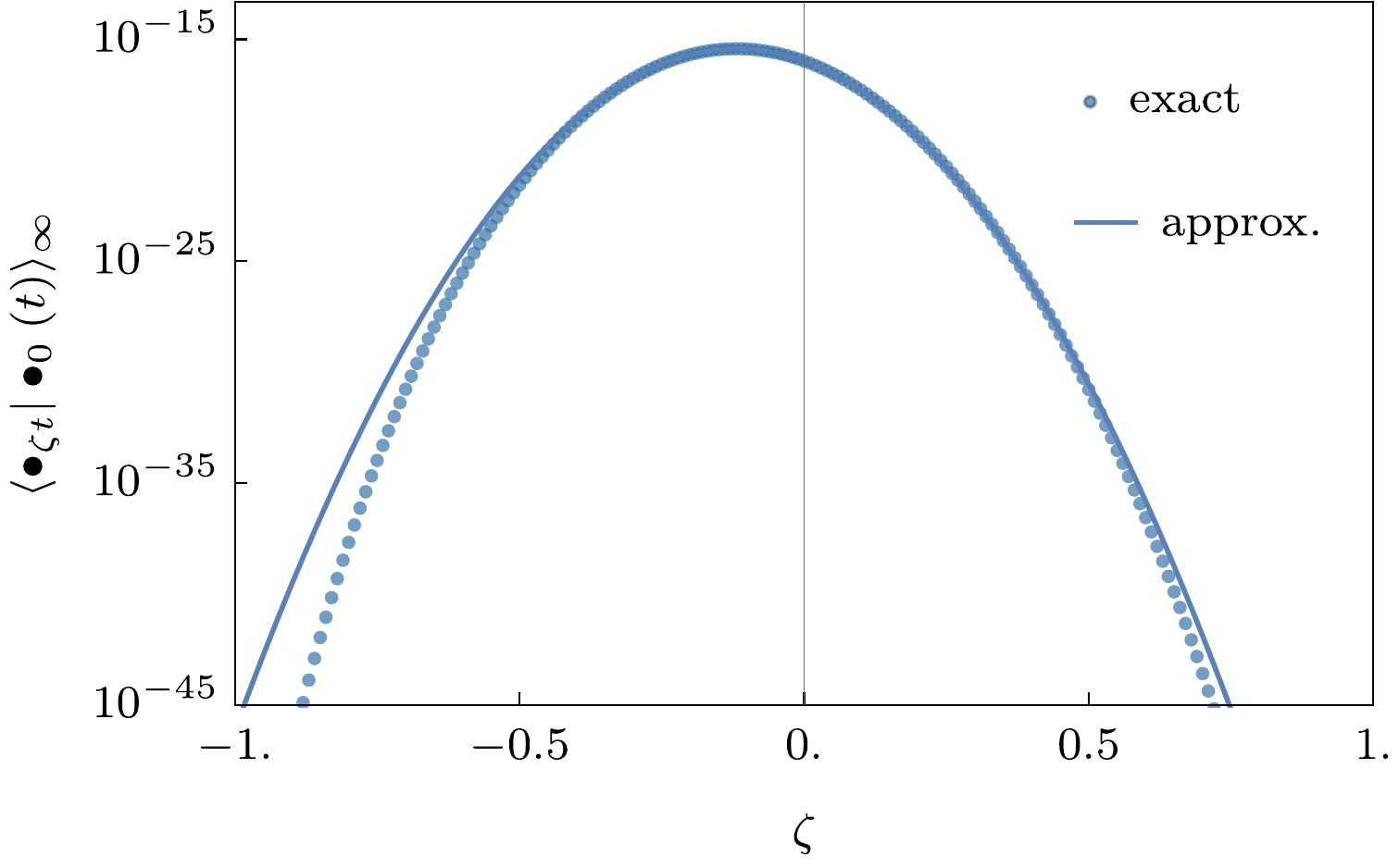}
    \caption{Comparison between asymptotic formula (\ref{eq:asyFin}) and exact evaluation (\ref{eq:corr1}). $t=100$, $a=0.4$, $c=0.5$ and $z=0.8$.}
    \label{fig:Asympt}
\end{figure*}

\subsubsection{Asymptotics between integer and half-integer indexed points}
\label{sec:AsyCorrCeo}
The calculation of the asymptotics for correlations with different endpoints follows the same lines. The final results read as 
\begin{align}
\braket{\mcircf_{x+1/2} | \mcircf_{1/2}(t) }_{\infty} &=\delta_{\zeta+1} c^{2t} +\frac{C_{1,1}}{\sqrt{t}} \lambda^t e^{-t S_{\bar{\zeta}} (\zeta- \bar{\zeta})^2/2}\, ,\\
\braket{\mcircf_{x+1/2} | \mcircf_{0}(t) }_{\infty} &=\frac{C_{0,1}}{\sqrt{t}} \lambda^t e^{-t S_{\bar{\zeta}} (\zeta- \bar{\zeta})^2/2}\, ,\\
\braket{\mcircf_{x} | \mcircf_{1/2}(t) }_{\infty} &=\frac{C_{1,0}}{\sqrt{t}} \lambda^t e^{-t S_{\bar{\zeta}} (\zeta- \bar{\zeta})^2/2}\, ,
\label{eq:asyFinCeo}
\end{align}
where $x\in \mathbb Z_L$, $C_{1,1}= \sqrt{\frac{S_{\bar\zeta}}{2\pi}}-C_{0,0}$ , and we do not report an explicit form of $C_{1,0}$ and $C_{0,1}$ as it is not needed in the main text.  

\section{Effective Markov matrix for the $m=1$ sector}
\label{app:Markov}
We begin by noting that $K^{(1)}_g(t)$ can be written as the trace of the following $2L\times 2L$ block-circulant Markov operator
\be
\mathbb M = \begin{pmatrix}
A & B & 0 & 0 & \cdots & 0 & C\\
C & A & B & 0 & \cdots & 0 & 0\\
0 & C & A & B & \cdots & 0 & 0\\
\vdots &  &  &  & \ddots &  & \vdots\\
B & 0 & 0 & 0 & \cdots & C & A\\
\end{pmatrix}\,,
\ee
where we introduced the following $2\times2$ matrices
\begin{align}
A = &
\begin{pmatrix}
\varepsilon_1\varepsilon_2 & c\varepsilon_2 \\
a\varepsilon_1 & \varepsilon_1\varepsilon_2\\
\end{pmatrix}\,,
 &  B = &
\begin{pmatrix}
c^2 & 0 \\
c\varepsilon_1 & 0\\
\end{pmatrix}\,,
& C = &
\begin{pmatrix}
0 & a\varepsilon_2 \\
0 & a^2\\
\end{pmatrix}\,.
\end{align}
The eigenvectors of $\mathbb M$ can then be written as 
\be
w_m^\pm =\begin{pmatrix}
 v_m^\pm\\
 e^{i \frac{2\pi m}{L}} v_m^\pm\\
\vdots\\
 e^{i \frac{2\pi m}{L}(L-1)}v_m^\pm
\end{pmatrix}\,,\qquad m= 0,\ldots, L-1,
\ee 
where $v_m^\pm$ are the two eigenvectors of 
\be
A+e^{i \frac{2\pi m}{L}}B+e^{-i \frac{2\pi m}{L}}C\,.
\ee
The eigenvalues are given by 
\be
\lambda^\pm_m= 
\frac{1}{4}  \left((a e^{-\frac{i \pi  m}{L}} + c e^{\frac{i \pi  m}{L}})  \pm  \sqrt{(a e^{-\frac{i \pi  m}{L}}-c e^{\frac{i \pi  m}{L}})^2+4 \varepsilon_1\varepsilon_2}\right)^2\,.
\label{eq:allLambdas}
\ee
In particular, the eigenvalue with largest absolute magnitude is one of the four choices:
\begin{align}
\lambda^{\pm}_0&=\frac{1}{4}  \left(a+c  \pm \sqrt{(a-c)^2+4\varepsilon_1\varepsilon_2}\right)^2\,, \quad
\lambda^{\pm}_{L/2}= -\frac{1}{4}  \left(a-c  \pm \sqrt{(a+c)^2-4\varepsilon_1\varepsilon_2}\right)^2\,.
\label{eq:largestLambda}
\end{align}
$\lambda^{+}_0$ is the relevant solution when all parameters are positive, and coincides with $\lambda$ in Eq.~\eqref{eq:lambda} of the main text. In this case we immediately get
\be
K^{(1)}_g(t) = {\rm tr}\left[\mathbb M^t\right] \simeq (\lambda^+_0)^t\,.
\ee

\section{Bound on $K^{(m)}_g(t)$}
\label{app:boundK}

In this appendix we use Property~\ref{prop:P1} and Eq.~\eqref{eq:C1asy} to find a bound for 
\be
K^{(2+)}_g(t) \equiv  \sum_{m=2}^{2L} K_{g}^{(m)}(t) \,.
\ee
Summing the 
\be
\binom{2L}{m} \leq \frac{(2L)^m}{m!}
\ee
$m$-point correlation functions in $K_g^{(m)}$ for $1<m<2L$ we get
\be
K_g^{(m)} < G (2L)^m \lambda^{mt} \left({\rm max}\left (1, \frac{g^{(m-1)t}}{(\varepsilon_1 \varepsilon_2+ a c)^{(m-1)t}}\right) \right) \, ,
\label{eq:Kmbound}
\ee
where $G$ is a constant. For the sector with $m=2L$ we instead find exactly $K^{(2L)}_g(t) = g^{4L t}$. Combining the bounds for all terms (\ref{eq:Kmbound}), we obtain
\be
K^{(2+)}_g(t) < G \gamma^2\frac{1}{1-\gamma} \left({\rm max}\left(1, \frac{g}{\varepsilon_1 \varepsilon_2+ a c}\right) \right)^{-t} \, , 
\label{eq:K2pBound}
\ee
with
\be
\gamma= 2L \lambda^t \left({\rm max}\left(1, \frac{g^t}{(\varepsilon_1 \varepsilon_2+ a c)^t}\right) \right)\, .
\ee 
This is much smaller than $K_1(t)$ for $\lambda\, {\rm max}[1, g/(\varepsilon_1 \varepsilon_2+ a c)]<1$. Empirically, this condition seems to hold in $83\% $ of randomly generated gates with no splittings  and positive weights. We compare the bound with a numerical evaluation of $K^{(2+)}_g(t)$ in Fig \ref{fig:Bound}.

\begin{figure}
    \centering
    \includegraphics[scale=1.]{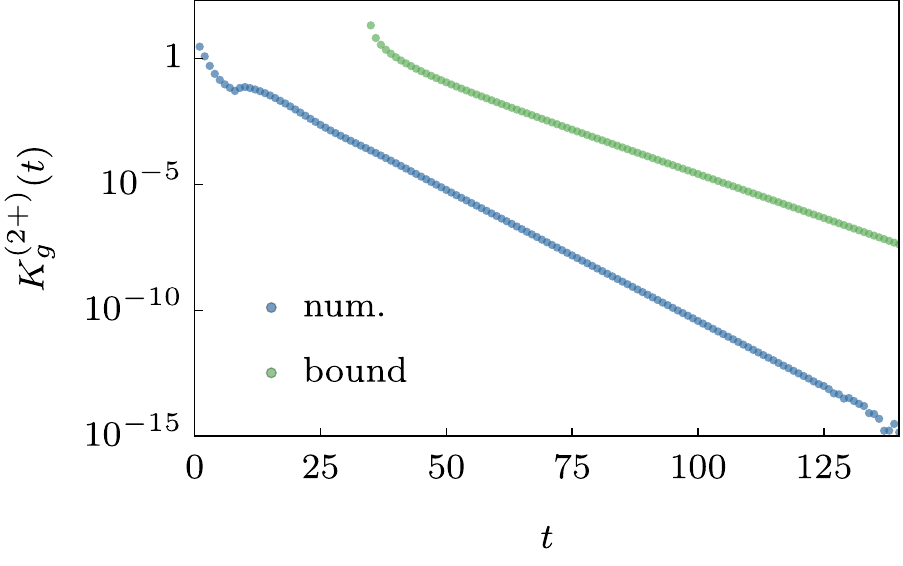}
    \caption{
    Exact numerical evaluation of $K^{(2+)}_g(t)$ 
     vs bound \ref{eq:K2pBound} (for $G=1$). We see that $K^{(2+)}_g(t)$ the bound is not tight.
Here we took $L=8$, and the gate's parameters are given in the second row of Tab.~\ref{tab:U1gates}.   }
    \label{fig:Bound}
\end{figure}

}

\subsection{Bound on $K^{(m)}_g(t)$ in the case of negative weights}

For gates with negative weights we have $K^{(1)}_g \simeq \lambda_1^t$, where $\lambda_1$ is the largest in the magnitude of the eigenvalues in \eqref{eq:largestLambda}. On the other hand, the two-point correlation function at long times are bounded from above by the asymptotic form for the effective gate, where the parameters of the gate are substituted by their absolute values. In particular, we have
\be
\lambda_{\rm eff} = \frac{1}{4}  \left(|a|+|c|  + \sqrt{(|a|-|c|)^2+4 |\varepsilon_1| |\varepsilon_2|}\right)^2.
\ee
Thus, we get
\be
K^{(2+)}_g(t) < G \gamma^2\frac{1}{1-\gamma} \left({\rm max}\left(1, \frac{|g|}{|\varepsilon_1| |\varepsilon_2|+ |a| |c|}\right) \right)^{-1} \, , 
\label{eq:K2pBound}
\ee
with
\be
\gamma= 2L \lambda_{\rm eff}^t \left({\rm max}\left(1,  \frac{|g|^t}{(|\varepsilon_1| |\varepsilon_2|+ |a| |c|)^t}\right) \right)\, .
\ee 
We see that $K^{(2+)}_g(t)$ is certainly exponentially smaller than $K^{(1)}_g(t)$ for
\be
\frac{\lambda_{\rm eff}^2}{|\lambda_1|} {\rm max}\left(1,  \frac{|g|}{|\varepsilon_1| |\varepsilon_2|+ |a| |c|}\right) < 1 \, ,
\ee
which holds in 90\% of randomly generated gates with no splittings and at least one of $a,c, \varepsilon_1, \varepsilon_2,g$ negative.

\section{$K_g(t)$ for $f=e=\varepsilon_{1}=0$}
\label{app:SWAPlike}

Here it is more convenient to look at the following representation ($L=4$, $t=2$)
\be
K_{g}(t) = 
\begin{tikzpicture}[baseline=(current  bounding  box.center), scale=0.7]
\def\eps{0};
\def\shift{11}
\def\shifty{-2.5}
\foreach \i in {1,...,4}{
\draw[thin, opacity =0.2,dashed] (2*\i+2-1.5+0.25,-2.5-0.1) -- (2*\i+2-1.5+0.25,1.5-0.1);
\draw[thin, opacity =0.2,dashed] (2*\i+2-0.5-0.25,-2.5-0.1) -- (2*\i+2-0.5-0.25,1.5-0.1);
}
\foreach \i in{3,...,3}{
\draw[thin] (\shift-1.5,2*\i-0.5-3.5+\shifty) arc (45:-90:0.15);
\draw[thin] (\shift-10+0.5+0,2*\i-0.5-3.5+\shifty) arc (-135:-270:0.15);
}
\foreach \i in{2,...,3}{
\draw[thin] (\shift-1.5,1+2*\i-0.5-3.5+\shifty) arc (-45:90:0.15);
\draw[thin] (\shift-10+0.5,1+2*\i-0.5-3.5+\shifty) arc (135:270:0.15);
}
\foreach \i in {1,...,4}
{
\draw[thin] (\shift-.5-2*\i,1+\shifty) -- (\shift+0.5-2*\i,0+\shifty);
\draw[thin] (\shift-0.5-2*\i,0+\shifty) -- (\shift+0.5-2*\i,1+\shifty);
\draw[ thick, fill=myY, rounded corners=2pt] (\shift-0.25-2*\i,0.25+\shifty) rectangle (\shift+.25-2*\i,0.75+\shifty);
\draw[thick] (\shift-2*\i,0.65+\shifty) -- (\shift+.15-2*\i,.65+\shifty) -- (\shift+.15-2*\i,0.5+\shifty);
\draw[thin] (2*\i+2-1.5,1.5) arc (135:-0:0.15);
\draw[thin] (2*\i+2-2.5,1.5) arc (-325:-180:0.15);
\draw[thin] (2*\i+2-1.5,-2.5) arc (-45:180:-0.15);
\draw[thin] (2*\i+2-0.5,-2.5) arc (45:-180:0.15);
}
\foreach \jj[evaluate=\jj as \j using -2*(ceil(\jj/2)-\jj/2)] in {0,...,2}
\foreach \i in {2,...,5}
{
\draw[thin] (\shift+.5-2*\i-1*\j,2+1*\jj+\shifty) -- (\shift+1-2*\i-1*\j,1.5+\jj+\shifty);
\draw[thin] (\shift+1-2*\i-1*\j,1.5+1*\jj+\shifty) -- (\shift+1.5-2*\i-1*\j,2+\jj+\shifty);
}
\foreach \i in {1,...,4}
{
\draw[thin] (\shift-.5-2*\i,1+\shifty) -- (\shift+0.5-2*\i,0+\shifty);
\draw[thin] (\shift-0.5-2*\i,0+\shifty) -- (\shift+0.5-2*\i,1+\shifty);
\draw[thick, fill=myY, rounded corners=2pt] (\shift-0.25-2*\i,0.25+\shifty) rectangle (\shift+.25-2*\i,0.75+\shifty);
\draw[thick] (\shift-2*\i,0.65+\shifty) -- (\shift+.15-2*\i,.65+\shifty) -- (\shift+.15-2*\i,0.5+\shifty);
}
\foreach \jj[evaluate=\jj as \j using -2*(ceil(\jj/2)-\jj/2)] in {0,...,2}
\foreach \i in {2,...,5}
{
\draw[thin] (\shift+.5-2*\i-1*\j,1+1*\jj+\shifty) -- (\shift+1-2*\i-1*\j,1.5+\jj+\shifty);
\draw[thin] (\shift+1-2*\i-1*\j,1.5+1*\jj+\shifty) -- (\shift+1.5-2*\i-1*\j,1+\jj+\shifty);
\draw[ thick, fill=myY, rounded corners=2pt] (\shift+0.75-2*\i-1*\j,1.75+\jj+\shifty) rectangle (\shift+1.25-2*\i-1*\j,1.25+\jj+\shifty);
\draw[thick] (\shift+1-2*\i-1*\j,1.65+1*\jj+\shifty) -- (\shift+1.15-2*\i-1*\j,1.65+1*\jj+\shifty) -- (\shift+1.15-2*\i-1*\j,1.5+1*\jj+\shifty);
}
\end{tikzpicture}\, .
\label{eq:SFFredtr}
\ee
Since we do not allow splittings ($e=f=0$) the merges will not appear in the evaluation of~\eqref{eq:SFFredtr}. Otherwise there would be a different number of operators at the bottom and the top. Furthermore, since $\varepsilon_{1}=0$, we cannot convert left mover to right mover. Since the number of right movers at the bottom and the top is the same, also $\varepsilon_{2}=0$. Therefore, the only allowed tiles are (see discussion around \eqref{eq:introducingTiles} for the introduction of tiles)
\begin{align}
\begin{tikzpicture}[baseline=(current  bounding  box.center), scale=0.7]
\begin{scope}[shift={(0,1.5)},rotate=45]\oGate{0}{0} \end{scope}
\begin{scope}[shift={(3,1.5)},rotate=45]\aGate{0}{0} \end{scope}
\begin{scope}[shift={(6,1.5)},rotate=45]\cGate{0}{0} \end{scope}
\begin{scope}[shift={(9,1.5)},rotate=45]\gGate{0}{0} \end{scope}
\Text[x=1.25,y=1.5, anchor = center]{$=1$}
\Text[x={3+1.25},y=1.5, anchor = center]{$=a$}
\Text[x={6+1.25},y=1.5, anchor = center]{$=c$}
\Text[x={1.25+9},y=1.5, anchor = center]{$=g$}
\end{tikzpicture} \, .
\label{eq:tiles}
\end{align}
We start by considering $\mcircf$ on a site and follow the evolution of a state on a single wire until the line closes on itself, which we call an \emph{orbit}. For example
\be
\begin{tikzpicture}[baseline=(current  bounding  box.center), scale=0.7]
\def\eps{0};
\def\shift{11}
\def\shifty{-2.5}
\foreach \i in {1,...,4}{
\draw[thin, opacity =0.2,dashed] (2*\i+2-1.5+0.25,-2.5-0.1) -- (2*\i+2-1.5+0.25,1.5-0.1);
\draw[thin, opacity =0.2,dashed] (2*\i+2-0.5-0.25,-2.5-0.1) -- (2*\i+2-0.5-0.25,1.5-0.1);
}
\foreach \i in{3,...,3}{
\draw[thin] (\shift-1.5,2*\i-0.5-3.5+\shifty) arc (45:-90:0.15);
\draw[thin] (\shift-10+0.5+0,2*\i-0.5-3.5+\shifty) arc (-135:-270:0.15);
}
\foreach \i in{2,...,3}{
\draw[thin] (\shift-1.5,1+2*\i-0.5-3.5+\shifty) arc (-45:90:0.15);
\draw[thin] (\shift-10+0.5,1+2*\i-0.5-3.5+\shifty) arc (135:270:0.15);
}
\foreach \i in {1,...,4}
{
\draw[thin] (\shift-.5-2*\i,1+\shifty) -- (\shift+0.5-2*\i,0+\shifty);
\draw[thin] (\shift-0.5-2*\i,0+\shifty) -- (\shift+0.5-2*\i,1+\shifty);
\draw[ thick, fill=myY, rounded corners=2pt] (\shift-0.25-2*\i,0.25+\shifty) rectangle (\shift+.25-2*\i,0.75+\shifty);
\draw[thick] (\shift-2*\i,0.65+\shifty) -- (\shift+.15-2*\i,.65+\shifty) -- (\shift+.15-2*\i,0.5+\shifty);

\draw[thin] (2*\i+2-1.5,1.5) arc (135:-0:0.15);
\draw[thin] (2*\i+2-2.5,1.5) arc (-325:-180:0.15);
\draw[thin] (2*\i+2-1.5,-2.5) arc (-45:180:-0.15);
\draw[thin] (2*\i+2-0.5,-2.5) arc (45:-180:0.15);
}
\foreach \jj[evaluate=\jj as \j using -2*(ceil(\jj/2)-\jj/2)] in {0,...,2}
\foreach \i in {2,...,5}
{
\draw[thin] (\shift+.5-2*\i-1*\j,2+1*\jj+\shifty) -- (\shift+1-2*\i-1*\j,1.5+\jj+\shifty);
\draw[thin] (\shift+1-2*\i-1*\j,1.5+1*\jj+\shifty) -- (\shift+1.5-2*\i-1*\j,2+\jj+\shifty);
}
\foreach \i in {1,...,4}
{
\draw[thin] (\shift-.5-2*\i,1+\shifty) -- (\shift+0.5-2*\i,0+\shifty);
\draw[thin] (\shift-0.5-2*\i,0+\shifty) -- (\shift+0.5-2*\i,1+\shifty);
\draw[thick, fill=myY, rounded corners=2pt] (\shift-0.25-2*\i,0.25+\shifty) rectangle (\shift+.25-2*\i,0.75+\shifty);
\draw[thick] (\shift-2*\i,0.65+\shifty) -- (\shift+.15-2*\i,.65+\shifty) -- (\shift+.15-2*\i,0.5+\shifty);
}
\foreach \jj[evaluate=\jj as \j using -2*(ceil(\jj/2)-\jj/2)] in {0,...,2}
\foreach \i in {2,...,5}
{
\draw[thin] (\shift+.5-2*\i-1*\j,1+1*\jj+\shifty) -- (\shift+1-2*\i-1*\j,1.5+\jj+\shifty);
\draw[thin] (\shift+1-2*\i-1*\j,1.5+1*\jj+\shifty) -- (\shift+1.5-2*\i-1*\j,1+\jj+\shifty);
\draw[ thick, fill=myY, rounded corners=2pt] (\shift+0.75-2*\i-1*\j,1.75+\jj+\shifty) rectangle (\shift+1.25-2*\i-1*\j,1.25+\jj+\shifty);
\draw[thick] (\shift+1-2*\i-1*\j,1.65+1*\jj+\shifty) -- (\shift+1.15-2*\i-1*\j,1.65+1*\jj+\shifty) -- (\shift+1.15-2*\i-1*\j,1.5+1*\jj+\shifty);
}
\def\colr{myblue}
\draw[thick,\colr] (2.5,-2.5) -- (6.5 ,1.5);
\draw[thick,\colr] (6.5,-2.5) -- (9.5 ,0.5);
\draw[thick,\colr] (1.5,0.5) -- (2.5 ,1.5);
\draw[thick,\colr] (2.76,-2.5 -.11) -- (2.76 ,1.5-0.1);
\draw[thick,\colr] (6.76,-2.5 -.11) -- (6.76 ,1.5-0.1);
\draw[thick,\colr] (1.5+0.06,0.25-0.01) -- (9.5-.05 ,0.75+0.01);
\draw[thick,\colr] (2.5,-2.5 -0.0) arc (-45:180:-0.15);
\draw[thick,\colr] (2.5,1.5 -0.0) arc (135:-0:0.15);
\draw[thick,\colr] (6.5,-2.5 -0.0) arc (-45:180:-0.15);
\draw[thick,\colr] (6.5,1.5 -0.0) arc (135:-0:0.15);
\draw[thick,\colr] (1.5,0.5 -0.0) arc (135:270:0.15);
\draw[thick, \colr] (9.5,0.5 -0.0) arc (-45:90:0.15);
\end{tikzpicture}\, ,
\label{eq:SFForbit}
\ee
where we used periodic boundary conditions in space and time (from trace). The state on a given wire travels straight along the light-cone. When it reaches the top, the trace acts as a periodic boundary condition in time. Therefore it continues to travel until it reaches the starting point.  When it reaches the starting point, it has traveled through $2\ell$ gates, with $\ell= {\rm lcm}(t,L)$ (${\rm lcm}$ stands for least common multiplier). 
There are $o=t L/\ell $ of distinct left(right)-moving orbits. Each orbit can either be occupied by the operator $\mcirc$ or $\mcircf$, which has weights $1$ or $c^{2\ell}$ ($a^{2\ell}$ for right moving orbits), respectively.

Consider first the case $g=0$. Here, we can have only left or right moving orbits filled with $\mcircf$, or else they cross and contribute $0$.
Therefore
\begin{align}
K_{g}(t) &= 1+ \sum_{n=1}^o \binom{o}{n} (a^{2n \ell} + c^{2n \ell}) = (1+a^{2 \ell})^o +(1+c^{2 \ell})^o-1\, ,
\end{align}
where $n$ counts the number of orbits occupied by $\mcircf$.

In case $g\neq0$, left and right moving orbits are allowed to cross, and we need to count the number of crossings and the combinatorial factors. The orbits cross $(\frac{n}{o} \,  \frac{m}{o}) 2 t L$ times, where $n$ and $m$ are the number of right and left moving orbits filled with $\mcircf$, respectively. Putting it all together, we find
\be
 K_{g}(t) =\!\!\!\!\! \sum_{n,m = 0}^{t L/\ell } \binom{t L/\ell}{n}\binom{t L/\ell }{m} a^{2 n \ell}  c^{2 m \ell} \left[\frac{g}{a c}\right]^{{2 \ell^2  n m}/{t L}} \, .
\label{eq:SWAP_like}
\ee

\section{Numerical Methods and Parameters of the Gates}
\label{app:numerics}

Direct numerical evaluations have been performed by exactly computing $K(t)$. We achieved that by exactly contracting the diagram (\ref{eq:SFFredtr}), using some basic functionalities of ITensor Library~\cite{ITensor}. 
The parameters of the gates used in numerical experiments are given in Table~\ref{tab:U1gates}.

\begin{table*}[h]
\centering
\begin {tabular} {c | ccccccc cc}
 & $a$ & $b$ & $c$ & $d$ & $e$ & $f$ & $g$ &$\varepsilon_1$ & $\varepsilon_2$\\
\hline
 \text {Gate red/yellow} & 0.761132 & 0 & 0.761132 & 0 & - 
     0.025732 & - 0.025732 & 0.701678 & 0.05 & 0.05 \\
\text {Gate green/blue} & .5 & 0 & 0.7 & 0 & \
0 & 0 & 0.25 &0.3 & 0.4 \\
\text {Gate splits case 1} & 0.040935   & 0.275939 & 0.040935 & 0.275939 & 0.275939 & 0.275939 & 0.037326 & 0.005513 & 0.005513 \\
\text {Gate splits case 2} & 0.262258 &   0.0745520 & 0.02958127 & -0.0277864 & 0.0448917 & -0.299859 & 0.361466 & 0.370204 & 0.146059
  \end {tabular}
  \caption{Parameters of the reduced gates, which were used in the figures. 
  }
  \label{tab:U1gates}
  \end{table*}

%

\end{document}